\documentclass[a4paper,10pt]{article}
\usepackage{amsmath,color}
\usepackage{graphicx}
\usepackage{amsmath}
\usepackage{amsthm}
\usepackage{amssymb}
\usepackage[font=small,labelfont=bf]{caption}
\setlength{\topmargin}{-1cm}
\setlength{\textheight}{23.3cm}
\setlength{\oddsidemargin}{0cm}
\setlength{\textwidth}{16cm}
\setlength{\parskip}{2mm}
\setlength{\parindent}{0mm}

\newtheorem{theorem}{Theorem}
\newtheorem{hypo}[theorem]{Hypothesis}
\newtheorem{prop}[theorem]{Proposition}
\newtheorem{lemma}[theorem]{Lemma}
\newtheorem{fact}[theorem]{Fact}

\newtheorem{corollary}[theorem]{Corollary}
\newtheorem{remark}[theorem]{Remark}

\def\ind{{\bf 1}\hskip-2.5pt{\rm l}}

\begin{document}

\title{Consistency of the maximum likelihood estimate for
Non-homogeneous Markov-switching models}

\author{Pierre Ailliot\\
Fran\c coise P\`ene\\
{\small \em Laboratoire de Math\'ematiques, UMR 6205, Universit\'e de Brest, France}\\}

\date{\today}

\maketitle

\begin{abstract}

Many nonlinear time series models have been proposed in the last decades. Among them, the models with regime switchings provide a class of versatile and interpretable models which have received a particular attention in the literature. In this paper, we consider a large family of such models which generalize the well known Markov-switching AutoRegressive (MS-AR) by allowing non-homogeneous switching and encompass Threshold AutoRegressive (TAR) models and prove the consistency of the maximum likelihood estimator under general conditions. We show that these conditions apply to specific but representative models with non-homogeneous Markov switchings. The famous MacKenzie River lynx dataset is used to illustrate one of these models.

\vspace{.5cm}

{\bf Keywords:}  Markov-switching autoregressive process, non-homogeneous hidden Markov process, maximum likelihood, consistency, stability, lynx data
\end{abstract}

\section*{Introduction}

Recent decades have seen extensive interest in time series models with regime switchings. One of the most influential paper in this field is the one by Hamilton in 1989 (see \cite{Ham89}) where Markov-Switching AutoRegressive (MS-AR) models were introduced. It became one of the most popular nonlinear time series model. MS-AR models combine several autoregressive models to describe the evolution of the observed process $\{Y_k\}$ at different periods of time, the transition between these autoregressive models being controlled by a hidden Markov chain $\{X_k\}$. In most applications, it is assumed that $\{X_k\}$ is an homogeneous Markov chain. In this work, we relax this assumption and let the evolution of $\{X_k\}$ depend on lagged values of $\{Y_k\}$ and exogenous covariates. 

More formally, we assume that $X_k$ takes its values in a compact metric space 
$E$ endowed with a finite Borel measure $\mathfrak m_E$ and that  
$Y_k$ takes its values in a complete separable metric space 
$K$ endowed with a non-negative Borel $\sigma$-finite measure $\mathfrak m_K$ and we set $\mu_0:=\mathfrak m_E\times\mathfrak  m_K$. It will be useful to denote $Y_k^{k+\ell}:=(Y_k,...,Y_{k+\ell})$, $y_{k}^{k+\ell}:=(y_k,...,y_{k+\ell})$ 
(and to use analogous notations $X_k^{k+\ell}$, $x_k^{k+\ell}$) for integer $k$ and $\ell\ge 0$. 
The Non-Homogeneous Markov-Switching AutoRegressive (NHMS-AR) model of order $s>0$ considered in this work is characterized by Hypothesis \ref{hyp1} below.

\begin{hypo}\label{hyp1}
The sequence $\{X_k,Y_k\}_k$ is a Markov process of order $s$ with values in $E\times K$ such that, for some parameter $\theta$ belonging to
some subset $\Theta$ of $\mathbb R^d$,
\begin{itemize}
\item  the conditional distribution of $X_k$ (wrt $\mathfrak m_E$) 
given the values of $\{X_{k'}=x_{k'}\}_{k'<k}$ and $\{Y_{k'}=y_{k'}\}_{k' < k}$ only depends on $x_{k-1}$ and $y_{k-s}^{k-1}$ and 
this conditional distribution has a probability density function (pdf) denoted
$$p_{1,\theta}(x_k|x_{k-1},y_{k-s}^{k-1})$$
with respect to $\mathfrak m_E$.
\item the conditional distribution of $Y_k$ given the values of $\{Y_{k'}=y_{k'}\}_{k'<k}$ and 
$\{X_{k'}=x_{k'}\}_{k' \leq k}$ only depends on $x_{k}$ and $y_{k-s}^{k-1}$ and 
this conditional distribution has a pdf
$$p_{2,\theta}\left(y_k|x_k,y_{k-s}^{k-1}\right)$$
with respect to $\mathfrak m_K$.
\end{itemize}
\end{hypo}

Let us write $q_\theta(\cdot|x_{k-1},y_{k-s}^{k-1})$ for the conditional pdf (with respect to $\mu_0$)
of $(X_k,Y_k)$ given $(X_{k-1}=x_{k-1},Y_{k-s}^{k-1}=y_{k-s}^{k-1})$. 
Hypothesis \ref{hyp1} implies that 
\begin{equation}
\nonumber
q_\theta(x,y|x_{k-1},y_{k-s}^{k-1})=p_{1,\theta}(x|x_{k-1},y_{k-s}^{k-1})p_{2,\theta}(y|x,y_{k-s}^{k-1}) .
\end{equation}

The various conditional independence assumptions of Hypothesis \ref{hyp1} are summarized by the directed acyclic graph (DAG) below when $s=1$.
	\begin{equation}
				\nonumber
				\begin{array}[t]{lccccccccc}
 	\text{\textbf{Hidden Regime}}\ \ 		& \cdots & \rightarrow & X_{k-1} &  \rightarrow & 	X_{k} & \rightarrow & X_{k+1} & \rightarrow & \cdots \\
 				& & & \downarrow & \nearrow& \downarrow &\nearrow & \downarrow  & & \\
\text{\textbf{Observed time series}}\ \  & \cdots & \rightarrow & Y_{k-1} & \rightarrow & Y_{k} & \rightarrow & Y_{k+1} & \rightarrow & \cdots \\
				\end{array}
				\end{equation}

This defines a general family of models which encompasses the most usual models with regime switchings.
\begin{itemize}
	\item When $p_{1,\theta}(x_k|x_{k-1},y_{k-s}^{k-1})$ does not dependent on $y_{k-s}^{k-1}$, the evolution of the hidden Markov chain $\{X_k\}$ is homogeneous and independent of the observed process and we retrieve the usual MS-AR models. If we further assume that $p_{2,\theta}\left(y_k|x_k,y_{k-s}^{k-1}\right)$ does not depend of $y_{k-s}^{k-1}$, we obtain the Hidden Markov Models (HMMs).  		
\item  When $p_{1,\theta}(x_k|x_{k-1},y_{k-s}^{k-1})$ does not dependent on $x_{k-1}$ and is parametrized using  indicator functions, we obtain the Threshold AutoRegressive (TAR) models which is an other important family of models with regime switching in the literature (see e.g. \cite{Tong90}). 
\end{itemize}
HMMs, MS-AR and TAR models have been used in many fields of applications and their theoretical properties have been extensively studied (see e.g. \cite{Tong90}, \cite{Fan03} and  \cite{Cappe05}). 

Models with non-homogeneous Markov switchings have also been considered in the literature. In particular, they have been used to describe breaks  associated with events such as financial crises  or abrupt changes in government policy  in econometric time series (see \cite{Kim08} and references therein). They are also popular for  meteorological applications (see e.g. \cite{Hughes99}, \cite{Bellone00b}, \cite{Vrac07}, \cite{Ail12}) with the regimes describing the so-called "weather types". The most usual method procedure to fit such models consists in computing the Maximum Likelihood Estimates (MLE). It is indeed relatively straightforward to adapt the standard numerical estimation which are available for the homogeneous models, such as the forward-backward recursions or the EM algorithm, to the non-homogeneous models (see e.g. \cite{Dieb94}, \cite{Kim08}, \cite{Hughes99}). However, we could not find any theoretical results on the asymptotic properties of the MLE for these models and this paper aims at filling this gap.


The paper is organized as follows. In Section \ref{sec:theory}, we give general conditions ensuring the consistency of the MLE. They include conditions on the ergodicity of the model and the identifiability of the parameters. In Sections \ref{sec:appli} and \ref{sec:real}, we show that these general conditions apply to various specific but representative NHMS-AR models. Some results are proven in the appendices.
\section{A general consistency result of MLE for NHMS-AR models}
\label{sec:theory}
We aim at estimating the true parameter $\theta^*\in\Theta$ 
of a NHMS-AR process $(X_k,Y_k)_k$ for which only the component $\{Y_k\}$ is observed. 
For that we consider the Maximum Likelihood Estimator (MLE) 
$\hat \theta_{n,x_0}$ which is defined as the maximizer of
 $\theta\mapsto \ell_n(\theta,x_0)$ for a fixed $x_0\in E$ with
\begin{equation}
\nonumber
\ell_n(\theta,x_0)=\log  p_\theta(Y_1^n|X_0=x_0,Y_{-s+1}^0)
    =\sum_{k=1}^n\log\frac{p_\theta(Y_1^k|X_0=x_0,Y_{-s+1}^0)}
            {p_\theta(Y_1^{k-1}|X_0=x_0,Y_{-s+1}^0)},
\end{equation}
where $p_\theta(Y_1^k|X_0=x_0,Y_{-s+1}^0)$ is the conditional pdf of $Y_1^k$
given $(X_0=x_0,Y_{-s+1}^0)$ evaluated at $Y_1^k$, i.e.
$$p_\theta(Y_1^{k}|X_0=x_0,Y_{-s+1}^0):= 
     \int_{E^k}\prod_{\ell=1}^k q_\theta(x_\ell,Y_\ell|x_{\ell-1},
       Y_{\ell-s}^{\ell-1})\, d\mathfrak m_E^{\otimes k}(x_1^k).$$
Observe that $\ell_n(\theta,x_0)$ is a random variable depending on $Y_{-s+1}^0$ (which is observed).
\\
Before stating our main result, we introduce quickly some notations (see beginning of Appendix \ref{append2} for further details).
Let $Q_\theta$ be the transition operator of the $s$-order Markov process $(X_k,Y_k)_k$, $Q_\theta$ being seen as an operator acting on 
the set of complex-valued bounded measurable functions on $E\times K^s$ (or on some other complex Banach space)
and let $Q_\theta^*$ be its adjoint operator. We set $\mu:=\mathfrak m_E\times\mathfrak  m_K^{\otimes s}$.
We identify $(X_k,Y_k)_{k}$ with the canonical Markov chain.  We suppose that, for every $\theta\in\Theta$, there exists a
stationary probability $\bar\nu_\theta$ for the Markov chain with transition operator $Q_\theta$ (i.e. $\bar\nu_\theta$ is an invariant probability measure  for $Q_\theta^*$) with pdf $h_\theta$ with respect to $\mu$.
We then write $\bar {\mathbb P}_{\theta}$ for the probability measure corresponding to this invariant probability.
For every $P\in E\times K^s$ and any integer $k\ge s$, we write $Q^{*k}(\cdot|P)$ for the pdf of $(X_k,Y_k^{k+s-1})$ with
respect to $\mu$ given $(X_0,Y_0^{s-1})=P$.

The question of consistency of the MLE has been studied by many authors
in the context of usual HMMs (see e.g. \cite{Leroux92,LeGland-Mevel00,Douc-Matias01}) and MS-AR models (see \cite{DMR} and references therein). The aim of this section is to state consistency results of MLE for general NHMS-AR. 
The proof of the following theorem is a careful adaptation of the proof of \cite[Thm. 1 \& 5]
{DMR}. This proof is given in appendix \ref{append2}.

\begin{theorem}
\label{consistancegene}
Assume that $\Theta$ is compact, that is $(X_k,Y_k)_{k}$ ergodic,
that there exists an invariant probability measure for every $\theta\in\Theta$,
that  $\bar {\mathbb P}_{\theta^*}$ is absolutely continuous with respect to $\bar {\mathbb P}_\theta$ for every 
$\theta\in\Theta$, that $p_1$ and $p_2$ are continuous in $\theta$.
Assume also that the following conditions hold true
\begin{equation}\label{HH1}
0<p_{1,-}:=\inf_{\theta,x_1,x_{0},y_0} p_{1,\theta}(x_1|x_0,y_0)\le p_{1,+}
:=\sup_{\theta,x_1,x_{0},y_0} p_{1,\theta}(x_1|x_0,y_0)<\infty,
\end{equation}
\begin{equation}\label{HH3}
B_-:=\bar{\mathbb E}_{\theta^*} \left[\left|\log\left(\inf_\theta\int_E p_{2,\theta}
        (Y_0|x_0,Y_{-s}^{-1})\, d\mathfrak m_E(x_0)\right)\right|\right]<\infty,
\end{equation}
\begin{equation}\label{HH3b}
B_+:=\bar{\mathbb E}_{\theta^*} \left[\left|\log\left(\sup_\theta\int_E p_{2,\theta}
        (Y_0|x_0,Y_{-s}^{-1})\, d\mathfrak m_E(x_0)\right)\right|\right]<\infty,
\end{equation}
\begin{equation}\label{HH3c}
\forall\theta\in\Theta,\ \ 
   \sup_{y_{-s}^{-1}}\int_Ep_{2,\theta}(Y_0|x,y_{-s}^{-1})\, d\mathfrak m_E(x)<\infty,\ \ 
     \bar{\mathbb  P}_{\theta^*}-a.s,
\end{equation}
\begin{equation}\label{HH4}
\forall\theta\in\Theta,\ \ for\ \mu-a.e.\ P\in E\times K^s,  
 \lim_{k\rightarrow+\infty}|| Q_\theta^{*k}(\cdot|P)-h_\theta||_{L^1(\mu)}=0.
\end{equation}
Then, for every $x_0\in E$, the limit values of $(\hat\theta_{n,x_0})_n$ 
are $\bar{\mathbb P}_{\theta^*}$-almost surely
contained in $\{\theta\in\Theta\ :\ \bar{\mathbb P}^{Y}_\theta=\bar{\mathbb P}^Y_{\theta^*}\}$. 

If, moreover, $Q_{\theta^*}$ is positive Harris recurrent and aperiodic, then,
for every $x_0\in E$ and every initial probability $\nu$, the limit values of $(\hat\theta_{n,x_0})_n$ 
are almost surely
contained in $\{\theta\in\Theta\ :\ \bar{\mathbb P}^Y_\theta=\bar{\mathbb P}^Y_{\theta^*}\}$.
\end{theorem}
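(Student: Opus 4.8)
The plan is to follow the classical contrast-function approach to maximum likelihood consistency, adapting the arguments of \cite{DMR} to the non-homogeneous switching. Writing the normalized log-likelihood as a sum of predictive log-densities,
\[
\frac1n\ell_n(\theta,x_0)=\frac1n\sum_{k=1}^n\log p_\theta\bigl(Y_k\mid Y_{1}^{k-1},X_0=x_0,Y_{-s+1}^0\bigr),
\]
the aim is to prove that this random function of $\theta$ converges $\bar{\mathbb P}_{\theta^*}$-almost surely and \emph{uniformly} on the compact set $\Theta$ to a deterministic limit $\ell(\theta)$, and then to show that $\ell$ is maximized exactly on $\{\theta:\bar{\mathbb P}^Y_\theta=\bar{\mathbb P}^Y_{\theta^*}\}$. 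Since $\hat\theta_{n,x_0}$ maximizes $\theta\mapsto\ell_n(\theta,x_0)$, uniform convergence of the contrast forces every limit point of $(\hat\theta_{n,x_0})_n$ to lie in the argmax of $\ell$, which is the announced set.

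The key technical step, which I expect to be the main obstacle, is to show that each predictive log-density forgets both the fixed initial state $x_0$ and the remote past of the observations. The two-sided bounds \eqref{HH1} give a uniform Doeblin minorization of the hidden transition, hence a contraction of the prediction filter in total variation at a geometric rate governed by a coefficient of the form $1-p_{1,-}/p_{1,+}<1$; consequently, replacing the conditioning on $(X_0=x_0,Y_{-s+1}^0)$ by conditioning on an ever longer observed past perturbs the log-increment by a geometrically small, summable amount. On a stationary two-sided extension of the process this lets one define the limiting increments $\log p_\theta(Y_0\mid Y_{-\infty}^{-1})$ and show the finite-memory quantities in $\ell_n$ approximate them. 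The integrability of these increments, uniformly in $\theta$, comes from integrating out the hidden state: since the filtering law is a probability measure, \eqref{HH1} yields the pointwise domination
\[
\log\Bigl(p_{1,-}\inf_\theta\!\int_E p_{2,\theta}(Y_0\mid x,Y_{-s}^{-1})\,d\mathfrak m_E(x)\Bigr)\le\log p_\theta(Y_0\mid\cdot)\le\log\Bigl(p_{1,+}\sup_\theta\!\int_E p_{2,\theta}(Y_0\mid x,Y_{-s}^{-1})\,d\mathfrak m_E(x)\Bigr),
\]
so that $B_-,B_+<\infty$ from \eqref{HH3}--\eqref{HH3b}, together with \eqref{HH3c}, furnish a $\bar{\mathbb P}_{\theta^*}$-integrable dominating function valid for all $\theta$ simultaneously.

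Granting these ingredients, the argument proceeds along standard lines. Forgetting plus domination reduces $\ell_n(\theta,x_0)$ to its stationary version up to a negligible error, and the ergodic theorem (the chain being ergodic by hypothesis) yields the pointwise limit $\ell(\theta)=\bar{\mathbb E}_{\theta^*}[\log p_\theta(Y_0\mid Y_{-\infty}^{-1})]$. Uniformity in $\theta$ follows from the continuity of $p_1,p_2$ in $\theta$ and the uniform domination by a standard covering argument on $\Theta$. Maximizer identification is the Kullback--Leibler argument: $\ell(\theta)-\ell(\theta^*)$ is the negative relative entropy rate between the observation laws $\bar{\mathbb P}^Y_\theta$ and $\bar{\mathbb P}^Y_{\theta^*}$, hence nonpositive and equal to zero precisely when $\bar{\mathbb P}^Y_\theta=\bar{\mathbb P}^Y_{\theta^*}$; the absolute continuity of $\bar{\mathbb P}_{\theta^*}$ with respect to $\bar{\mathbb P}_\theta$ makes this rate well defined. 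This establishes the first assertion for every $x_0$, under the stationary data law.

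For the second assertion the data-generating law is replaced by an arbitrary $\mathbb P_\nu$, the fixed state $x_0$ entering $\ell_n$ being already absorbed by the forgetting property, which is why the first part holds for every $x_0$. The event that the limit points of $(\hat\theta_{n,x_0})_n$ fall in the target set is insensitive to finitely many coordinates, hence measurable with respect to the asymptotic $\sigma$-field of the trajectory. Positive Harris recurrence and aperiodicity of $Q_{\theta^*}$, combined with the convergence \eqref{HH4} of the $k$-step transition to $h_\theta$ in $L^1(\mu)$, guarantee that the restriction of the law to this $\sigma$-field is independent of the initialization --- equivalently, that a trajectory started from $\nu$ and one started from stationarity can be coupled to coincide eventually --- so the $\bar{\mathbb P}_{\theta^*}$-almost-sure localization of the limit points transfers to $\mathbb P_\nu$-almost-sure localization for every initial probability $\nu$.
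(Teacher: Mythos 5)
Your plan follows the same route as the paper's proof (a careful adaptation of \cite{DMR}): geometric forgetting of the prediction filter with rate $\rho=1-p_{1,-}/p_{1,+}$ deduced from the Doeblin bound \eqref{HH1}, two-sided domination of the log-increments by $\log(p_{1,-}b_-(Y_{k-s}^k))$ and $\log(p_{1,+}b_+(Y_{k-s}^k))$ giving uniform integrability via \eqref{HH3}--\eqref{HH3b}, the Birkhoff ergodic theorem on the two-sided stationary extension, and continuity plus compactness for uniform convergence of the contrast; all of these steps are sound and coincide with Lemmas \ref{lemme3}, \ref{lemme4} and \ref{prop2}. The genuine gap is in the identification step. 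You write that $\ell(\theta)-\ell(\theta^*)$ is a negative relative entropy rate, ``hence nonpositive and equal to zero precisely when $\bar{\mathbb P}^Y_\theta=\bar{\mathbb P}^Y_{\theta^*}$''. Nonpositivity is indeed soft, but the ``precisely when'' is not a general fact: for stationary ergodic processes a vanishing relative entropy \emph{rate} does not by itself force equality of the laws, since the divergence between $n$-dimensional marginals may grow sublinearly. Converting $\ell(\theta)=\ell(\theta^*)$ into $\bar{\mathbb P}^Y_\theta=\bar{\mathbb P}^Y_{\theta^*}$ is the hardest part of the whole argument; in the paper it occupies Lemma \ref{lem5} and Lemma \ref{lem7}, and it is exactly where hypothesis \eqref{HH4} is consumed: the $L^1(\mu)$ convergence $Q_\theta^{*k}(\cdot|P)\to h_\theta$ yields the forgetting of the conditional laws of observation blocks, $\bar p_\theta(Y_k^\ell|Y_{i-s+1}^j)\to\bar p_\theta(Y_k^\ell)$, which is what allows the Kullback--Leibler argument to pin down all finite-dimensional distributions of $\{Y_k\}$. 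Your proposal never uses \eqref{HH4} in the stationary part --- you invoke it only when changing the initial law --- so, as written, the first and main assertion is not proved. Note the paper's own comment that \eqref{HH4} is included precisely to \emph{replace} Harris recurrence in the stationary case.

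On the second assertion your bookkeeping is also slightly off, though the spirit is right: the paper (following \cite[Thm.~5]{DMR}) uses aperiodicity and positive Harris recurrence of $Q_{\theta^*}$ to obtain $\lim_{n}\ell(\hat\theta_{n,x_0})=\ell(\theta^*)$ almost surely under every initial law $\nu$. What is needed there is an almost-sure statement about trajectories started from $\nu$, and Harris recurrence is what upgrades ``for almost every starting point'' to ``for every initial law''; the distributional convergence expressed by \eqref{HH4} (even its total-variation reformulation from arbitrary initial measures) does not by itself yield almost-sure trajectory statements, so your appeal to \eqref{HH4} at that point is neither needed nor sufficient. Also, your claim that the set of limit points of $(\hat\theta_{n,x_0})_n$ is an asymptotic event is not automatic: it again requires the uniform forgetting bound, so it should be stated as a consequence of \eqref{HH1} rather than as a generality.
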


Our hypotheses are close to
those of \cite{DMR}. Let us point out the main differences.
First, in \cite{DMR} $p_{1,\theta}(x|x',y')$ does not depend on $y'$.
Second, (\ref{HH3b}) and (\ref{HH3c}) are slightly weaker 
than $$\sup_{\theta, y_{-s}^{-1},y_0,x} p_{2,\theta}(y_0|x,y_{-s}^{-1})<\infty$$ assumed in \cite{DMR}.  This is illustrated below in Section \ref{sec:real} where the parametrization of $p_2$ uses Gamma pdf which may not be bounded close 
to the origin depending on the values of the parameters. The results given in 
\cite{DMR} do not apply directly to this model whereas we will show that (\ref{HH3b}) applies (see also \cite{Ail06}).
Third, to prove the result in the stationary case, we replace Harris recurrence by (\ref{HH4}) which is equivalent
to each one of the two following properties 
\begin{itemize}
\item for any initial measure $\nu$ on $E\times K^s$, we have
$
\lim_{n\rightarrow +\infty}\left\Vert  Q_\theta^{*n}\nu-\nu_\theta\right\Vert_{TV}=0$,
where $\Vert\cdot\Vert_{TV}$ stands for the total variation norm,
\item for any initial measure $\nu$ on $E\times K^s$, we have
$\lim_{n\rightarrow +\infty}\sup_{\nu\in\mathcal P(E\times K)}||[Q_\theta^{*n}\nu]-h_\theta||_{L^1(\mathfrak m_E\times \mathfrak m_K^s)}=0$,
with $\mathcal P(E\times K)$ the set of probability measures
on $E\times K$.
\end{itemize}
\begin{remark}
Observe that, if $q_\theta>0$ and if $\nu_\theta$ exists for every
$\theta\in\Theta$, then the pdf $h_\theta$ of $\nu_\theta$ satisfies
$h_\theta>0$ ($\mu$-a.e.). In this case, $\bar {\mathbb P}_{\theta^*}$ is absolutely continuous with respect to $\bar {\mathbb P}_\theta$ for every 
$\theta\in\Theta$.

Observe also that the ergodicity of the dynamical system 
$(\Omega,\mathcal F,\bar{\mathbb P}_{\theta^*},\tau)$ is satisfied as soon as
the transition operator is strongly ergodic with respect some Banach space
$\mathcal B$ satisfying general assumptions (see for example \cite[Proposition 2.2]{Loicsoaz1}).
\end{remark}
\section{NHMS-AR model with linear autoregressive models}\label{sec:appli}

\subsection{A NHMS-AR model for MacKenzie River lynx data}
\label{sec:lynx}

In this section we introduce a particular NHMS-AR model and discuss the results obtained when fitting this model to the the time series of annual number of Canadian lynx trapped in the Mackenzie River district of northwest Canada from 1821 to 1934. This time series is a benchmark dataset to test nonlinear time series model (see e.g. \cite{Tong90}, \cite{Fan03}). In order to facilitate the comparison with the other works on this time series, we analyze the data at the logarithm scale with the base 10 shown on Figure \ref{fig:datalynx}. This time series exhibits periodic fluctuations (it may be due to the competition between several species, predator-prey interaction,...) with asymmetric cycles (increasing phase are slower than decreasing phase) which makes it challenging to model.

\begin{figure}[!ht]
\centering
\makebox{\includegraphics[scale=.75]{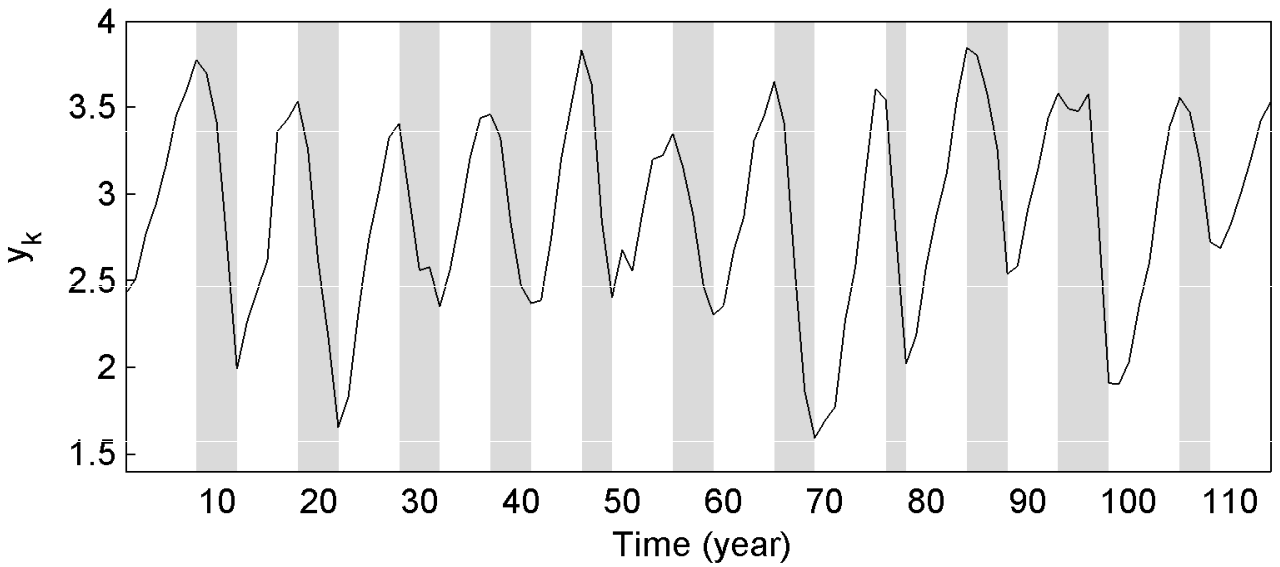} \includegraphics[scale=.75]{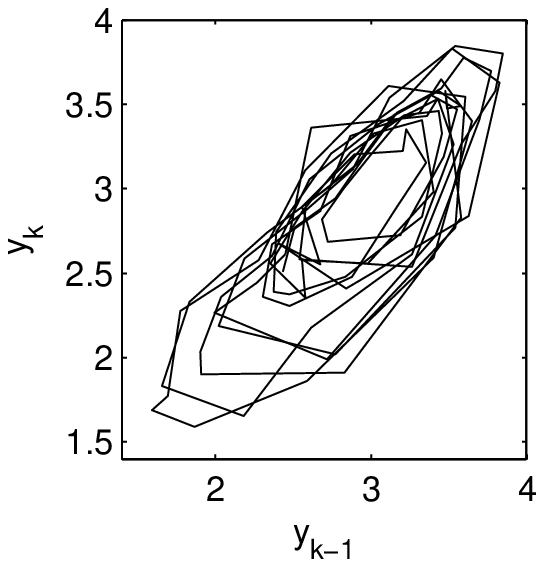} }\\
\makebox{\includegraphics[scale=.75]{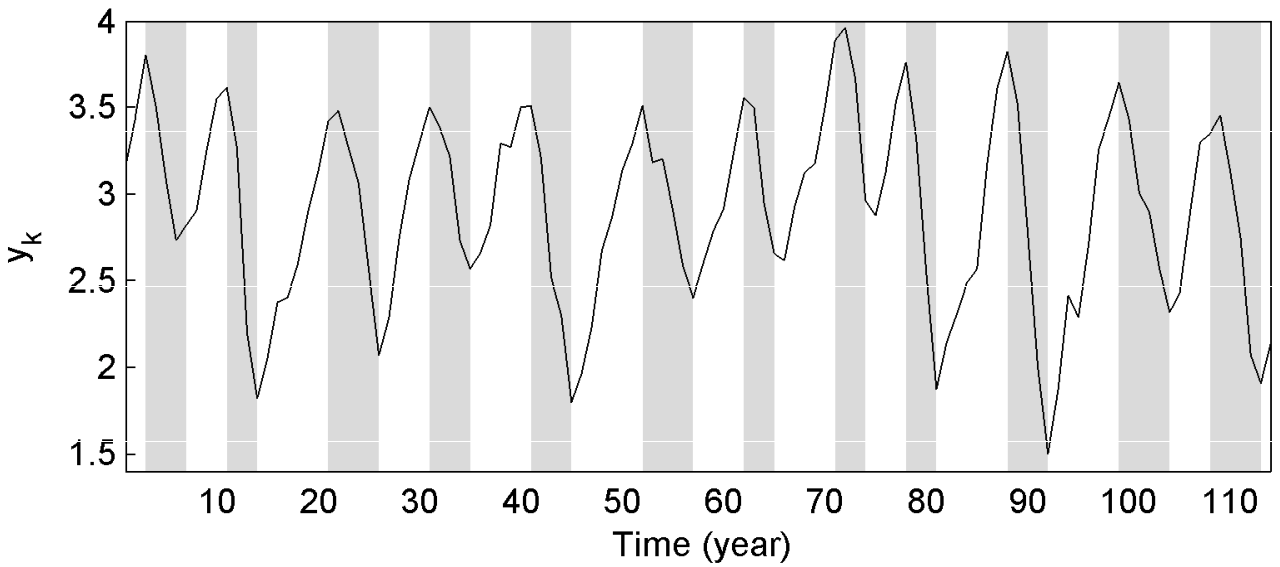} \includegraphics[scale=.75]{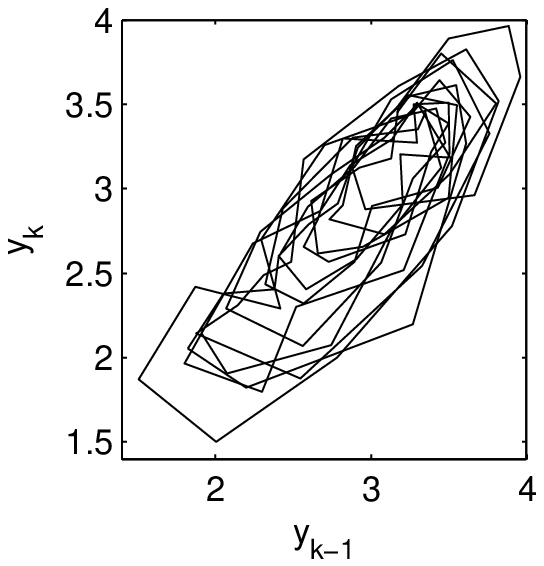}}
\caption{\label{fig:datalynx} {Top left panel: time plot of log Canadian lynx data. The color indicates the most likely regimes identified by the fitted NHMS-AR model. The first [resp. second] regime is the most likely when the color is white [resp. gray]. Top right panel: directed scatter plot of log Canadian lynx data. Bottom left panel: time plot of a sequence simulated with the fitted NHMS-AR model data. The color indicates the simulated regime (first regime in white, second regime in gray). Bottom right panel: directed scatter plot of the simulated sequence shown on the bottom left panel.}}
\end{figure}

In \cite{Tong90}, it was proposed to fit a SETAR(2) model to this time series. The fitted model is the following
\begin{equation}
\label{eq:TARlynx}
Y_k=\left\{\begin{array}{ll}
0.51 + 1.23 Y_{k-1}  - 0.37 Y_{k-2} +0.18 \epsilon_k & (Y_{k-2} \leq 3.15) \\
2.32 + 1.53 Y_{k-1}  - 1.27 Y_{k-2} +0.23\epsilon_k & (Y_{k-2} > 3.15) 
\end{array}\right. .
\end{equation}
The two regimes have a nice biological interpretation in terms of prey-predator interaction, with the upper regime ($Y_{t-2} > 3.15$) corresponding to a population decrease whereas the population tends to increase in the lower regime. 

The NHMS-AR model defined below has been fitted to this time series.
\begin{hypo}\label{hypAR}
We assume that $E=\{1,2\}$ (endowed with the counting measure), 
$K=\mathbb R$ (endowed with the Lebesgue measure)  and $\{Y_k\}$ satisfies
\begin{equation}
\label{eq:ARgauss}
\nonumber
Y_k=\beta_0^{(x_k)}+\sum_{\ell=1}^s \beta_{\ell}^{(x_k)} Y_{k-\ell} + \sigma^{(x_k)} \epsilon_k
\end{equation}
with $\{\epsilon_k\}$ an iid sequence of standard Gaussian random
variables, with $\sigma^{(x)}>0$ and $\beta_l^{(x)} \in \mathbb R$ for
every $\ell \in \{0,...,s\}$ and every $x \in \{1,2\},$
\begin{equation}\label{eq:AR1}
i.e.\ \ \ \ \ 
p_{2,\theta}(y_k|x_k,y_{k-s}^{k-1})=\mathcal N\left(y_k;\beta_0^{(x_k)}+\sum_{\ell=1}^s\beta_\ell^{(x_k)}y_{k-\ell},\sigma^{(x_k)}\right),
\end{equation}
where $\mathcal N(\cdot;\beta,\sigma)$ stands for the Gaussian pdf
with mean $\beta$ and standard deviation $\sigma$.

The transition probabilities of $\{X_k\}$ are parametrized using the logistic function as follows when $x_k=x_{k-1}$
\begin{equation}
\label{eq:nhlog}
p_{1,\theta}(x_k|x_{k-1},y_{k-s}^{k-1})=\pi_{-}^{(x_{k-1})} +\frac{1-\pi_{-}^{(x_{k-1})}-\pi_{+}^{(x_{k-1})}}{1+\exp\left(\lambda_0^{(x_{k-1})}+\lambda_1^{(x_{k-1})} y_{k-r}\right)} 
\end{equation}
with $r \leq s$ a positive integer and the unknown parameters $\pi_{-}^{(x)},\pi_{+}^{(x)},\lambda_0^{(x)}, \lambda_1^{(x)}$ for $x \in \{1,2\}$. \\
The unknown parameter $\theta$ corresponds to
$$\theta=\left((\beta_i^{(x)} ),(\sigma^{(x)}),(\pi_{-}^{(x)}),(\pi_{+}^{(x)}),(\lambda_i^{(x)})\right).$$
We write $\tilde\Theta$ for the set of such parameters $\theta$ satisfying, for every
$x\in\{1,2\}$, 
$\sigma^{(x)}>0$ and $0<\pi_{-}^{(x)}<1-\pi_{+}^{(x)}<1$ 
(this last constraint is added in order to ensure that (\ref{HH1}) holds).
\end{hypo}

Although very simple, this model encompasses the homogeneous Gaussian
MS-AR model when $\lambda_1^{(1)}=\lambda_1^{(2)}=0$ and the SETAR(2) model as a limit case. Indeed,  if $s=-\frac{\lambda_0^{(x)}}{\lambda_1^{(x)}}$ is fixed for $x\in\{1,2\}$, $\lambda_1^{(1)}\rightarrow +\infty$,  $\lambda_1^{(2)}\rightarrow -\infty$, $\pi_{-}^{(x)}\rightarrow 0$ and  $\pi_{+}^{(x)}\rightarrow 0$ then
\begin{equation}
\nonumber
p_1(X_k=1|x_{k-1},y_{k-s}^{k-1}) \rightarrow \ind(y_{k-r}\leq s) 
\ 
\text{and}
\ 
p_1(X_k=2|x_{k-1},y_{k-s}^{k-1}) \rightarrow \ind(y_{k-r}\geq s)
	\end{equation}
Both models have been extensively studied in the literature.

 In practice, we have used the EM algorithm to compute the MLE. The recursions of this algorithm are relatively similar to the ones of the MS-AR model (see \cite{krolzig97}, \cite{Dieb94}). To facilitate the comparison with the SETAR(2) model (\ref{eq:TARlynx}), we have also considered AR models of order $s=2$ and a lag $r=2$ for the transition probabilities.  The fitted model is the following 

\begin{equation}
\label{eq:NHlynx}
Y_k=\left\{\begin{array}{llllllll}
0.54  &+ 1.11  &Y_{k-1}  &- 0.24  &Y_{k-2} &+0.14  &\epsilon_k & (X_k=1) \\
\textit{(0.31,0.80)} & \textit{(0.96,1.27)}&&\textit{(-0.43,-0.05)} &&\textit{(0.11,0.17)}& &\\
1.03   &+ 1.49 & Y_{k-1}  &- 0.87 &  Y_{k-2} &+0.22  & \epsilon_k & (X_k=2) \\
\textit{(-0.12,1.86)} &\textit{(1.23,1.69)}&&\textit{(-1.20,-0.39)}&&\textit{(0.14,0.26)}&&
\end{array}\right .
\end{equation}

 with
 \begin{equation}
\label{eq:translynx}
P\left(X_k=i|X_{k-1}=i,Y_{k-2}=y_{k-2}\right)=\left\{\begin{array}{lllll}
(1+exp(&-42.4&+12.8 &y_{k-2}))^{-1} & (X_k=1) \\
&\textit{(-587,-16.3)} &\textit{(4.77,176)} &&\\
(1+exp(&9.07&-3.33 &y_{k-2}))^{-1}  & (X_k=2)\\
&\textit{(2.25,178)} &\textit{(-64.1,-1.12)} &&
\end{array}\right .
\end{equation}

where the italic values in parenthesis below the parameter values correspond to 95\% confidence intervals computed using parametric bootstrap (see e.g. \cite{visser2000}). These values reflect the finite sample properties of the estimates. The estimate of $\pi_{-}^{(x)}$ and $\pi_{+}^{(x)}$ are not given because they are very close to $0$. It means that these technical parameters have no practical importance and can be fixed equal to an arbitrary small value (here we used the machine epsilon $2^{-52}$). 
There are small differences between the AR coefficients (\ref{eq:TARlynx}) and (\ref{eq:NHlynx}) but the dynamics inside the regimes of the SETAR(2) and NHMS-AR models are broadly similar.
 The models differ mainly in the mechanism used to govern the switchings between the two regimes. For the SETAR model the regime is a deterministic function of a lagged value of the observed process. The NHMS-AR model can be seen as a fuzzy extension of the SETAR model where the regime has its own Markovian evolution influenced by the lagged value of the observed process. This is illustrated on Figure~\ref{fig:translynx} which shows the transition probabilities (\ref{eq:translynx}) and the threshold of the SETAR(2) model. According to this figure, it seems reasonable to model the transition from regime 1 to regime 2 by a step function at the level $y_{k-2} \approx 3.15$ but the values of $y_{k-2}$ for which the transition from regime 2 to regime 1 occurs seem to be more variable and the step function approximation less realistic.

The asymmetries in the cycle imply that the system spends less time in the second regime (decreasing phase) than in the first one. It may explain the larger confidence intervals in the second regime compared to the first one (see  (\ref{eq:NHlynx})). Figure \ref{fig:translynx} shows that there is an important sampling variability in the estimate of the transition kernel of the hidden process. This is probably due to the low number of transitions among regimes (see Figure \ref{fig:datalynx}) which makes it difficult to estimate the associated parameters. A similar behavior has been observed when fitting the model to other time series.

\begin{figure}[!ht]
\centering
\makebox{\includegraphics[scale=.75]{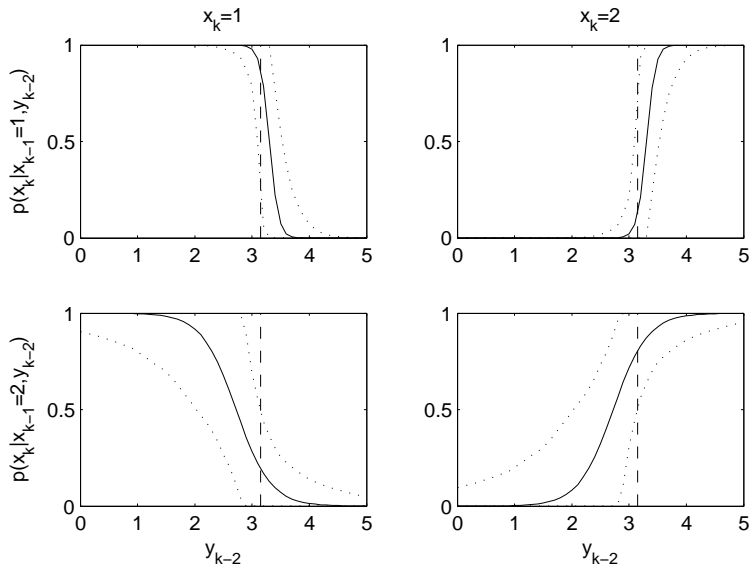}}
\caption{\label{fig:translynx} {Transition probabilities $P\left(X_k=j|X_{k-1}=i,Y_{k-2}=y_{k-2}\right)$ as a function of $y_{k-2}$. The dotted lines correspond to 95\% confidence intervals computed using parametric bootstrap. The dashed vertical line corresponds to the threshold (3.15) of the SETAR(2) model.}}
\end{figure}

Table \ref{tab:BIC} gives the AIC and BIC values defined as
$$AIC = -2 log L + 2npar, \ \ \  BIC = -2 log L + npar \log(N)$$
and $L$ is the likelihood of the data, $npar$ is the number of parameters and $N$ is the number of observations. The values for the NHMS-AR and SETAR models are relatively similar. The NHMS-AR models has a slightly better AIC value but BIC selects the SETAR model. As expected, these two models clearly outperform the homogeneous MS-AR which does not include information on the past values in the switching mechanism.

\begin{table}[!ht]
\centering
\begin{tabular}{|c|ccc|}
\hline
& AIC & BIC & npar \\
\hline 
SETAR ($s=2$) & -28.33 & -3.70 &9\\
MS-AR ($s=2$) & -0.2063 & 27.15 & 10\\
NHMS-AR ($r=s=2$)& -30.83 & 2.00 &12\\

\hline
\end{tabular}
\caption{AIC and BIC values for the fitted SETAR, homogeneous MS-AR  and NHMS-AR models }
\label{tab:BIC}
\end{table}

The simulated sequence shown on Figure \ref{fig:datalynx} exhibits a similar cyclical behavior than the data. A more systematic validation was performed but the results are hard to analyze because of the low amount of data available. The model can be generalized in several ways to handle $M\geq3$ regimes or include covariates, for example through a linear function in the logistic term (see  e.g. \cite{Dieb94}). Other link functions, such as the probit model used in \cite{Kim08} or a Gaussian kernel (see (\ref{eq:p1rain})), or non-linear autoregressive models could also be considered. Such models have been developed for various environmental data including temperature and wind time series. The fitted models generally provide an accurate description of the distributional properties of these time series and accurate short-term forecasts. This will be the subject of a forthcoming paper.


\subsection{Properties of the Markov chain}

In this section, we discuss the recurrent and ergodic properties of the model introduced in the previous section. It is a key step to prove the consistence of the MLE (see Theorem \ref{consistancegene}).
Various authors have studied the ergodicity  of MS-AR (\cite{Yao2000}, \cite{Yao2001}, \cite{Francq98}) and TAR (\cite{Chen91}, \cite{an96}) models. A classical approach to prove the ergodicity of a non-linear time series consists in establishing a drift condition. 
Here we will use a strict drift condition. Let $\Vert\cdot\Vert$ be some norm 
on $\mathbb R^s$. For any $R>0$, we consider
the set $E_R:=\{(x,y_{-s+1}^0):\Vert y_{-s+1}^0\Vert\le R\}$.
Recall that $\mu$ is here the product of the counting measure on $E$ and of the Lebesgue measure on $\mathbb R^s$.
\begin{prop}
Assume Hypothesis \ref{hypAR}.

The Markov chain is $\psi$-irreducible (with $\psi=\mu$).

Let $R>0$. The set $E_R$ is $\nu_s$-small and $\nu_{s+1}$-small
with $\nu_s$ and $\nu_{s+1}$ equivalent to $\mu$.
Hence, the markov chain is aperiodic.
\end{prop}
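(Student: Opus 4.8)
The plan is to study the companion Markov chain on $E\times K^s=\{1,2\}\times\mathbb R^s$ obtained by tracking the regime together with the last $s$ observations, whose $k$-step transition kernel is exactly $Q_\theta^{*k}(\cdot\mid P)$. First I would write this kernel as a density with respect to $\mu$ by marginalising over the intermediate regimes: starting from a state in $E\times K^s$, the chain reaches the target by generating a regime path together with the fresh observations, so $Q_\theta^{*k}(\cdot\mid P)$ is a finite sum over the intermediate regime paths $x_1,\dots,x_{k-1}$ of products of factors $p_{1,\theta}$ and Gaussian factors $p_{2,\theta}$. Exactly $s$ transitions are needed to flush the initial observations out of the current window, so this is a genuine pdf with respect to $\mu$ precisely when $k\ge s$; for $k<s$ the new state retains some of the initial coordinates deterministically and the kernel is degenerate. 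This is why the two smallness statements are phrased at the consecutive orders $s$ and $s+1$.

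The key elementary observation is that every factor in the above sum is strictly positive: the Gaussian densities $p_{2,\theta}$ in (\ref{eq:AR1}) are positive on all of $\mathbb R$, and by (\ref{HH1}) each $p_{1,\theta}\ge p_{1,-}>0$ (this is precisely what the constraint $0<\pi_{-}^{(x)}<1-\pi_{+}^{(x)}<1$ in Hypothesis \ref{hypAR} guarantees). Retaining a single regime path already shows $Q_\theta^{*s}(w\mid P)>0$ for $\mu$-a.e.\ $w$ and every $P$, so for any $A$ with $\mu(A)>0$ we have $Q_\theta^{*s}(A\mid P)>0$ for all $P$. This is $\psi$-irreducibility with $\psi=\mu$.

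For smallness I would produce, on the bounded set $E_R$, a lower bound on $Q_\theta^{*s}(\cdot\mid P)$ that is uniform in $P\in E_R$ and strictly positive $\mu$-a.e.\ in the target. Fixing the constant regime path equal to the target regime reduces this to bounding each Gaussian factor below. The conditional mean of the observation generated at a given step is the affine expression $\beta_0^{(x)}+\sum_{\ell=1}^s\beta_\ell^{(x)}y_{\cdot-\ell}$: at the first step the relevant lagged values are the initial ones, bounded by $R$ on $E_R$, hence the mean is bounded uniformly in $P\in E_R$; at later steps the mean is controlled by the already-generated (free) target coordinates. Proceeding inductively, each Gaussian factor is bounded below by an explicit positive function of the target coordinates produced so far, and multiplying these together with the factor $p_{1,-}^{\,s}$ yields $Q_\theta^{*s}(w\mid P)\ge\phi_s(w)$ for all $P\in E_R$, with $\phi_s>0$ everywhere. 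I then let $\nu_s$ be the measure with $\mu$-density $\phi_s$ (a finite measure, since $Q_\theta^{*s}(\cdot\mid P)$ is a probability), which is equivalent to $\mu$ and minorises $Q_\theta^{*s}(\cdot\mid P)$ on $E_R$; thus $E_R$ is $\nu_s$-small. The identical argument at order $s+1$ produces $\nu_{s+1}\sim\mu$ with $E_R$ being $\nu_{s+1}$-small.

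Finally, aperiodicity follows from the standard small-set criterion. The set $E_R$ is accessible (by $\mu$-irreducibility, as $\mu(E_R)>0$), and the minorising measures charge it, since $\nu_s(E_R),\nu_{s+1}(E_R)>0$ because $\nu_s,\nu_{s+1}\sim\mu$; hence the chain can return from $E_R$ to $E_R$ in both $s$ and $s+1$ steps with positive probability, so its period divides $\gcd(s,s+1)=1$ and the chain is aperiodic. I expect the only genuinely delicate step to be the uniform-in-$P$ lower bound of the third paragraph: because the mean of each freshly generated Gaussian observation depends on the previously generated, and therefore unbounded, target coordinates, one cannot obtain a bound that is constant in the target, only one that decays; the point is that it remains strictly positive $\mu$-a.e., which is exactly what equivalence to $\mu$ requires. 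Everything else is routine once the positivity of $p_{1,\theta}$ via (\ref{HH1}) and the Gaussian form of $p_{2,\theta}$ are in hand.
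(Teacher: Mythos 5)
Your proposal is correct and follows essentially the same route as the paper: $\psi$-irreducibility from the positivity of $q_\theta$, then a lower bound on the $s$-step (and $(s+1)$-step) transition density that is uniform over $P\in E_R$ and given by a strictly positive function of the target coordinates (the paper's $h_s$, your $\phi_s$), which yields minorizing measures equivalent to $\mu$ and hence aperiodicity from smallness at the two consecutive orders $s$ and $s+1$. The only cosmetic difference is that the paper bounds via an infimum over all intermediate regime paths where you fix a single path, which is immaterial since $E=\{1,2\}$ is finite.
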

\begin{proof}
The $\psi$-irreducibility comes from the positivity of $q_\theta$.
Let us prove that $E_R$ is $\nu_s$-small with $\nu_s=h_s\cdot\mu$
and
$$h_s(x_s,y_{1}^s)=\inf_{(x_0,y_{-s+1}^{0})\in E_R}\int_{E^s}\prod_{\ell=1}^s
   q_\theta(x_\ell,y_\ell|x_{\ell-1},y_{\ell-s}^{\ell-1})\, dx_1^{s-1}>0. $$
Indeed $p_{1,\theta}$ is uniformly bounded from below by some $p_{1,-}$, $\sigma^{(x)}$
are uniformly bounded from above by some $\sigma_+$ and from below
by some $\sigma_-$ and, for every $\ell\in\{1,...,s\}$, we have
$$\forall Z\in\mathbb R,\ \ \
   g_\ell(Z):=\sup_{(x_\ell,y_{-s+1}^{0})\in E_R}\left|Z-\beta_0^{(x_\ell)}-\sum_{j=\ell}^s
     \beta_j^{(x_\ell)}y_{\ell-j}\right|^2<\infty. $$
So
$$h_s(x_s,y_1^s)\ge\inf_{x_1,...,x_s\in\{1,2\}}
   \frac{({p_{1,-}})^s}{(2\pi \sigma_-)^{\frac s2}}
   \exp\left(-\frac 1{2\sigma_+}\sum_{\ell=1}^sg_\ell\left(y_\ell
     - \sum_{j=1}^{\ell-1}
     \beta_j^{(x_\ell)}y_{\ell-j}\right)\right) .$$
The proof of the $\nu_{s+1}$-smallness of $E_R$ (with $\nu_{s+1}$ equivalent to $\mu$) uses the same ideas.
\end{proof}
Now, to obtain the other properties related to the ergodicity of the process for practical applications
(including the practical example given in Section \ref{sec:lynx}), we can use the following
strict drift property.
\begin{hypo}\label{strictdrift}
There exist three real numbers $K<1$, $L>0$ and $R>0$ such that, for
every $(x_0,y_{-s+1}^0)\in\{1,2\}\times\mathbb R^s$,
\begin{equation}
\label{eq:drift}
\mathbb E[\left\|Y_{-s+2}^1\right\|^2|Y_{-s+1}^0=y_{-s+1}^0,X_0=x_0] \leq K \left\|y_{-s+1}^0\right\|^2 +L \ind_{E_R}(y_{-s+1}^0).
\end{equation}
\end{hypo}
Recall that this property has several classical consequences 
(see \cite[Chapters 11 and 15]{MeynTweedie} for more details). Hypothesis \ref{strictdrift}
(combined with the irreducibility and aperiodicity coming from Hypothesis \ref{hypAR}) implies in particular
\begin{itemize}
\item the existence of a (unique) stationary measure admitting a moment of
order 2;
\item the $V$-geometric ergodicity with $V(x,y_{-s+1}^0)=\Vert
   y_{-s+1}^0\Vert^2$ and so the ergodicity of the Markov chain (see
for example \cite[Proposition 2.2]{Loicsoaz1} for this last point);
\item the positive Harris recurrence.
\end{itemize}
We end this section with some comments on (\ref{eq:drift}).
Let us write
$$\Lambda^{(x)} = \left(\begin{array}{cccccc} 
 0 & 1 & 0 & 0 &\cdots& 0\\
  0 & 0 & 1 & 0 &\cdots & 0\\\
 \vdots&\vdots&\vdots&\ddots&\vdots\\
 0 & 0 & 0 & 0 &\cdots & 1\\
 \beta_{s}^{(x)}&\beta_{s-1}^{(x)} &\cdots&\cdots&\cdots&\beta_{1}^{(x)}
 \end{array}
 \right)$$
for the companion matrix associated to the AR model in regime $x$, 
$$\Phi^{(x)}:= \left(\begin{array}{c} 
 0\\
 0\\
 \vdots\\
 0\\
 \beta_{0}^{(x)}
 \end{array}
 \right)
 ,\ \ 
\Sigma^{(x)}=\left(\begin{array}{cccc}
 0 & \cdots &  0 & 0\\
 \vdots&\vdots&\vdots&\vdots\\
0 & \cdots &  0 & 0\\
0 & \cdots &  0 & \sigma^{(x)}
  \end{array}
 \right)\ \ \mbox{and}\ \ \boldsymbol{\varepsilon}:= \left(\begin{array}{c} 
 0\\
 0\\
 \vdots\\
 0\\
 \varepsilon_1
 \end{array}
 \right).$$
There exist $A,B>0$ such that, for every $(x_0,y_{-s+1}^0)\in\{1,2\}\times\mathbb R^s$, we have
\begin{eqnarray}
\nonumber
\mathbb E[\left\|Y_{-s+2}^1\right\|^2|Y_{-s+1}^0=y_{-s+1}^0,X_0=x_0] &=& \sum_{x_1=1}^M  p_{1,\theta}(x_1|x_0,y_{-s+1}^0) \mathbb E[\left\|\Lambda^{(x_1)} y_{-s+1}^0+\Phi^{(x_1)}+\Sigma^{(x_1)} \boldsymbol{\varepsilon}\right\|^2] \\
\nonumber
&\leq & \sum_{x_1=1}^M  p_{1,\theta}(x_1|x_0,y_{-s+1}^0) \left\|\Lambda^{(x_1)}\right\|^2  \left\|y_{-s+1}^0\right\|^2+A\left\|y_{-s+1}^0\right\|+B
\end{eqnarray}
where $\left\|.\right\|$ denotes abusively the matrix norm associated to the vector norm. We deduce the following.
\begin{remark}
The strict drift condition (\ref{eq:drift}) is satisfied when there exists $M>0$  such that for all $x_0 \in E$ and all $y_{-s+1}^0 \in {\mathbb R}^s$
\begin{equation}
\label{eq:driftb}
\Vert y_{-s+1}^0\Vert>R\ \ \Rightarrow\ \ \sum_{x_1 \in E}  p_{1,\theta}(x_1|x_0,y_{-s+1}^0)\left\| \Lambda^{(x_1)}\right\|^2 <1.
\end{equation}
This is true in particular if 
\begin{equation}
\label{eq:driftc} 
\forall x\in E, \; \left\|\Lambda^{(x)}\right\|<1.
\end{equation}
\end{remark}

The model fitted to the lynx data in the previous section satisfies condition (\ref{eq:driftc}) for the matrix norm defined as 
$$\left\| A \right\| = \left\| P^{-1}AP \right\|_\infty$$
with $P$ the matrix containing the eigenvectors of the companion matrix for the second regime and $\left\|.\right\|_\infty$ the infinity norm. This condition implies that all the regimes are stable.  However, it is also possible to construct models which satisfy (\ref{eq:driftb}) with some unstable regimes if the instability is controlled by the dynamics of $\{X_k\}$.

\begin{remark}
The results given in this section are still valid when
\begin{itemize}
	\item  the noise $\{\epsilon\}_k$ in (\ref{eq:ARgauss}) is an iid sequence with finite variance which admits a pdf $f$ with respect to the Lebesgue measure such that for all $R>0$, $\inf_{y \in E_R} f(y)> 0 $, and
	\item $E=\{1,...,M\}$ with $M \geq 2$ and (\ref{eq:nhlog}) replaced by any transition kernel $p_{1,\theta}$ satisfying (\ref{HH1}).
\end{itemize}
\end{remark}

\subsection{Consistency of MLE}
\label{MLEgaussien}
The results given in this section generalize the results given in \cite{Francq1998, krish1998} for homogeneous MS-AR models with linear Gaussian autoregressive models. 

\begin{corollary}\label{consistenceAR}
Assume that Hypotheses \ref{hypAR} and \ref{strictdrift} 
hold true for every $\theta$. Let $\Theta$ be a compact subset of $\tilde\Theta$. Then,
for all $\theta\in\Theta$ there exists a unique invariant probability
distribution and, for every $x_0\in M$ and every initial probability
distribution $\nu$, 
the limit values of $(\hat\theta_{n,x_0})_n$ are $\bar{\mathbb P}_{\theta^*}$-almost surely
contained in $\{\theta\in\Theta\ :\ \bar{\mathbb P}_\theta=\bar{\mathbb P}_{\theta^*}\}$. 
\end{corollary}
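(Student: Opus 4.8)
The plan is to apply Theorem \ref{consistancegene}, so the entire task reduces to verifying each of its hypotheses for the Gaussian--logistic model of Hypothesis \ref{hypAR} with $\Theta$ a compact subset of $\tilde\Theta$, and then invoking the stronger (second) conclusion. I would group the verification into three parts: the ergodic-theoretic assumptions, the elementary positivity/continuity/boundedness conditions, and the two integrability conditions (\ref{HH3})--(\ref{HH3b}), which are the only delicate point.

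\textbf{Ergodic assumptions.} The $\psi$-irreducibility (with $\psi=\mu$), the existence of small sets carrying measures $\nu_s,\nu_{s+1}$ equivalent to $\mu$, and aperiodicity are already established in the Proposition above. I would combine these with the strict drift condition of Hypothesis \ref{strictdrift} and invoke the Meyn--Tweedie consequences recalled just after it: existence and uniqueness of an invariant probability $\nu_\theta$ with a finite second moment, $V$-geometric ergodicity with $V(x,y_{-s+1}^0)=\Vert y_{-s+1}^0\Vert^2$ (hence ergodicity of $(X_k,Y_k)_k$), and positive Harris recurrence. Geometric ergodicity together with Harris recurrence yields convergence in total variation of $Q_\theta^{*n}\nu$ to $\nu_\theta$ from every initial law, which is exactly one of the equivalent formulations of (\ref{HH4}) listed after Theorem \ref{consistancegene}; in particular $Q_{\theta^*}$ is positive Harris recurrent and aperiodic, so the stronger second conclusion of the theorem will be available.

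\textbf{Positivity, continuity, and the easy bounds.} The logistic transition $p_{1,\theta}$ and the Gaussian density $p_{2,\theta}$ are smooth, hence continuous, in $\theta$, and both are strictly positive, so $q_\theta>0$. With the existence of $\nu_\theta$, the remark following Theorem \ref{consistancegene} then gives $h_\theta>0$ $\mu$-a.e. and therefore $\bar{\mathbb P}_{\theta^*}\ll\bar{\mathbb P}_\theta$. Next, since $\Theta$ is a compact subset of $\tilde\Theta$, the constraints $0<\pi_-^{(x)}<1-\pi_+^{(x)}<1$ hold with uniform margins, so the finitely many transition probabilities stay uniformly bounded away from $0$ and $1$; this gives (\ref{HH1}). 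Finally, because $E=\{1,2\}$ carries the counting measure and $\sigma^{(x)}$ is bounded below by some $\sigma_->0$ on $\Theta$, one has $\int_E p_{2,\theta}(Y_0|x,y_{-s}^{-1})\,d\mathfrak m_E(x)=\sum_{x\in\{1,2\}}\mathcal N(\cdots)\le 2/(\sqrt{2\pi}\,\sigma_-)$ uniformly in $y_{-s}^{-1}$, which settles (\ref{HH3c}) with a deterministic bound.

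\textbf{The integrability conditions (the hard part).} Here $\int_E p_{2,\theta}\,d\mathfrak m_E=\sum_{x\in\{1,2\}}\mathcal N\big(Y_0;\beta_0^{(x)}+\sum_{\ell=1}^s\beta_\ell^{(x)}Y_{-\ell},\sigma^{(x)}\big)$. On compact $\Theta$ the parameters range over compact sets with $\sigma^{(x)}\in[\sigma_-,\sigma_+]$, so this sum is bounded above by $2/(\sqrt{2\pi}\,\sigma_-)$ and below by a single Gaussian term, giving a lower bound of the form $\frac{1}{\sqrt{2\pi}\,\sigma_+}\exp(-\sup_\theta (Y_0-m_\theta)^2/(2\sigma_-^2))$ with $|m_\theta|\le C(1+\sum_\ell|Y_{-\ell}|)$, $C$ depending only on the parameter bounds. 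Consequently both $|\log\inf_\theta\int_E p_{2,\theta}|$ and $|\log\sup_\theta\int_E p_{2,\theta}|$ are dominated by a quadratic expression $C'(1+Y_0^2+\sum_\ell Y_{-\ell}^2)$, and since $\bar{\mathbb P}_{\theta^*}$ has a finite second moment (by the drift condition invoked above), these are $\bar{\mathbb P}_{\theta^*}$-integrable. This is precisely $B_-<\infty$ and $B_+<\infty$, i.e. (\ref{HH3}) and (\ref{HH3b}). This step is the main obstacle: the log-density of a Gaussian mixture grows quadratically in the observations, so integrability would fail without the second-moment control supplied by Hypothesis \ref{strictdrift}; everything else reduces to compactness of $\Theta$ and smoothness/positivity of the Gaussian and logistic densities. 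Once all hypotheses are checked and $Q_{\theta^*}$ is known to be positive Harris recurrent and aperiodic, the second conclusion of Theorem \ref{consistancegene} delivers the stated result for every $x_0$ and every initial law $\nu$.
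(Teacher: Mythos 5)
Your proof is correct and takes essentially the same approach as the paper's: both reduce the corollary to Theorem \ref{consistancegene} and verify its hypotheses via the irreducibility, aperiodicity and drift results of the preceding section (yielding positive Harris recurrence, ergodicity and (\ref{HH4})), the positivity of $q_\theta$ (absolute continuity), compactness of $\Theta$ (giving (\ref{HH1}) and the uniform Gaussian bound for (\ref{HH3c})), and square-integrability of the stationary process (giving (\ref{HH3})--(\ref{HH3b})). Your explicit quadratic-growth bound on the log of the Gaussian mixture merely fleshes out the step the paper states tersely as ``the stationary process is square integrable, which implies (\ref{HH3}) and (\ref{HH3b})''.
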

\begin{proof}
This corollary is a direct consequence of Theorem \ref{consistancegene} 
and of the previous section.
As already noticed in section \ref{sec:theory}, the invariant measure has a positive pdf with respect to $\mu$. As seen in the previous section, the Markov chain is aperiodic positive Harris recurrent (which implies (\ref{HH4})) and
the stationary process is square integrable, which implies (\ref{HH3}) and (\ref{HH3b}). In this example, $p_{2,\theta}$ is bounded from above and
so (\ref{HH3c}) holds.
\end{proof} 
\begin{remark}
Corollary \ref{consistenceAR} is still valid when $E=\{1,...,M\}$ with $M \geq 2$ and (\ref{eq:nhlog}) replaced by any transition kernel $p_{1,\theta}$ satisfying (\ref{HH1}).
\end{remark}
In the sequel, we explicit the limit set $\{\theta\in\Theta\ :\ \bar{\mathbb P}_\theta=\bar{\mathbb P}_{\theta^*}\}$ under the supplementary condition 
\begin{equation}\label{differentAR}
\left(\beta_0^{(1)},\beta_{1}^{(1)},... ,\beta_{s}^{(1)}, \sigma^{(1)} \right) \ne \left(\beta_0^{(2)},\beta_{1}^{(2)},... ,\beta_{s}^{(2)}, \sigma^{(2)} \right)
\end{equation}
 that the dynamics in the two regimes are distinct. Note that this condition is not sufficient in order to ensure identifiability. First, it can be easily seen that the homogeneous MS-AR model can be written in many different ways using the parametrization (\ref{eq:nhlog}). It led us to add one of the following constraints on the parameters 
 \begin{equation}
 \label{lambdadiff0}
 \forall x \in \{1,2\}, \lambda_1^{(x)} \ne 0
\end{equation}
 which does not include the homogeneous model as a particular case or
  \begin{equation}
 \label{pifix}
\forall x \in \{1,2\}, \pi_-^{(x)} = \pi_+^{(x)} = \pi_0 \ \text{where} \  0<\pi_0<1/2 \ \text{is a fixed constant}
\end{equation}
 in order to solve this problem. A practical motivation for (\ref{pifix}) is given in Section \ref{sec:lynx}.  Let $\Theta'$ be the set of $\theta\in\tilde\Theta$ satisfying (\ref{lambdadiff0}) and let
$\Theta''$ be the set of $\theta\in\tilde\Theta$ satisfying (\ref{pifix}).
 Then, a permutation of the two states also leads different parameters values but to the same model. This problem can be solved  by ordering the regimes or by allowing a permutation of the states as discussed below.

\begin{prop}[Identifiability]\label{identifiabilityAR}
Let $\theta_1$ and $\theta_2$ belong to $\Theta'$ (resp. $\Theta''$) with $\theta_i=\left(\theta_i^{(1)},\theta_i^{(2)}\right)$ and 

$$\theta_i^{(x)}=\left((\beta_{j,(i)}^{(x)})_{j \in \{0,...s\}},\sigma_{i},(\lambda_{j,(i)}^{(x)})_{j \in \{0,1\}}\right)$$
the parameters associated with the regime $x \in \{1,2\}$.

Assume that $\theta_1$ satisfies (\ref{differentAR}). Then
$\bar {\mathbb P}_{\theta_1}^Y=\bar {\mathbb P}_{\theta_2}^Y$ if and only if $\theta_1$ and $\theta_2$ define the same model up to a permutation of indices, i.e. there exists
a permutation $\tau$ of $\{1,2\}$ such that 
$$\theta_1^{(x)}=\theta_2^{(\tau(x))}$$
 
\end{prop}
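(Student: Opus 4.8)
\emph{Overview.} The easy implication is immediate: if there is a permutation $\tau$ of $\{1,2\}$ with $\theta_1^{(x)}=\theta_2^{(\tau(x))}$, then relabelling the hidden regimes of the stationary $\theta_1$-process through $\tau$ produces a stationary process governed by $\theta_2$ with the same observed component $\{Y_k\}$, so that $\bar{\mathbb P}_{\theta_1}^Y=\bar{\mathbb P}_{\theta_2}^Y$. The whole content is the converse, which I would prove in two stages: first identify the autoregressive (emission) parameters up to a permutation, then use this to recover the logistic transition parameters.

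\emph{Stage 1 (emissions).} Working with the two-sided stationary process, I would write the one-step predictive density of $Y_k$ given the infinite past $Y_{-\infty}^{k-1}$. By Hypothesis \ref{hypAR} it is the two-component Gaussian mixture
$$\sum_{x=1}^2 w_\theta^{(x)}(Y_{-\infty}^{k-1})\,\mathcal N\!\left(\cdot\,;\,\beta_0^{(x)}+\sum_{\ell=1}^s\beta_\ell^{(x)}Y_{k-\ell},\,\sigma^{(x)}\right),$$
where the weights $w_\theta^{(x)}=\bar{\mathbb P}_\theta(X_k=x\mid Y_{-\infty}^{k-1})$ are the predictive probabilities of the hidden chain. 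The crucial structural point is that the two Gaussian components depend on the past only through affine functions of the finite lag vector $Y_{k-s}^{k-1}$, whereas the weights may depend on the whole past; moreover (\ref{HH1}) guarantees $w_\theta^{(x)}\ge p_{1,-}>0$, so both components genuinely appear. Since $\bar{\mathbb P}_{\theta_1}^Y=\bar{\mathbb P}_{\theta_2}^Y$, these predictive mixtures coincide for a.e. realisation of the past; and because (\ref{differentAR}) forces the two components of $\theta_1$ to differ (as Gaussians, for a.e. value of the lags), the uniqueness of finite Gaussian-mixture representations (Teicher) yields, for a.e. lag vector, an identification of the components of $\theta_1$ with those of $\theta_2$ together with equality of the corresponding weights. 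Because the mean functions are affine in the lags and two affine functions that agree on a set of positive measure agree everywhere, this pointwise identification is realised by a \emph{single} permutation $\tau$ of $\{1,2\}$; matching affine coefficients and variances gives $\beta_{\ell,(1)}^{(x)}=\beta_{\ell,(2)}^{(\tau(x))}$ for every $\ell$ and $\sigma_1^{(x)}=\sigma_2^{(\tau(x))}$, and simultaneously $w_{\theta_1}^{(x)}=w_{\theta_2}^{(\tau(x))}$ almost surely.

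\emph{Stage 2 (transitions).} From Stage 1 the predictive weights and, through the emission update, the whole filtering distributions match up to $\tau$: writing $\phi_\theta^{(i)}:=\bar{\mathbb P}_\theta(X_{k-1}=i\mid Y_{-\infty}^{k-1})$, one gets $\phi_{\theta_1}^{(i)}=\phi_{\theta_2}^{(\tau(i))}$ a.s. Feeding this into the prediction step $w_\theta^{(j)}=\sum_i p_{1,\theta}(j\mid i,Y_{k-r})\,\phi_\theta^{(i)}$ and using the matched weights leads, for a.e. past, to $\sum_i\bigl[p_{1,\theta_1}(j\mid i,Y_{k-r})-p_{1,\theta_2}(\tau(j)\mid\tau(i),Y_{k-r})\bigr]\phi_{\theta_1}^{(i)}=0$. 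I would then argue that, conditionally on $Y_{k-r}=y$, the filter $\phi_{\theta_1}$ is non-degenerate (it still depends on all the other past coordinates) and hence takes at least two distinct values in the simplex, which forces $p_{1,\theta_1}(j\mid i,y)=p_{1,\theta_2}(\tau(j)\mid\tau(i),y)$ for a.e. $y$ and all $i,j$; that is, the logistic curves (\ref{eq:nhlog}) coincide after applying $\tau$. It remains to recover the logistic parameters from the curve $y\mapsto p_{1,\theta}(i\mid i,y)$: on $\Theta'$ the constraint (\ref{lambdadiff0}) makes this curve strictly monotone, so its two horizontal asymptotes fix $\pi_-^{(i)},\pi_+^{(i)}$ while its location and slope fix $\lambda_0^{(i)},\lambda_1^{(i)}$; on $\Theta''$ the values $\pi_\pm^{(i)}=\pi_0$ are fixed by (\ref{pifix}), so two evaluations of the curve already determine $\lambda_0^{(i)},\lambda_1^{(i)}$. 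In either case $\theta_1^{(x)}=\theta_2^{(\tau(x))}$, which is the assertion.

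\emph{Main obstacle.} The delicate points are, in Stage 1, disentangling the lag-dependent component parameters from the past-dependent mixing weights inside a single predictive mixture, controlled by Gaussian-mixture uniqueness together with the affine structure of the means and the lower bound $p_{1,-}>0$; and, in Stage 2, the non-degeneracy of the conditional law of the filter $\phi_{\theta_1}$ given $Y_{k-r}=y$, which is exactly what separates the two rows $i=1,2$ of the transition kernel. The latter is the real crux, since one must quantify that the single covariate $Y_{k-r}$ does not by itself determine the hidden filter; it is the non-homogeneous analogue of the rank and linear-independence conditions used in the homogeneous HMM identifiability theory.
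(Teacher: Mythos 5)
Your Stage 1 is essentially the paper's own argument (Appendix \ref{appendAR}): equality of the one-step predictive Gaussian mixtures, Teicher's identifiability theorem, positivity of the mixing weights, condition (\ref{differentAR}), and the finiteness of the set of permutations combined with the affine dependence of the component means on the lags, yielding a single permutation $\tau$ that matches the AR parameters and the weights. (You condition on the infinite past where the paper conditions on the finite lag vector $Y_{k-s}^{k-1}$; this difference is immaterial.) Your Stage 3 --- reading off $\pi_{-}^{(x)},\pi_{+}^{(x)}$ from the asymptotes of the logistic curve on $\Theta'$, or using the constraint (\ref{pifix}) on $\Theta''$ --- is also exactly the paper's final step.

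The genuine gap is in Stage 2, and you flag it yourself without closing it: the claim that, conditionally on $Y_{k-r}=y$, the filter $\phi_{\theta_1}$ is non-degenerate, i.e.\ takes at least two distinct values with positive probability. Nothing in the proposal proves this. The parenthetical ``it still depends on all the other past coordinates'' is not an argument: unconditional dependence on other coordinates does not rule out that, on a set of full measure, the filter collapses to a deterministic function of $Y_{k-r}$, and a rigorous proof (say, via analyticity of the filter in $Y_{k-1}$ together with positivity of the conditional densities, with a separate treatment of the case $r=1$ where $Y_{k-r}=Y_{k-1}$ is itself the conditioning variable) is a substantial piece of work --- precisely the rank-type condition you allude to. Since after Stage 1 the recovery of the transition parameters is the entire remaining content of the proposition, the proof is incomplete as it stands. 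Note that the paper's route avoids this issue altogether: as made explicit in the rainfall analogue (Appendix \ref{proofidentrain}, equation (\ref{EE2}), whose argument is stated to be ``as for the AR model''), one applies mixture identifiability a second time, to the conditional law of the pair $(Y_k,Y_{k+1})$ given $Y_{k-s}^{k-1}$. There the unknown transition probabilities enter the mixture weights multiplicatively, as (already-identified, positive, one-step weight) $\times\, p_{1,\theta}(x_{k+1}|x_k,\cdot)$, so they are recovered by matching weights and dividing --- no non-degeneracy of any filter is needed, only the positivity from (\ref{HH1}) and $h_{\theta}>0$. Replacing your Stage 2 by this two-step-mixture argument, or actually proving the conditional non-degeneracy of the filter, is what is missing.
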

The proof of Proposition \ref{identifiabilityAR} is postponed to appendix \ref{appendAR}. 

Now due to Corollary \ref{consistenceAR} and Proposition \ref{identifiabilityAR}, we directly get Theorem 
\ref{thmAR}.

\begin{theorem}\label{thmAR}
Assume that Hypotheses \ref{hypAR} and \ref{strictdrift} hold true
for every $\theta$. 
Let $\Theta$ be a compact subset of $\Theta'$ or $\Theta''$. Assume that $\theta^*$ satisfies (\ref{differentAR}).
Then, for every $x_0\in\{1,2\}$ and any initial probability distribution $\nu$, on a set of probability one, 
the limit values $\theta$ of the sequence of random variables 
$(\hat \theta_{n,x_0})_n$ are equal to $\theta^*$ 
up to a permutation of indices.
\end{theorem}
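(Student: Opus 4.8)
The plan is to obtain Theorem~\ref{thmAR} as a direct combination of the consistency statement of Corollary~\ref{consistenceAR} with the identifiability description of Proposition~\ref{identifiabilityAR}, so that no new analytic estimate is needed. First I would verify that the hypotheses of Corollary~\ref{consistenceAR} are met. Hypotheses~\ref{hypAR} and~\ref{strictdrift} hold for every $\theta$ by assumption, and since $\Theta$ is a compact subset of $\Theta'$ or of $\Theta''$, while both $\Theta'$ and $\Theta''$ are contained in $\tilde\Theta$, the set $\Theta$ is a compact subset of $\tilde\Theta$. Corollary~\ref{consistenceAR} then yields, for every $x_0\in\{1,2\}$ and every initial probability distribution $\nu$, a set of probability one on which every limit value $\theta$ of $(\hat\theta_{n,x_0})_n$ belongs to $\{\theta\in\Theta\ :\ \bar{\mathbb P}_\theta=\bar{\mathbb P}_{\theta^*}\}$; in particular the laws of the observed components agree, $\bar{\mathbb P}^Y_\theta=\bar{\mathbb P}^Y_{\theta^*}$, which is exactly the input required by Proposition~\ref{identifiabilityAR}. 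Here I use that each $\hat\theta_{n,x_0}$ lies in the compact set $\Theta$, so that all its limit values lie in $\Theta$ as well.

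Next I would fix such a limit value $\theta$ and feed it into Proposition~\ref{identifiabilityAR}. Both $\theta^*$ and $\theta$ lie in $\Theta\subseteq\Theta'$ (resp. $\Theta''$), and by hypothesis $\theta^*$ satisfies the separation condition~(\ref{differentAR}). Applying Proposition~\ref{identifiabilityAR} with $\theta_1=\theta^*$ and $\theta_2=\theta$, the equality $\bar{\mathbb P}^Y_{\theta^*}=\bar{\mathbb P}^Y_{\theta}$ obtained in the first step forces, through the ``only if'' implication, the existence of a permutation $\tau$ of $\{1,2\}$ with $\theta^{*(x)}=\theta^{(\tau(x))}$ for $x\in\{1,2\}$; that is, $\theta$ coincides with $\theta^*$ up to a permutation of indices. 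Since this reasoning applies to every limit value on the probability-one set produced above, the theorem follows.

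The only point requiring care --- bookkeeping rather than a genuine obstacle --- is the orientation in which Proposition~\ref{identifiabilityAR} is invoked: the proposition requires its first argument to satisfy~(\ref{differentAR}), and this is known only for $\theta^*$ and not a priori for the limit value $\theta$, which is why one must take $\theta_1=\theta^*$. Because the proposition's conclusion (equality of the two models up to a permutation of indices) is symmetric in its arguments, this choice is harmless. I would also stress that the whole argument is available only because the constraints defining $\Theta'$ through~(\ref{lambdadiff0}) and $\Theta''$ through~(\ref{pifix}) remove the residual non-identifiability of the logistic switching parametrization, so that the clean equivalence of Proposition~\ref{identifiabilityAR} holds on these subsets; restricting $\Theta$ to one of them is therefore essential and not merely convenient.
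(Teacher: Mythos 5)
Your proposal is correct and follows exactly the paper's own route: the paper derives Theorem~\ref{thmAR} directly by combining the consistency statement of Corollary~\ref{consistenceAR} with the identifiability of Proposition~\ref{identifiabilityAR}, just as you do. Your extra care about applying the proposition with $\theta_1=\theta^*$ (since only $\theta^*$ is known to satisfy~(\ref{differentAR})) is a valid and welcome clarification of a detail the paper leaves implicit.
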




\section{Non-homogeneous Hidden Markov Models with exogenous variables}
\label{sec:real}

\subsection{Model}

When using NHMS-AR models in practice, it is often assumed that the evolution of $\{X_{k}\}$ depends not only on lagged values of the process of interest but also on strictly exogenous variables. In order to handle such situation, we will denote $Y_k=(Z_k,R_k)$ with $\{Z_k\}$ the time series of covariates and $\{R_k\}$ the output time series to be modeled. Besides Hypothesis \ref{hyp1}, various supplementary conditional independence assumptions can be made for specific applications. For example, in \cite{Hughes99} it is assumed that the switching probabilities of $\{X_k\}$ only depend on the exogenous covariates
\begin{equation}
\nonumber
p_{1,\theta}(x_k|x_{k-1},r_{k-s}^{k-1},z_{k-s}^{k-1})=p_{1,\theta}(x_k|x_{k-1},z_{k-1})
\end{equation}
that the evolution of $\{Z_k\}$ is independent of $\{X_k\}$ and $\{R_k\}$ and that $R_k$ is conditionally independent of $Z_{k-s}^{k}$ and $R_{k-s}^{k-1}$ given $X_k$
\begin{equation}
\label{eq:covariate}
\nonumber
p_{2,\theta}\left(z_k,r_k|x_k,z_{k-s}^{k-1},r_{k-s}^{k-1}\right)= p_{R,\theta}\left(r_k|x_k\right) p_{Z}\left(z_k|z_{k-1}\right).
\end{equation}

This model, which dependence structure is summarized by the DAG below when $s=1$ is often referred as Non-Homogeneous Hidden Markov Models (NHMMs) in the literature.
	\begin{equation}
				\nonumber
				\begin{array}[t]{lccccccccc}
			\text{\textbf{Covariates}}	& \cdots & \rightarrow & Z_{k-1} &  \rightarrow & 	Z_{k} & \rightarrow & Z_{k+1} & \rightarrow & \cdots \\
 				& & & &\searrow & & \searrow & & \searrow   & \\
 	\text{\textbf{Hidden Regime}}\ \ 					& \cdots & \rightarrow & X_{k-1} &  \rightarrow & 	X_{k} & \rightarrow & X_{k+1} & \rightarrow & \cdots \\
 & & & \downarrow & & \downarrow && \downarrow  & & \\
\text{\textbf{Output time series}}\ \  & \cdots &  & R_{k-1} & & R_{k} & & R_{k+1} &  & \cdots \\
				\end{array}
				\end{equation}			

In this section, we consider a typical example of NHMM with finite hidden state space and strictly exogenous variables and show that the theoretical results  proven in this paper apply to this model. We focus on a model initially introduced in \cite{Bellone00b} 
for downscaling rainfall. It is an extension of the model proposed in \cite{Hughes99} (see also \cite{Vrac07} for more recent references). The results given in this section can be adapted to other NHMM with finite hidden state space such as the one proposed in \cite{Dieb94} which is widely used in econometrics. The model is described more precisely hereafter.

\begin{hypo}\label{hyprain}
Let $M$ be a positive integer and $\Sigma$ be a $m \times m$ positive definite symmetric matrix.
We suppose that $E=\{1,...,M\}$ (endowed with the counting measure $\mathfrak m_E$
on $E$) and that the observed process has two components $Y_k=(Z_k,R_k)$. 
For every time $k$, 
$Z_k \in \mathcal Z \subseteq \mathbb R^m$ is a vector of $m$ large scale atmospheric variables (covariates)  and  $R_k \in ([0,+\infty[)^\ell$ is the daily accumulation of rainfall measured at $\ell$ meteorological stations (output time series) with the value $0$ corresponding to dry days.
The model aims at describing the conditional distribution of $\{R_k\}$ given $\{Z_k\}$. For this, we assume that
\begin{equation}
\label{eq:p1rain}
p_{1,\theta}(x_k|x_{k-1},y_{k-1}) =\frac{ q_{x_{k-1},x_k} \exp\left(-1/2 \left(z_{k-1}-\mu_{x_{k-1},x_k}\right)' \Sigma^{-1} 
\left(z_{k-1}-\mu_{x_{k-1},x_k}\right)\right)}
{\sum_{x"=1}^M q_{x_{k-1},x"} \exp\left(-1/2 \left(z_{k-1}-\mu_{x_{k-1},x"}\right)' \Sigma^{-1} 
\left(z_{k-1}-\mu_{x_{k-1},x"}\right)\right)},
\end{equation}
with $q_{x,x'}>0$, $\mu_{x_,x'} \in \mathbb R^m$ 
and (\ref{eq:covariate}) holds
with respect to $\mathfrak m_{\mathcal Z}\otimes \mathfrak m_0^{\otimes \ell}$, where $\mathfrak m_{\mathcal Z}$ is
the Lebesgue measure on $\mathcal Z$ and where $\mathfrak m_{0}$ is the sum of the Dirac
measure $\delta_0$ and of the Lebesgue measure on $(0,+\infty[$.
We observe that $\{Z_k\}_k$ is a Markov chain which transition kernel 
depends neither on the current weather type 
nor on the unknown parameter $\theta$ (typically $Z_k$ is 
the output of an atmospheric model and is considered as an input to the Markov switching model) 
and that the conditional distribution of $R_k$ given $X_k$ and $\{Y_{k'}\}_{k'<k}$ only depends 
on $X_k$ as in usual HMMs. 
Finally the rainfall at the different locations is assumed to be 
conditionally independent given the weather type 
\begin{equation}
\nonumber
p_{R,\theta}\left(r_k(1),...,r_k(l)|x_k\right) = \prod_{i=1}^\ell p_{R_i,\theta}\left(r_k(i)|x_k\right)
\end{equation}
and the rainfall at the different locations is given by the product of Bernoulli and Gamma variables
\begin{equation}
p_{R_i,\theta}(r_{k}(i)|x_{k}) = \left\{\begin{array}{ll}
1-\pi_{i}^{(x_{k})} & (r_{k}(i)=0) \\
\pi_{i}^{(x_{k})}\gamma(r_{k}(i);\alpha_{i}^{(x_{k})},\beta_{i}^{(x_{k})})
& (r_{k}(i)>0)
\end{array}\right .
\end{equation}
where $0 < \pi_{i}^{(x)}< 1$, $\alpha_{i}^{(x)}>0$, $\beta_{i}^{(x)}>0$ and 
 $\gamma(.;\alpha,\beta)$ denotes the pdf of a Gamma distribution with parameters
$\alpha$, $\beta$:
$$\gamma(r;\alpha,\beta)=r^{\alpha-1}\frac{\beta^\alpha e^{-\beta r}}{\Gamma(\alpha)}.$$
The parameter $\theta$ corresponds to
$$\theta=\left((q_{x,x'}),(\mu_{x,x'}),(\pi_i^{(x)}),(\alpha_i^{(x)}),(\beta_i^{(x)})\right).$$
We write $\tilde\Theta$ for the set of such parameters $\theta$ satisfying, for every
$x\in\{1,...,M\}$ and every $i\in\{1,...,\ell\}$,
$$\sum_{x'=1}^M q_{x,x'}=1, \ \ 0<q_{x,x'}<1, \ \sum_{x'=1}^M \mu_{x,x'}=0,\ 0<\pi^{(x)}<1,\ \alpha_{i}^{(x)}>0,
   \ \mbox{and}\ \beta_{i}^{(x)}>0.$$
\end{hypo}
The conditions $\sum_{x'=1}^M q_{x,x'}=1$ and $\sum_{x'=1}^M \mu_{x,x'}=0$ come from \cite{Hughes99}. These conditions are not restrictive. Indeed, 
$q_\theta$ is unchanged if we replace $\mu_{x,x'}$ by $\mu_{x,x'}-\sum_{x"}\mu_{x,x"}$ and $q_{x,x'}$ by $\frac{q_{x,x'}\exp(-(\mu_{x,x'})\Sigma^{-1}\mu_x)}
{\sum_{x"}q_{x,x"}\exp(-(\mu_{x,x"})\Sigma^{-1}\mu_x)}$ (with
$\mu_x:=\sum_{x"}\mu_{x,x"}$).

Observe that the fact that, if $\mu_{x,x'}=0$ for every $x,x'$, then 
$\{X_k\}_k$ is an homogeneous Markov chain and $\{Z_k\}_k$ does not plays any role in the
dynamics of $\{X_k,R_k\}_k$.
\subsection{Properties of the Markov chain}

We start by recalling a classical result ensuring (\ref{HH4}) in the context of HMM (a proof of this result is given 
in Appendix \ref{proofrain} for completeness).
\begin{lemma}[HMM]\label{propHH4}
Fix $\theta$.
Assume that $p_{1,\theta}(x|x',y')=p_{1,\theta}(x|x')$ does not depend on $y'$, $\{X_k\}_k$ is a Markov chain
with transition kernel $Q_{1,\theta}$ admitting an invariant pdf $h_{1,\theta}$ (wrt $\mathfrak m_E$) such that
\begin{equation}
\nonumber
\lim_{n\rightarrow +\infty}\sup_{\nu\in \mathcal P(E)}||[Q_{1,\theta}^{*n}\nu]-h_{1,\theta}||_{L^1(\mathfrak m_E)}=0.
\end{equation}
Assume moreover that $s=0$ (this means that we can take $s=1$ with $p_{2,\theta}(y|x,y')=p_{2,\theta}(y|x)$).
Then  there exists an invariant measure $\nu_\theta$ with pdf $h_\theta$ (wrt $\mathfrak m_E\times \mathfrak m_K$)
given by $h_{\theta}(x,y):=h_{1,\theta}(x)p_{2,\theta}(y|x)$ and 
\begin{equation}\
\nonumber
\lim_{n\rightarrow +\infty}\sup_{\nu\in\mathcal P(E\times K)}||[Q_\theta^{*n}\nu]-h_\theta||_{L^1(\mathfrak m_E\times \mathfrak m_K)}=0.
\end{equation}
Moreover, if $p_{2,\theta}>0$ and if $\{X_k\}_k$ is an aperiodic positive Harris recurrent Markov chain, then
the Markov chain $\{X_k,Y_k\}_k$ is positive Harris recurrent and aperiodic.
\end{lemma}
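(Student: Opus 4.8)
The plan is to exploit the defining feature of the HMM case: since $q_\theta(x,y\mid x',y')=p_{1,\theta}(x\mid x')\,p_{2,\theta}(y\mid x)$ does not depend on $y'$, a single application of $Q_\theta^*$ already factorizes the law of $(X_k,Y_k)$. First I would record the one-step identity. For any $\nu\in\mathcal P(E\times K)$ with $X$-marginal $\nu_1$, conditioning on $X_1$ and using that $Y_1$ is drawn from $p_{2,\theta}(\cdot\mid X_1)$ independently of the past gives
\[
[Q_\theta^*\nu](x,y)=[Q_{1,\theta}^*\nu_1](x)\,p_{2,\theta}(y\mid x).
\]
Integrating over $y$ and using $\int_K p_{2,\theta}(y\mid x)\,d\mathfrak m_K(y)=1$ shows that the $X$-marginal of $Q_\theta^*\nu$ is precisely $Q_{1,\theta}^*\nu_1$, so a one-line induction yields, for every $n\ge1$,
\[
[Q_\theta^{*n}\nu](x,y)=[Q_{1,\theta}^{*n}\nu_1](x)\,p_{2,\theta}(y\mid x).
\]

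From this identity the first two assertions are bookkeeping. Taking $\nu=\nu_\theta$ with density $h_\theta(x,y)=h_{1,\theta}(x)p_{2,\theta}(y\mid x)$, its $X$-marginal is $h_{1,\theta}$, and the one-step identity together with $Q_{1,\theta}^*h_{1,\theta}=h_{1,\theta}$ gives $Q_\theta^*h_\theta=h_\theta$; hence $\nu_\theta$ is invariant. For the $L^1$ convergence I substitute the factorization and apply Fubini:
\[
\big\|[Q_\theta^{*n}\nu]-h_\theta\big\|_{L^1(\mathfrak m_E\times\mathfrak m_K)}
=\int_E\big|[Q_{1,\theta}^{*n}\nu_1](x)-h_{1,\theta}(x)\big|\Big(\int_K p_{2,\theta}(y\mid x)\,d\mathfrak m_K(y)\Big)d\mathfrak m_E(x)
=\big\|[Q_{1,\theta}^{*n}\nu_1]-h_{1,\theta}\big\|_{L^1(\mathfrak m_E)}.
\]
Taking the supremum over $\nu\in\mathcal P(E\times K)$ and bounding by the supremum over $\nu_1\in\mathcal P(E)$ reduces the claim exactly to the standing convergence hypothesis on $\{X_k\}$. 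As a by-product, this yields $\sup_\nu\|Q_\theta^{*n}\nu-\nu_\theta\|_{TV}\to0$, which already forces the joint chain to be aperiodic.

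For the last assertion I would transfer positive Harris recurrence from $\{X_k\}$ to $\{X_k,Y_k\}$. The invariant probability $\nu_\theta$ is in hand, so it suffices to show that from every $(x_0,y_0)$ the joint chain visits every set $A$ with $\nu_\theta(A)>0$ infinitely often. Writing $A_x=\{y:(x,y)\in A\}$ and $\rho(x)=\int_{A_x}p_{2,\theta}(y\mid x)\,d\mathfrak m_K(y)$, the condition $\nu_\theta(A)>0$ reads $\int_E h_{1,\theta}\,\rho\,d\mathfrak m_E>0$, so for some $\delta>0$ the set $B_\delta=\{\rho>\delta\}$ carries positive mass under the invariant law of $\{X_k\}$; by Harris recurrence of $\{X_k\}$ the $X$-chain enters $B_\delta$ infinitely often almost surely from every $x_0$. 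The decisive structural input is then the HMM conditional independence: given the whole hidden path $(X_j)_j$ the observations are independent with $Y_k\sim p_{2,\theta}(\cdot\mid X_k)$, so the events $\{(X_k,Y_k)\in A\}$ are conditionally independent with conditional probabilities $\rho(X_k)>\delta$ at each of the infinitely many returns to $B_\delta$. The conditional (second) Borel--Cantelli lemma then gives $(X_k,Y_k)\in A$ infinitely often almost surely. This is Harris recurrence with irreducibility measure $\nu_\theta$ (the hypothesis $p_{2,\theta}>0$ ensuring that $\nu_\theta$ is a maximal irreducibility measure), and since $\nu_\theta$ is a probability it is positive Harris recurrence; aperiodicity was obtained above.

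The main obstacle I anticipate is this last step. One must set up correctly the conditional independence of the observations given the hidden path and, crucially, truncate $\rho$ away from zero by passing to $B_\delta$, so that the divergence $\sum_k\rho(X_k)=\infty$ is genuinely produced by the infinitely many returns of $\{X_k\}$ and the conditional Borel--Cantelli lemma applies. By contrast, the factorization underlying the first two assertions is routine once the one-step identity is recorded.
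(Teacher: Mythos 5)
Your proof is correct, and while the first two assertions follow the paper's own path, your treatment of the ``moreover'' part is a genuinely different route. For the factorization part, your one-step identity $[Q_\theta^*\nu](x,y)=[Q_{1,\theta}^*\nu_1](x)\,p_{2,\theta}(y|x)$ plus induction and Fubini is essentially the same computation the paper performs directly on densities over $n$ steps, so there is no substantive difference there. For the last assertion the paper stays inside the Meyn--Tweedie structural toolkit: it transfers $\psi$-irreducibility to the joint chain using $p_{2,\theta}>0$, obtains positivity from the invariant probability, proves aperiodicity by showing that any small set $C'$ of the joint chain projects to a small set $C$ of the hidden chain with the same set of smallness times ($E_C=E_{C'}$), and then only asserts that Harris recurrence follows. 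You instead deduce aperiodicity from the uniform total-variation convergence (a cyclic decomposition of period $d\ge 2$ would keep $\Vert Q_\theta^{*n}\delta_P-\nu_\theta\Vert_{TV}$ bounded away from $0$ along the cycle), and you prove Harris recurrence by a direct probabilistic argument: conditionally on the hidden path the observations are independent, the hidden chain visits $B_\delta=\{\rho>\delta\}$ infinitely often by its own Harris recurrence, and the conditional second Borel--Cantelli lemma concludes. Your route is more elementary and self-contained: it avoids the small-set bookkeeping entirely, and it supplies exactly the detail the paper omits for Harris recurrence; it also scarcely uses $p_{2,\theta}>0$, since relative to the joint chain's own maximal irreducibility measure (which is equivalent to $\nu_\theta$ once irreducibility and an invariant probability are in hand) your Borel--Cantelli argument already gives everything, whereas the paper needs $p_{2,\theta}>0$ to make $\psi_0\times\mathfrak m_K$ an irreducibility measure. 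Two points you should make explicit to be fully rigorous: aperiodicity in the Meyn--Tweedie sense presupposes $\psi$-irreducibility, so the irreducibility of the joint chain (which your Harris argument delivers) should be recorded before calling aperiodicity a ``by-product''; and the step ``$B_\delta$ has positive invariant mass, hence the $X$-chain hits it infinitely often from every $x_0$'' uses the standard fact that for a positive Harris chain the invariant probability is a maximal irreducibility measure.
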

Due to this lemma, assumption (\ref{HH4}) 
holds true and $\{X_k,Y_k\}_k$ is aperiodic
positive Harris recurrent as soon as $\{X_k,Z_k\}_k$ is aperiodic positive Harris recurrent.

The ergodicity of $\{X_k,Y_k\}_k$ will also follow from the ergodicity of 
$\{X_k,Z_k\}_k$.

\subsection{Consistency of MLE}
\begin{corollary}\label{cororain}
Assume Hypothesis \ref{hyprain}.
Assume that $\Theta$ is a compact subset of $\tilde\Theta$ and that,
for every $\theta\in\Theta$, the transition kernel $Q_{0,\theta}$ 
of the Markov chain $\{X_k,Z_k\}_k$ admits an invariant pdf $h_{0,\theta}>0$ 
(wrt $\mathfrak m_E\times \mathfrak m_{\mathcal Z}$) such that
\begin{equation}
\label{cvgceL1X}
\lim_{n\rightarrow +\infty}\sup_{\nu\in \mathcal P(E\times \mathcal Z)}
\Vert[Q_{0,\theta}^{*n}\nu]-h_{0,\theta}
\Vert_{L^1(\mathfrak m_E\times\mathfrak  m_{\mathcal Z})}=0.
\end{equation}
Assume moreover that $\mathcal Z$ is compact, that
\begin{equation}
\label{pZ1}
\forall z\in\mathcal Z,\ \ \sup_{z_{-1}\in\mathcal Z}p_Z(z|z_{-1})<\infty
\end{equation}
and that
\begin{equation}\label{pZ2}
\bar{\mathbb E}_{\theta^*}[|\log p_Z(Z_0|Z_{-1})|]<\infty.
\end{equation}
Then, for every $x_0\in\{1,...,M\}$, on a set of probability one 
(for $\bar{\mathbb P}_{\theta^*}$),
the limit values $\theta$ of the sequence of random variables 
$(\hat \theta_{n,x_0})_n$ are $\bar{\mathbb P}_{\theta^*}$-almost
surely contained in $\{\theta\in\Theta\ :\ \bar{\mathbb P}_{\theta}=
    \bar{\mathbb P}_{\theta^*}\}$. 

If, moreover, $\{X_k,Z_k\}_k$ is aperiodic and positive Harris recurrent then this result holds true for any initial probability distribution.
\end{corollary}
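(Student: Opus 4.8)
The plan is to deduce the result from the general consistency Theorem \ref{consistancegene} by checking each of its hypotheses for the model of Hypothesis \ref{hyprain}, using Lemma \ref{propHH4} to transfer the ergodic, recurrence and mixing properties of the ``hidden'' chain $\{X_k,Z_k\}_k$ to the full chain $\{X_k,Y_k\}_k=\{X_k,Z_k,R_k\}_k$. The decisive structural feature is that, conditionally on $\{X_k,Z_k\}_k$, the output $R_k$ is emitted from $X_k$ alone as in an ordinary HMM, so that Lemma \ref{propHH4} applies with hidden chain $\{X_k,Z_k\}_k$ (transition $Q_{0,\theta}$, invariant pdf $h_{0,\theta}$) and emission $p_{R,\theta}(\cdot|x)$.

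First I would dispose of the analytic hypotheses. Compactness of $\Theta$ is assumed, and continuity of $p_{1,\theta}$ and $p_{2,\theta}$ in $\theta$ is immediate from the explicit parametrization (\ref{eq:p1rain}) and the Bernoulli--Gamma form of $p_{R,\theta}$. Condition (\ref{HH1}) follows from the compactness of $\Theta$ together with that of $\mathcal Z$: in (\ref{eq:p1rain}) the quantities $q_{x,x'}\in(0,1)$ and $\mu_{x,x'}$ range over compact sets and $z_{k-1}$ over the compact $\mathcal Z$, so each Gaussian kernel is bounded above and below by positive constants, whence $0<p_{1,-}\le p_{1,+}<\infty$. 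Existence of an invariant probability for every $\theta$ and positivity of its density come from Lemma \ref{propHH4}: the invariant pdf is $h_\theta(x,z,r)=h_{0,\theta}(x,z)\,p_{R,\theta}(r|x)$, which is strictly positive because $h_{0,\theta}>0$ by assumption and $p_{R,\theta}>0$; by the remark in Section \ref{sec:theory}, $h_\theta>0$ yields the absolute continuity of $\bar{\mathbb P}_{\theta^*}$ with respect to $\bar{\mathbb P}_\theta$. Assumption (\ref{HH4}) is provided directly by Lemma \ref{propHH4} from the uniform $L^1$ convergence (\ref{cvgceL1X}), and the same convergence gives the ergodicity of $\{X_k,Z_k\}_k$, hence of $\{X_k,Y_k\}_k$. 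For (\ref{HH3c}) I would use that, since $\mathfrak m_E$ is the counting measure,
\begin{equation}\nonumber
\int_E p_{2,\theta}(Y_0|x,Y_{-s}^{-1})\,d\mathfrak m_E(x)=p_Z(Z_0|Z_{-1})\sum_{x=1}^M p_{R,\theta}(R_0|x),
\end{equation}
where the inner sum is $\bar{\mathbb P}_{\theta^*}$-a.s.\ a finite sum of finite terms while $\sup_{z_{-1}\in\mathcal Z}p_Z(Z_0|z_{-1})<\infty$ by (\ref{pZ1}), so the supremum over $y_{-s}^{-1}$ of the integral is finite a.s.

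The main work lies in (\ref{HH3}) and (\ref{HH3b}). Taking the logarithm of the displayed identity splits the integrand into $\log p_Z(Z_0|Z_{-1})$, whose modulus is integrable by (\ref{pZ2}), and $\log\sum_{x}p_{R,\theta}(R_0|x)$. The latter is delicate precisely because the Gamma density $\gamma(r;\alpha,\beta)$ may blow up as $r\to0^+$ when $\alpha<1$, so $p_{R,\theta}$ is in general unbounded; this is the difference with \cite{DMR} emphasized after Theorem \ref{consistancegene}. I would bound, on the compact $\Theta$, both $\inf_\theta$ and $\sup_\theta$ of each single-station factor $p_{R_i,\theta}(r|x)$ by expressions of the form $C\exp(\pm(|\log r|+r))$ for $r>0$ (and by positive constants for $r=0$), using that $\pi_i^{(x)}$, $\alpha_i^{(x)}$, $\beta_i^{(x)}$ range over compact subsets of their domains. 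This yields $|\log\sup_\theta\sum_x p_{R,\theta}(R_0|x)|\le C(1+\sum_i|\log R_0(i)|\,\ind_{R_0(i)>0}+\sum_i R_0(i))$ and an analogous estimate for the infimum, and I would then take $\bar{\mathbb E}_{\theta^*}$, using that under the true stationary law the positive part of each $R_0(i)$ is a finite mixture of Gamma laws, for which $\bar{\mathbb E}_{\theta^*}[|\log R_0(i)|\,\ind_{R_0(i)>0}]<\infty$ and $\bar{\mathbb E}_{\theta^*}[R_0(i)]<\infty$. This is the step I expect to be the main obstacle, and the reason the weakened conditions (\ref{HH3})--(\ref{HH3b}) rather than a uniform bound on $p_2$ are needed.

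Finally I would collect the verified hypotheses and invoke Theorem \ref{consistancegene} to obtain the first assertion. For the second assertion, assuming $\{X_k,Z_k\}_k$ aperiodic and positive Harris recurrent, the last part of Lemma \ref{propHH4} (applicable since $p_{R,\theta^*}>0$) transfers these properties to $\{X_k,Y_k\}_k$, so $Q_{\theta^*}$ is aperiodic and positive Harris recurrent and the corresponding clause of Theorem \ref{consistancegene} applies for any initial probability distribution.
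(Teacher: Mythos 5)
Your proposal is correct and follows essentially the same route as the paper: apply Lemma \ref{propHH4} with hidden chain $\{X_k,Z_k\}_k$ and emission $p_{R,\theta}(\cdot|x)$ to get (\ref{HH4}) and the Harris-recurrence transfer, get (\ref{HH1}) and (\ref{HH3c}) from compactness, finiteness of $E$ and (\ref{pZ1}), and reduce (\ref{HH3})--(\ref{HH3b}) via (\ref{pZ2}) and compactness of $\Theta$ to the finiteness of $\bar{\mathbb E}_{\theta^*}[R_i]$ and $\bar{\mathbb E}_{\theta^*}[|\log R_i|\ind_{\{R_i>0\}}]$, which hold because the stationary law of $R_i$ on $(0,\infty)$ is dominated by Gamma-type densities with parameters in a compact set. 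Your treatment is, if anything, slightly more explicit than the paper's on points it leaves implicit (ergodicity, absolute continuity of $\bar{\mathbb P}_{\theta^*}$ with respect to $\bar{\mathbb P}_\theta$, and the role of compactness of $\mathcal Z$ in (\ref{HH1})).
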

\begin{proof}
Due to the previous section, we know that (\ref{cvgceL1X}) implies
(\ref{HH4}) and that the aperiodicity and positive Harris recurrence
of $\{X_k,Z_k\}_k$ implies the positive Harris recurrence
of $\{X_k,Y_k\}_k$.

The fact that $\Theta$ is a compact subset of $\tilde\Theta$ directly implies
(\ref{HH1}).

Assumption (\ref{HH3c}) holds true since $E$ is finite, since $p_{R,\theta}(r|x)<\infty$ for 
every $(x,y)\in E\times K$ and according to (\ref{pZ1}).

Now according to (\ref{pZ2}), (\ref{HH3}) and (\ref{HH3b}) will follow from the fact that,
for every $x_0\in X$ and every $i\in\{1,...,\ell\}$,
$$
\bar{\mathbb E}_{\theta^*} \left[\left|\log\left(\inf_\theta p_{R_i,\theta}
        (R_i|x_0)\right)\right|\right]+
\bar{\mathbb E}_{\theta^*} \left[\left|\log\left(\sup_\theta p_{R_i,\theta}
        (R_i|x_0)\right)\right|\right]<\infty.$$
Now we observe that if $R_i=0$, then
$$ 0< 1-\pi_+  \le p_{R_i,\theta}
        (R_i|x_0)\le 1-\pi_-,$$
where $\pi_-$ and $\pi_+$ are the minimal and maximal possible values of $\pi^{(x)}_i$ (for $x\in X$, $i\in\{1,...,\ell\}$
and $\theta$ in the compact set $\Theta$).
Analogously, let us write $\alpha_-$, $\alpha_+$ for the minimal and maximal possible values of $\alpha_{i}^{(x)}$
and $\beta_-$, $\beta_+$ for the minimal and maximal possible values of $\beta_{i}^{(x)}$.
Since, all this quantities are positive and finite, due to the expression of 
$\log( p_{R_i,\theta}(R_i|x_0))$, to prove (\ref{HH3}) and (\ref{HH3b}), it is enough to prove that

$$\bar{\mathbb E}_{\theta^*} [R_i]<\infty\ \ \mbox{and}\ \ \bar{\mathbb E}_{\theta^*} [|\log(R_i)|{\mathbf 1}_{\{R_i>0\}}]<\infty.$$
Observe that, under the stationary distribution, the pdf $h_i$ of $R_i$ satisfies:
$$\forall r>0,\ \  h_i(r)\le (r^{\alpha_--1}{\mathbf 1}_{\{r\le 1\}}+r^{\alpha_+-1}{\mathbf 1}_{\{r> 1\}})
    \frac{\max(\beta_+^{\alpha_+},\beta_+^{\alpha_-}) e^{-r\beta_- }}{\Gamma(\alpha_-)}.$$
Therefore, (\ref{HH3}) and (\ref{HH3b}) come from the facts that $r\mapsto r^{\alpha_+-1}e^{-r\beta_-}$ is integrable at 
$+\infty$ (since $\beta_->0$) and that $r\mapsto |\log r|r^{\alpha_--1}$ is integrable at 0 (since $\alpha_->0$).
\end{proof}
Now we will add an assumption on $\theta$ to ensure the identifiability
of the parameter. 
If we assume $\pi_i^{(x)}=0$ for every $i$ and every $x$, then identifiability
follows easily if we assume moreover that
\begin{equation}
\label{eq:identpasbon}
x\ne x'\ \Rightarrow\ (\alpha_i^{(x)},\beta_i^{(x)})_i\ne  (\alpha_i^{(x')},\beta_i^{(x')})_i.
  \end{equation}
But, if we do not assume $\pi_i^{(x)}=0$, (\ref{eq:identpasbon}) does not
ensure identifiability anymore. We give now an explicit counter-example.
\begin{remark}
Assume $M=\ell=2$. 
We consider two models $A_1$ and $A_2$ associated to $\theta_1$
and $\theta_2$ respectively, with
$$\theta_j=\left((q_{x,x',(j)}),(\mu_{x,x',(j)}),(\pi_i^{(x,(j))}),(\alpha_i^{(x,(j))}),(\beta_i^{(x,(j))})\right),$$

and 
\begin{itemize}
	\item $q_{x,x',(1)}=0.5$, $\mu_{x,x',(1)}=0$, $\pi_i^{(x,(1))}=0.5$, 
$\alpha_i^{(x,(1))}=1$, $\beta_1^{(x,(1))}=1$, $\beta_2^{(1,(1))}=2$, 
$\beta_2^{(2,(1))}=3$, 
\item $q_{x,1,(2)}=0.6$, $q_{x,2,(2)}=0.4$, $\mu_{x,x',(2)}=0$, 
$\pi_1^{(x,(2))}=0.5$, $\pi_2^{(1,(2))}=\frac{0.25}{0.6}$, 
 $\pi_2^{(2,(2))}=\frac{0.25}{0.4}$, 
$\alpha_i^{(x,(2))}=1$, $\beta_1^{(x,(1))}=1$, $\beta_2^{(1,(1))}=2$, 
$\beta_2^{(2,(1))}=3$.
\end{itemize}

For model $A_1$ (under the stationary measure), $\{X_k\}$ is an iid sequence on $\{1,2\}$ with
$\mathbb P(X_1=1)=0.5$ and the distribution of $R_k$ given $\{X_k=1\}$ is
$(0.5\delta_0+0.5\Gamma(1,1))\otimes(0.5\delta_0+0.5\Gamma(1,2)) $
whereas the distribution of $R_k$ taken $\{X_k=2\}$ is
$(0.5\delta_0+0.5\Gamma(1,1))\otimes(0.5\delta_0+0.5\Gamma(1,3)).$
Hence, for the model $A_1$, the $R_k$ are iid with distribution
\begin{equation}\label{loiY}
(0.5\delta_0+0.5\Gamma(1,1))\otimes(0.5\delta_0+0.25\Gamma(1,2)+0.25
\Gamma(1,3)).
\end{equation}
For model $A_2$ (under the stationary measure), $\{X_k\}$is an iid sequence on $\{1,2\}$ with
$\mathbb P(X_1=1)=0.6$ and the distribution of $R_k$ given $\{X_k=1\}$ is
$(0.5\delta_0+0.5\Gamma(1,1))\otimes\left((1-\frac{0.25}{0.6})\delta_0+
\frac{0.25}{0.6}\Gamma(1,2)\right) $
whereas the distribution of $R_k$ taken $\{X_k=2\}$ is
$(0.5\delta_0+0.5\Gamma(1,1))\otimes\left((1-\frac{0.25}{0.4})\delta_0+
\frac{0.25}{0.4}\Gamma(1,3)\right)$.
Hence, for the model $A_2$, the $R_k$ are iid with distribution
(\ref{loiY}). 

Observe that the distribution of $\{Y_k\}$ under the stationary measure
is the same for models $A_1$ and $A_2$.
\end{remark}
The next result (proved in appendix \ref{proofidentrain}) states that the following condition ensures identifiability
\begin{equation}\label{condiident}
x\ne x'\ \Rightarrow\ \forall i\in\{1,...,\ell\},\ (\alpha_{i,\theta_1}^{(x)},\beta_{i,\theta_1}^{(x)})=(\alpha_{i,\theta_1}^{(x')},\beta_{i,\theta_1}^{(x')}).
\end{equation}
\begin{prop}\label{identrain}
Assume Hypothesis \ref{hyprain}. Let $\theta_1$ and $\theta_2$ in 
$\tilde\Theta$, with
$$\theta_j=\left((q_{x,x',(j)}),(\mu_{x,x',(j)}),(\pi_i^{(x,(j))}),(\alpha_i^{(x,(j))}),(\beta_i^{(x,(j))})\right).$$
Assume that $\theta_1$ satisfies (\ref{condiident}).

Then
$\bar {\mathbb P}_{\theta_1}^Y=\bar {\mathbb P}_{\theta_2}^Y$ if and only $\theta_1$ and $\theta_2$ are equal up to a permutation of indices, i.e. there exists a permutation $\tau$ of $\{1,...,M\}$ such that, for every $x,x'\in\{1,...,M\}$ and every $i\in\{1,...,\ell\}$, we have 
$q_{x,x',(1)}=q_{\tau(x),\tau(x'),(2)}$, $\mu_{x,x',(1)}=\mu_{\tau(x),\tau(x'),(2)}$, $\pi_i^{(x,(1))}=\pi_i^{(\tau(x),(2))}$,
$\alpha_i^{(x,(1))}=\alpha_i^{(\tau(x),(j))}$, $\beta_i^{(x,(1))}=\beta_i^{(\tau(x),(2))}$.
\end{prop}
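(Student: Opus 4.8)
The \emph{if} direction is immediate, so I would dispose of it first: if $\theta_2$ is obtained from $\theta_1$ by a relabeling $\tau$ of the hidden states (and a corresponding permutation of all parameters), then averaging over the unobserved $\{X_k\}$ produces the same law for $\{(Z_k,R_k)\}_k$, since the $Z$-dynamics does not depend on $\theta$ and the labels are integrated out. All the content is in the converse, so assume $\bar{\mathbb P}^Y_{\theta_1}=\bar{\mathbb P}^Y_{\theta_2}$. The plan is to work conditionally on a trajectory of the exogenous chain $\{Z_k\}$, whose law is common to both models. Given $\{Z_k=z_k\}_k$, the chain $\{X_k\}$ is an inhomogeneous Markov chain whose one-step transition at time $k$ is the stochastic matrix $T_\theta(z_{k-1})=\big(p_{1,\theta}(x'|x,z_{k-1})\big)_{x,x'}$ from \eqref{eq:p1rain}, and, conditionally on $\{X_k\}$, the $R_k$ are independent with emission law $p_{R,\theta}(\cdot|X_k)=\bigotimes_{i=1}^\ell\nu^\theta_{X_k,i}$, where $\nu^\theta_{x,i}=(1-\pi_i^{(x)})\delta_0+\pi_i^{(x)}\gamma(\cdot;\alpha_i^{(x)},\beta_i^{(x)})$. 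Thus the equality of observed laws becomes an equality of conditional inhomogeneous HMM laws, to which I can bring the standard machinery.

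The first substantive step is to recover the emission measures up to a permutation. Here condition \eqref{condiident} — which strengthens \eqref{eq:identpasbon} by demanding that the Gamma parameters differ across states at \emph{every} station, not merely at one — is the crucial analytic input: by identifiability of the Gamma family it makes the single-site measures $\{\nu^{\theta_1}_{x,i}\}_x$ linearly independent for every $i$, hence the product emissions $\{p_{R,\theta_1}(\cdot|x)\}_x$ linearly independent. I would then exploit the Markov dependence by viewing the conditional joint law of three consecutive observations $(R_{-1},R_0,R_1)$ as a three-way array whose central mode is indexed by the emissions $p_{R,\theta}(\cdot|X_0)$; linear independence of these central factors (together with the non-degeneracy of $T_\theta(z)$, guaranteed by $q_{x,x'}>0$) yields uniqueness of the nonnegative decomposition via a Kruskal-type argument (in the spirit of Allman--Matias--Rhodes). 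This produces a permutation $\tau$ with $p_{R,\theta_1}(\cdot|x)=p_{R,\theta_2}(\cdot|\tau(x))$ for all $x$, and simultaneously matches $T_{\theta_1}(z)$ with the $\tau$-conjugate of $T_{\theta_2}(z)$ for the covariate values $z$ that occur.

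The remaining steps are comparatively routine. Equality of the product measures $p_{R,\theta_1}(\cdot|x)=p_{R,\theta_2}(\cdot|\tau(x))$ forces equality of each one-dimensional marginal $\nu^{\theta_1}_{x,i}=\nu^{\theta_2}_{\tau(x),i}$: the mass of the atom at $0$ gives $\pi_i^{(x,(1))}=\pi_i^{(\tau(x),(2))}$, and identifiability of the Gamma densities gives $\alpha_i^{(x,(1))}=\alpha_i^{(\tau(x),(2))}$ and $\beta_i^{(x,(1))}=\beta_i^{(\tau(x),(2))}$. To recover the transition parameters I would let the covariate vary: the matched emissions let us read $T_\theta(z)$ off the observed law for $\mathfrak m_{\mathcal Z}$-a.e.\ $z$, and since $z\mapsto T_\theta(z)$ has the explicit Gaussian-softmax form \eqref{eq:p1rain}, taking logarithms reduces it to affine functions of $z$; using $q_{x,x'}>0$ together with the normalizations $\sum_{x'}q_{x,x'}=1$ and $\sum_{x'}\mu_{x,x'}=0$ then pins down $q_{x,x'}$ and $\mu_{x,x'}$ uniquely, giving $q_{x,x',(1)}=q_{\tau(x),\tau(x'),(2)}$ and $\mu_{x,x',(1)}=\mu_{\tau(x),\tau(x'),(2)}$.

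I expect the main obstacle to be the emission-recovery step. As the explicit counterexample around \eqref{loiY} shows, the product-over-stations structure together with the zero inflation genuinely destroys mixture identifiability when some station fails to separate two states: the product then factorizes and the components recombine with different weights, so distinctness of the \emph{vectors} of parameters (condition \eqref{eq:identpasbon}) is not enough. Condition \eqref{condiident} is precisely what restores the linear independence of the emission measures, and the heart of the proof is to establish this linear independence rigorously and to check that the temporal (and, when $\mu\neq0$, covariate-induced) variation supplies enough independent views for the tensor-decomposition uniqueness theorem to apply; everything after that is bookkeeping.
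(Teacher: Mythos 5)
Your ``if'' direction, your reading of (\ref{condiident}) (the Gamma parameters must \emph{differ} across states at every station; the ``$=$'' in that display is a typo), and your endgame (the atom at $0$ gives the $\pi$'s, Gamma identifiability gives the $\alpha,\beta$'s, the log-affine structure of \eqref{eq:p1rain} plus the normalizations $\sum_{x'}q_{x,x'}=1$ and $\sum_{x'}\mu_{x,x'}=0$ gives the $q$'s and $\mu$'s) all agree with the paper. The gap is in your central step. Kruskal/Allman--Matias--Rhodes uniqueness for the three-way array of $(R_{-1},R_0,R_1)$ requires the two outer factor matrices (the conditional laws of $R_{-1}$ and $R_{1}$ given $X_0$) to have large Kruskal rank, which, given linearly independent emissions, amounts to invertibility of the forward and backward transition matrices. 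Positivity $q_{x,x'}>0$ does \emph{not} guarantee this, contrary to your assertion that non-degeneracy is ``guaranteed by $q_{x,x'}>0$'', and the proposition contains no such rank hypothesis. Concretely, take $\theta_1$ with $\mu_{x,x',(1)}=0$ and $q_{x,x',(1)}$ independent of $x$ (allowed in $\tilde\Theta$, and exactly the structure of the two models in the counterexample preceding the proposition, which do satisfy (\ref{condiident}) once station $1$ is also separated): then $\{X_k\}$ is iid, hence $\{R_k\}$ is iid, the conditional laws of $R_{\pm1}$ given $X_0$ do not depend on $X_0$, the outer factors have Kruskal rank $1$, and $k_A+k_B+k_C=M+2<2M+2$; worse, the joint law of any finite window is a rank-one tensor, so no decomposition-uniqueness theorem, for any window length, can resolve the hidden structure. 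Yet the proposition asserts (correctly) identifiability in this case. So the difficulty you flag as ``the main obstacle'' is not a technical verification: it genuinely fails for admissible parameters, and the tensor approach cannot deliver the full statement without adding a rank assumption.

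The paper avoids rank conditions entirely by working with conditional mixtures having strictly positive weights rather than with three-way factorizations. Under stationarity, the law of $Y_k$ given $Y_{k-1}$ is the finite mixture $\sum_x \bar{\mathbb P}_\theta(X_k=x|Y_{k-1})\, p_{2,\theta}(\cdot|x)$ with all weights positive (since $p_{1,\theta}>0$ and $h_\theta>0$); stratifying over the dry/wet patterns $J\subseteq\{1,\dots,\ell\}$ --- this is where (\ref{condiident}) enters, keeping the product components pairwise distinct after restriction to any $J$ --- and applying identifiability of finite Gamma mixtures recovers the $(\alpha,\beta)$'s in \eqref{EE1b}, then the $\pi$'s in \eqref{EE1c}, \emph{and} the posterior weights, all up to one fixed permutation $\tau$. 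The transition kernel is then read off the two-step conditional law \eqref{EE2} of $(Y_k,Y_{k+1})$ given $Y_{k-1}$: with emissions and posteriors already matched, it forces $p_{1,\theta_1}(x'|x,y)=p_{1,\theta_2}(\tau(x')|\tau(x),y)$ for a.e.\ $y$, after which your softmax computation coincides with the paper's \eqref{EE2b}--\eqref{EE2d}. Substituting this conditional-mixture argument for the Kruskal step is what repairs your proof, and it works uniformly in the degenerate cases above.
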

Now the following result is a direct consequence of Corollary \ref{cororain}
and Proposition \ref{identrain}.
\begin{theorem}\label{thmrainfall}
Assume Hypothesis \ref{hyprain}. 
Assume that $\Theta$ is a compact subset of $\tilde\Theta$ and that,
for every $\theta\in\Theta$, the transition kernel $Q_{0,\theta}$ 
of the Markov chain $(X_k,Z_k)_k$ admits an invariant pdf $h_{0,\theta}$ 
(wrt $\mathfrak m_E\times\mathfrak  m_{\mathcal Z}$) satisfying (\ref{cvgceL1X}).
Assume that $\theta^*$ satisfies (\ref{condiident}).
Assume moreover that $\mathcal Z$ is compact, that (\ref{pZ1}) and
(\ref{pZ2}) hold true.
Then, for every $x_0\in\{1,...,M\}$, on a set of probability one 
(for $\bar{\mathbb P}_{\theta^*}$),
the limit values $\theta$ of the sequence of random variables 
$(\hat \theta_{n,x_0})_n$ are equal to $\theta^*$ 
up to a permutation of indices.

If, moreover, $(X_k,Z_k)_k$ is aperiodic and positive Harris recurrent then this result holds true 
for any initial probability distribution.
\end{theorem}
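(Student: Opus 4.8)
The plan is to derive Theorem~\ref{thmrainfall} by composing the two preceding results: Corollary~\ref{cororain}, which locates the limit values of the estimator, and Proposition~\ref{identrain}, which identifies the corresponding equivalence class of parameters. The hypotheses imposed in the theorem (Hypothesis~\ref{hyprain}; $\Theta$ a compact subset of $\tilde\Theta$; existence for each $\theta\in\Theta$ of an invariant pdf $h_{0,\theta}$ of $Q_{0,\theta}$ satisfying \eqref{cvgceL1X}; compactness of $\mathcal Z$; and \eqref{pZ1}--\eqref{pZ2}) are exactly those of Corollary~\ref{cororain}, so no extra verification is needed to invoke it.

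First I would apply Corollary~\ref{cororain}. It yields, for every $x_0\in\{1,\dots,M\}$ and on a set of $\bar{\mathbb P}_{\theta^*}$-probability one, that the limit values of $(\hat\theta_{n,x_0})_n$ belong to $\{\theta\in\Theta:\ \bar{\mathbb P}_\theta=\bar{\mathbb P}_{\theta^*}\}$. In particular every such limit value $\theta$ produces the same stationary law for the observed process, i.e. $\bar{\mathbb P}^Y_\theta=\bar{\mathbb P}^Y_{\theta^*}$. This is the only step where the analytic hypotheses are consumed, and they are consumed inside Corollary~\ref{cororain} itself (where \eqref{cvgceL1X} delivers \eqref{HH4}, compactness of $\Theta$ gives \eqref{HH1}, \eqref{pZ1} together with finiteness of $E$ gives \eqref{HH3c}, and \eqref{pZ2} with the Gamma integrability estimates gives \eqref{HH3} and \eqref{HH3b}), so nothing new is computed here.

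Second I would invoke Proposition~\ref{identrain} with $\theta_1:=\theta^*$ and $\theta_2:=\theta$ a limit value. Since $\theta^*$ satisfies \eqref{condiident} by assumption, the hypothesis of the proposition is met, and $\bar{\mathbb P}^Y_\theta=\bar{\mathbb P}^Y_{\theta^*}$ forces $\theta$ to equal $\theta^*$ up to a permutation $\tau$ of the hidden states $\{1,\dots,M\}$, matching every parameter block $q$, $\mu$, $\pi$, $\alpha$, $\beta$. This proves the first assertion. For the second assertion, the extra hypothesis that $(X_k,Z_k)_k$ is aperiodic and positive Harris recurrent propagates, through Lemma~\ref{propHH4} and the second conclusion of Corollary~\ref{cororain}, to the positive Harris recurrence of $(X_k,Y_k)_k$; this removes the stationarity restriction and lets the same conclusion hold for an arbitrary initial distribution $\nu$.

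Since both ingredients are already established, the theorem is a direct composition with no substantive obstacle of its own; the points deserving attention are conceptual rather than technical. I would make sure to read identifiability through the observed law $\bar{\mathbb P}^Y$ (consistency of the MLE can only pin down the law of $\{Y_k\}$, never the joint law of $(X_k,Y_k)_k$), and to confirm that the invariant density $h_{0,\theta}$ may be chosen strictly positive as Corollary~\ref{cororain} requires, which follows from the strict positivity of the weights $q_{x,x'}$ and of the Gaussian factors in \eqref{eq:p1rain} together with the $L^1$-convergence \eqref{cvgceL1X}.
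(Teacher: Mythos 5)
Your proposal is correct and follows exactly the paper's own argument: the paper states that Theorem~\ref{thmrainfall} ``is a direct consequence of Corollary~\ref{cororain} and Proposition~\ref{identrain},'' which is precisely your composition of the consistency result (limit values lie in $\{\theta\in\Theta:\bar{\mathbb P}^Y_\theta=\bar{\mathbb P}^Y_{\theta^*}\}$) with the identifiability result applied to $\theta_1=\theta^*$ and $\theta_2=\theta$ a limit value, including the propagation of aperiodicity and positive Harris recurrence for the non-stationary case. Your additional remarks on the hypotheses being consumed inside Corollary~\ref{cororain} and on reading identifiability through the observed law are accurate and consistent with the paper.
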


\section{Conclusions}
In this work, we have extended the consistency result of \cite{DMR} to the non-homogeneous case and we have relaxed some other 
of their assumptions
(namely on $p_2$). We have illustrated our results by two specific but representative models for which we gave general conditions
ensuring the consistency of the maximum likelihood estimator. 
Our results opens perspectives in different directions: theoretical results (such as the asymptotic normality of the MLE), 
applied statistics (namely the study of other non-homogeneous switching Markov models and their applications), 
but also the development of a R package
to make easier the practical use of these flexible models.

\begin{appendix}
\section{Consistency : proof of Theorem \ref{consistancegene}}
\label{append2}
As usual, we define the associated transition operator $Q_\theta$ as an operator acting on the set of 
bounded measurable
functions of $E\times K^s$ (it may also act on other Banach spaces $\mathcal B$) by
\begin{eqnarray*}
Q_\theta g(x_0,y_{-s+1}^{0})&=&{\mathbb E}_\theta[g(X_1,Y_{-s+2}^1)|X_0=x_0,Y_{-s+1}^0=y_{-s+1}^0]\\
   &=&\int_{E\times K} g(x_1,y_{-s+2}^1)q_\theta(x_1,y_1|x_{0},y_{-s+1}^{0})\, d\mu_0(x_1,y_1).
\end{eqnarray*}

We denote by $Q_\theta^*$ the adjoint operator of $Q_\theta$ 
defined on $\mathcal B '$ the dual space of $\mathcal B$ 
(if $Q_\theta$ acts on $\mathcal B$) 
by
$$\forall \nu\in\mathcal B ',\ \forall f\in\mathcal B,\ \ Q_\theta^*(\nu)(f)=\nu(Q_{\theta}(f)) .$$
For every integer $k\ge 0$, the measure $(Q_\theta^*)^k(\nu)$ corresponds to the distribution
of $(X_k,Y_{k-s+1}^k)$ if $\{X_l,Y_l\}_l$ is the Markov chain with transition 
operator $Q_\theta$ such that the distribution of $(X_0,Y_{-s+1}^0)$ is $\nu$. 

If $\nu\in\mathcal B'$ has a pdf $h$ with respect to $\mu:=\mathfrak m_E\times\mathfrak  m_K^{\otimes s}$, then $Q_\theta^*\nu$ is also 
absolutely continuous with respect to $\mu$ and its pdf, written $Q_\theta^*h$, is given by
\begin{equation}
\nonumber
 Q_\theta^*h(x_0,y_{-s+1}^0) :=   \int_{E\times K}
       q_\theta(x_0,y_0|x_{-1},y_{-s}^{-1}) h(x_{-1},y_{-s}^{-1})\, d\mu_0(x_{-1},y_{-s}).
\end{equation}
Observe that, due to the particular form of $q_\theta$, for every integer $k\ge s$ and every $P=(x_{-k},y_{-k-s+1}^{-k})
\in E\times K^s$, the measure
$(Q^*_\theta)^k\delta_P$ (where $\delta_P$ is the Dirac measure at $P$) is absolutely continuous with respect
to $\mu:=\mathfrak m_E\times\mathfrak  m_K^{\otimes s}$; its pdf $Q_\theta^{*k}(\cdot|P)$ is given by
$$ Q_\theta^{*k}(x_0,y_{-s+1}^0|P)
   =\int_{E^{k-1}\times K^{k-s}}\prod_{i=1-k}^{0}q_\theta(x_i,y_i|x_{i-1},y_{i-s}^{i-1})\, 
     d\mathfrak m_E^{\otimes (k-1)}
       (x_{-k+1}^{-1})d\mathfrak m_K^{\otimes (k-s)}(y_{-k+1}^{-s}).$$
More generally, for every initial measure $\nu$ and every $k\ge s$, $Q_\theta^{*k}\nu$ is absolutely continuous
with respect to $\mu$ and its pdf $[Q_\theta^{*k}\nu]$ is given by
\begin{equation}\label{Q*mes}
 [Q_\theta^{*k}\nu](\cdot) = \int_{E\times K^s} Q_\theta^{*k}(\cdot|P)\, d\nu(P)  .
\end{equation}
We suppose that, for every $\theta\in\Theta$, there exists an invariant probability measure $\bar\nu_\theta$
for $Q_\theta^*$. Observe that, due to \eqref{Q*mes}, $\bar\nu_\theta$
admits a pdf $h_\theta$ with respect to $\mu$.

We identify $(X_k,Y_k)_{k}$ with the canonical Markov chain $\{(X_0,Y_0)\circ \tau^k\}_k$ defined 
on $\Omega_+:=(E\times K)^{\mathbb N}$ by
$X_0((x_k,y_k)_{k})=x_0$, $Y_0((x_k,y_k)_{k})=y_0$,
$\tau_+$ being the shift 
($\tau_+((x_k,y_k)_k)=(x_{k+1},y_{k+1})_k$). 
We endow $\Omega_+$ with its Borel $\sigma$-algebra $\mathcal F_+$. We 
denote by $\bar{\mathbb P}_\theta$
the probability measure on $(\Omega_+,\mathcal F_+)$ associated to the invariant measure
$\bar\nu_\theta$ and by $\bar{\mathbb E}_\theta$ the corresponding expectation.
The ergodicity of  $(X_k,Y_k)_{k}$ is equivalent to the ergodicity of $(\Omega,\mathcal F,\bar{\mathbb P}_{\theta^*},\tau)$.

We now follow and adapt the proof
of \cite[Thm. 1]{DMR} (see Lemmas \ref{lem5} and \ref{lem7}).
We do not give all the details of the proofs when they are a direct rewriting of \cite{DMR}.
First, we consider the stationary case.
Let $\tau$ be the full shift on $\Omega:=(E\times K)^{\mathbb Z}$.
For every $k\in\mathbb Z$, we identify $X_k$ with $X_0\circ \tau^k$ and $Y_k$ with
$Y_0\circ \tau^k$, where $X_0((x_m,y_m)_{m\in\mathbb Z}):=x_0$ and $Y_0((x_m,y_m)_{m\in\mathbb Z})=y_0$.
\subsection{Likelihood and stationary likelihood}
We start by recalling a classical fact in the context of Markov chains (and the proof
of which is direct).
\begin{fact}\label{CMinh}
Let $m$ and $n$ belong to $\mathbb Z$ with $m\le n$.
Under $\bar {\mathbb P}_\theta$, conditionally to $(Y_{m-s+1}^n)$, $(X_k)_{k\in\{m,...,n\}}$ is a 
(possibly nonhomogeneous) Markov chain. 
Moreover, under $\bar {\mathbb P}_\theta$, the conditional pdf (wrt $\mathfrak m_E$) of $X_k$ given 
$(X_m^{k-1},Y_{m-s+1}^n)$
is given by
\begin{equation}\label{facteq1}
 {p}_\theta(X_k=x_k|X_m^{k-1},Y_{m-s+1}^n)
     =\frac{p_\theta(Y_{k}^n,X_k=x_k|X_{k-1},Y_{k-s}^{k-1})}
         {p_\theta(Y_{k}^n|X_{k-1},Y_{k-s}^{k-1})}\ \ \bar{\mathbb P}_\theta-a.s.,
\end{equation}
with
\begin{equation}\label{facteq2}
 p_\theta(Y_{k}^n,X_k=x_k|X_{k-1}=x_{k-1},Y_{k-s}^{k-1})
      :=\int_{E^{n-k}}\prod_{j=k}^nq_\theta(x_j,Y_j|x_{j-1},Y_{j-s}^{j-1})\, d\mathfrak m_E^{\otimes (n-k)}(x_{k+1}^n)
\end{equation}
and
\begin{equation}\label{facteq3}
  p_\theta(Y_{k}^n|X_{k-1},Y_{k-s}^{k-1})
      :=\int_E  p_\theta(Y_{k}^n,X_k=x_k|X_{k-1},Y_{k-s}^{k-1})\, d\mathfrak m_E(x_k).
\end{equation}
\end{fact}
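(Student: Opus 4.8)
The plan is to reduce everything to the order-one Markov structure of the process $\{(X_k,Y_{k-s+1}^k)\}_k$ and to a single chain-rule factorization of the joint density, after which both the Markov property of the conditional chain and formula \eqref{facteq1} drop out of an explicit ratio. First I would record the basic factorization coming from Hypothesis \ref{hyp1}: for $m\le n$, the conditional density of $(X_m^n,Y_m^n)$ given $(X_{m-1},Y_{m-s}^{m-1})$ equals $\prod_{k=m}^n q_\theta(x_k,y_k|x_{k-1},y_{k-s}^{k-1})$, since $q_\theta=p_{1,\theta}\,p_{2,\theta}$. Integrating out the states that do not appear on the left-hand sides of \eqref{facteq2}--\eqref{facteq3} then identifies $p_\theta(Y_k^n,X_k=x_k|X_{k-1},Y_{k-s}^{k-1})$ and $p_\theta(Y_k^n|X_{k-1},Y_{k-s}^{k-1})$ with the densities defined there; in particular the former is the density of $(X_k,Y_k^n)$ given $(X_{k-1},Y_{k-s}^{k-1})$.

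The key step is to factorize the joint density $p_\theta(X_m^k,Y_{m-s+1}^n)$. Splitting $Y_{m-s+1}^n=(Y_{m-s+1}^{k-1},Y_k^n)$ and $X_m^k=(X_m^{k-1},X_k)$ and applying the chain rule gives a forward factor $p_\theta(X_m^{k-1},Y_{m-s+1}^{k-1})$ times $p_\theta(X_k,Y_k^n|X_m^{k-1},Y_{m-s+1}^{k-1})$. Here I would invoke that $(X_{k-1},Y_{k-s}^{k-1})$ is a sufficient description of the order-one state at time $k-1$, so that the block $(X_k,Y_k^n)$ is conditionally independent of the strict past $(X_m^{k-2},Y_{m-s+1}^{k-s-1})$ given $(X_{k-1},Y_{k-s}^{k-1})$. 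This yields
\begin{equation}
\nonumber
p_\theta(X_m^k,Y_{m-s+1}^n)=p_\theta(X_m^{k-1},Y_{m-s+1}^{k-1})\,p_\theta(X_k,Y_k^n|X_{k-1},Y_{k-s}^{k-1}),
\end{equation}
and, integrating the identity over $x_k$,
\begin{equation}
\nonumber
p_\theta(X_m^{k-1},Y_{m-s+1}^n)=p_\theta(X_m^{k-1},Y_{m-s+1}^{k-1})\,p_\theta(Y_k^n|X_{k-1},Y_{k-s}^{k-1}).
\end{equation}

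Taking the ratio of these two identities, the common forward factor $p_\theta(X_m^{k-1},Y_{m-s+1}^{k-1})$ cancels and I obtain exactly \eqref{facteq1}. Since the resulting expression depends on $X_m^{k-1}$ only through $X_{k-1}$, the conditional law of $X_k$ given $(X_m^{k-1},Y_{m-s+1}^n)$ coincides with its law given $(X_{k-1},Y_{m-s+1}^n)$, which is precisely the asserted Markov property of $(X_k)_{k\in\{m,\dots,n\}}$ under $\bar{\mathbb P}_\theta(\cdot\,|\,Y_{m-s+1}^n)$. The point needing the most care --- and the only genuine obstacle --- is the justification of the conditional independence used above: it rests on the equivalence between the order-$s$ chain $\{(X_k,Y_k)\}_k$ and the order-one chain $\{(X_k,Y_{k-s+1}^k)\}_k$ together with the conditional independences encoded in Hypothesis \ref{hyp1}, and on checking that conditioning on the extra future observations $Y_k^n$ does not re-couple $X_k$ to the distant past. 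Stationarity under $\bar{\mathbb P}_\theta$ guarantees all densities above are well defined, and because the forward factor cancels its explicit form (involving $h_\theta$) is never needed.
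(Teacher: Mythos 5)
Your proposal is correct: the factorization of the joint density under $\bar{\mathbb P}_\theta$ via the order-one Markov structure of $\{(X_k,Y_{k-s+1}^k)\}_k$, followed by integrating out $x_k$ and taking the ratio, is exactly the ``direct'' verification the paper has in mind --- the paper states this as a classical fact and omits the proof entirely. Your argument, including the final observation that the right-hand side of \eqref{facteq1} depends on $X_m^{k-1}$ only through $X_{k-1}$ (which yields the conditional Markov property by the tower rule), fills that gap correctly.
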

Using (\ref{HH1}), (\ref{HH3}) and (\ref{HH3b}), we observe that
the quantities appearing in this fact are well-defined.
Due to Fact \ref{CMinh}, the quantity
$\bar {p}_\theta(X_k=x_k|X_{k-1},Y_{m-s+1}^n)$ is equal to
$$\frac{\int_{E^{n-k+1}} (\prod_{j=k+1}^na_j)
    p_{1,\theta}(\tilde x_k|X_{k-1},Y_{k-s}^{k-1})
        p_{2,\theta}(Y_k|\tilde x_k,Y_{k-s}^{k-1}) \, d\delta_{x_k}(\tilde x_k)\, d\mathfrak m_E^{\otimes (n-k)}(\tilde x_{k+1}^n)}
    {\int_{E^{n-k+1}}(\prod_{j=k+1}^na_j)
         p_{1,\theta}(\tilde x_k|X_{k-1},Y_{k-s}^{k-1})  
     p_{2,\theta}(Y_k|\tilde x_k,Y_{k-s}^{k-1}) \, d\mathfrak m_E^{\otimes (n-k+1)}(\tilde x_{k}^n)},$$
with $a_j:=q_\theta(\tilde x_j,Y_j|\tilde x_{j-1},Y_{j-s}^{j-1})$.
Therefore
\begin{equation}\label{toto}
\bar {p}_{\theta}(X_k=x_k|X_{k-1},Y_{m-s+1}^n)\ge
\frac{ p_{1,-}}{p_{1,+}}
    \beta(x_k),\ \mbox{with}\ \ 
    \beta(x_k):=\frac{p_{\theta}(Y_k^n|X_k=x_k,Y_{k-s}^{k-1})}
      {\int_{E}p_{\theta}(Y_k^n|X_k=\tilde x_k,Y_{k-s}^{k-1})\, d\mathfrak m_E(\tilde x_k)}.
\end{equation}
{}From this last inequality (since $0<p_{1,-}<p_{1,+}<\infty$), we directly get the following (from \cite{Lindvall}).
\begin{corollary}(as \cite[Cor. 1]{DMR})\label{coro}
For all $m\le k\le n$ and every probability measures $\mathfrak m_1$ and $\mathfrak m_2$ on $E$, we have, 
$\bar{\mathbb P}_\theta-a.s.$
$$\left\Vert \int_E\bar {\mathbb P}_\theta(X_k\in\cdot|X_m=x_m,Y_{m-s+1}^n)\, d\mathfrak m_1(x_m)
            - \int_E\bar {\mathbb P}_\theta(X_k\in\cdot|X_m= x_m,Y_{m-s+1}^n)\, d\mathfrak m_2(x_m)  \right\Vert_{TV}\le
     \rho^{k-m},$$
with $\rho:=1-\frac{p_{1,-}}{p_{1,+}}$.
\end{corollary}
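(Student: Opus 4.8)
The plan is to turn the uniform minorization (\ref{toto}) into a one-step Dobrushin contraction for the conditional law of $X_k$ and then iterate. By Fact \ref{CMinh}, under $\bar{\mathbb P}_\theta$ and conditionally on $Y_{m-s+1}^n$, the sequence $(X_j)_{j\in\{m,\dots,n\}}$ is a (possibly nonhomogeneous) Markov chain; I would write $\bar K_j(x_{j-1},\cdot):=\bar{p}_\theta(X_j\in\cdot\,|\,X_{j-1}=x_{j-1},Y_{m-s+1}^n)$ for its one-step transition kernel, which is a random kernel measurable with respect to $Y_{m-s+1}^n$. For any probability measure $\mathfrak m$ on $E$, the inner integral $\int_E\bar{\mathbb P}_\theta(X_k\in\cdot\,|\,X_m=x_m,Y_{m-s+1}^n)\,d\mathfrak m(x_m)$ is then exactly $\mathfrak m\,\bar K_{m+1}\cdots\bar K_k$, the law of $X_k$ started from $\mathfrak m$ at time $m$ and propagated through these kernels; this identification is just the Chapman--Kolmogorov relation supplied by the conditional Markov property of Fact \ref{CMinh}.

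First I would record the minorization in kernel form: inequality (\ref{toto}) says that, $\bar{\mathbb P}_\theta$-a.s., for every $x_{j-1}$ one has $\bar K_j(x_{j-1},dx_j)\ge \frac{p_{1,-}}{p_{1,+}}\,\beta(x_j)\,d\mathfrak m_E(x_j)$, where $\beta$ depends only on $Y_{m-s+1}^n$ and not on the starting point $x_{j-1}$, and is a probability density on $E$ (it is normalized by its very definition in (\ref{toto})). This is a Doeblin minorization with constant $\varepsilon:=p_{1,-}/p_{1,+}\in(0,1]$ and minorizing probability $\beta\,d\mathfrak m_E$, uniform in $x_{j-1}$.

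The classical consequence, via Lindvall's coupling inequality (equivalently, the bound on the Dobrushin contraction coefficient), is that each $\bar K_j$ contracts total variation: for all probability measures $\nu_1,\nu_2$ on $E$,
$$\left\Vert \nu_1\bar K_j-\nu_2\bar K_j\right\Vert_{TV}\le(1-\varepsilon)\left\Vert\nu_1-\nu_2\right\Vert_{TV}=\rho\left\Vert\nu_1-\nu_2\right\Vert_{TV}.$$
The coupling realizing this moves both chains according to the common component $\beta\,d\mathfrak m_E$ with probability $\varepsilon$ (coalescing them) and otherwise according to the residual kernels, so the two images disagree with probability at most $1-\varepsilon$. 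Composing the $k-m$ kernels $\bar K_{m+1},\dots,\bar K_k$ and using submultiplicativity of the contraction factor yields
$$\left\Vert \mathfrak m_1\bar K_{m+1}\cdots\bar K_k-\mathfrak m_2\bar K_{m+1}\cdots\bar K_k\right\Vert_{TV}\le\rho^{k-m}\left\Vert\mathfrak m_1-\mathfrak m_2\right\Vert_{TV}\le\rho^{k-m},$$
the last step using $\left\Vert\mathfrak m_1-\mathfrak m_2\right\Vert_{TV}\le1$ for probability measures. Rewriting the left-hand side through the identification of the first paragraph gives exactly the asserted bound.

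The only point needing care, and the main obstacle, is the measurability and the ``a.s.'' bookkeeping: the kernels $\bar K_j$ are random (measurable functions of $Y_{m-s+1}^n$), so the contraction must be established conditionally on $Y_{m-s+1}^n$ and, being deterministic once these kernels are frozen, then transferred to an $\bar{\mathbb P}_\theta$-a.s. statement. Since (\ref{toto}) holds off a null set and only finitely many indices $m\le j\le k\le n$ are involved, the exceptional sets can be unioned so that the minorization, hence the whole iterate, holds simultaneously on a single set of full measure. Once the conditional Markov structure of Fact \ref{CMinh} and the uniform lower bound (\ref{toto}) are in place, the remainder is the standard Dobrushin/Lindvall argument and nothing model-specific enters.
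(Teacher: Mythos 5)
Your proposal is correct and follows essentially the same route as the paper: the paper likewise deduces the corollary directly from the conditional Markov structure of Fact \ref{CMinh} and the uniform minorization (\ref{toto}), invoking Lindvall's coupling inequality for the one-step Doeblin contraction and iterating. You have merely spelled out the Dobrushin/coupling details (and the a.s.\ bookkeeping) that the paper leaves implicit by citing \cite{Lindvall}.
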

Observe that the log-likelihood $\ell_n(\theta,x_0)$ satisfies
\begin{equation}
\nonumber
\ell_n(\theta,x_0)=\sum_{k=1}^n\log  p_\theta(Y_k|X_0=x_0,Y_{-s+1}^{k-1})
\ \ \bar{\mathbb P}_\theta-a.s.,
\end{equation}
with

$\displaystyle p_\theta(Y_k|X_0=x_0,Y_{-s+1}^{k-1})
     :=\frac{p_\theta(Y_1^k|X_0=x_0,Y_{-s+1}^0)}{p_\theta(Y_1^{k-1}|X_0=x_0,Y_{-s+1}^0)}$
$$=\int_{E^2}q_\theta(x_k,Y_k|x_{k-1},Y_{k-s}^{k-1})
    p_\theta(X_{k-1}=x_{k-1}|X_0=x_0,Y_{-s+1}^{k-1})\, d\mathfrak m_E^{\otimes 2}(x_k,x_{k-1}).$$
Let us now define
the stationary log-likelihood $\ell_n(\theta)$ by
\begin{equation}
\nonumber
\ell_n(\theta):=\sum_{k=1}^n\log\bar {p}_{\theta}(Y_k|Y_{-s+1}^{k-1}),
\end{equation}
with
$$\bar {p}_{\theta}(Y_k|Y_{-s+1}^{k-1}):=
    \int_{E^2}q_\theta(x_k,Y_k|x_{k-1},Y_{k-s}^{k-1})
    \bar{p}_\theta(X_{k-1}=x_{k-1}|Y_{-s+1}^{k-1})\, d\mathfrak m_E^{\otimes 2}(x_k,x_{k-1})$$
and
$$\bar{p}_\theta(X_{k-1}=x_{k-1}|Y_{-s+1}^{k-1})
      :=\int_E p_\theta(X_{k-1}=x_{k-1}|X_0=x_0,Y_{-s+1}^{k-1})
         \bar p_\theta(X_0=x_0|Y_{-s+1}^{k-1})\, d\mathfrak m_E(x_0).$$
\begin{lemma}(as \cite[Lem. 2]{DMR}) We have
\begin{equation}\label{(**)}
\sup_{x_0\in E}\sup_{\theta\in\Theta}|\ell_n(\theta,x_0)-\ell_n(\theta)|\le \frac 1{(1-\rho)^2}\ \ 
       \ \bar{\mathbb P}_{\theta^*}-a.s.,
\end{equation}
\end{lemma}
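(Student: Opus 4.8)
The plan is to control $\ell_n(\theta,x_0)-\ell_n(\theta)$ term by term using the forgetting of the hidden chain. First I would write
$$\ell_n(\theta,x_0)-\ell_n(\theta)=\sum_{k=1}^n\log\frac{p_\theta(Y_k|X_0=x_0,Y_{-s+1}^{k-1})}{\bar p_\theta(Y_k|Y_{-s+1}^{k-1})},$$
and note from the expressions recalled just above the statement that numerator and denominator are obtained by integrating the \emph{same} kernel $q_\theta(x_k,Y_k|x_{k-1},Y_{k-s}^{k-1})$ against two different predictive laws of $X_{k-1}$: the filter $\phi_{k-1}^{x_0}:=p_\theta(X_{k-1}\in\cdot|X_0=x_0,Y_{-s+1}^{k-1})$ and the mixed filter $\bar\phi_{k-1}:=\bar p_\theta(X_{k-1}\in\cdot|Y_{-s+1}^{k-1})$. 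Thus each log-ratio compares averages of one common function of $x_{k-1}$ against two probability measures on $E$.

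Second, I would use the factorization $q_\theta=p_{1,\theta}p_{2,\theta}$ and (\ref{HH1}). With $G_k:=\int_E p_{2,\theta}(Y_k|x_k,Y_{k-s}^{k-1})\,d\mathfrak m_E(x_k)$, the function $f_k(x_{k-1}):=G_k^{-1}\int_E q_\theta(x_k,Y_k|x_{k-1},Y_{k-s}^{k-1})\,d\mathfrak m_E(x_k)$ takes values in $[p_{1,-},p_{1,+}]$ because $p_{1,-}\le p_{1,\theta}\le p_{1,+}$. Consequently both predictive densities equal $G_k\int f_k\,d\phi$ (with $\phi=\phi_{k-1}^{x_0}$, resp.\ $\bar\phi_{k-1}$), so the possibly unbounded factor $G_k$ cancels in the ratio. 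Using $|\log a-\log b|\le|a-b|/\min(a,b)$, the lower bound $\min(a,b)\ge p_{1,-}G_k$, and the elementary estimate $|\int f_k\,d\phi_{k-1}^{x_0}-\int f_k\,d\bar\phi_{k-1}|\le(p_{1,+}-p_{1,-})\|\phi_{k-1}^{x_0}-\bar\phi_{k-1}\|_{TV}$, I obtain
$$\left|\log\frac{p_\theta(Y_k|X_0=x_0,Y_{-s+1}^{k-1})}{\bar p_\theta(Y_k|Y_{-s+1}^{k-1})}\right|\le\frac{p_{1,+}-p_{1,-}}{p_{1,-}}\,\|\phi_{k-1}^{x_0}-\bar\phi_{k-1}\|_{TV}.$$

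Third, I would invoke the geometric forgetting. Since $\bar\phi_{k-1}=\int_E p_\theta(X_{k-1}\in\cdot|X_0=x_0',Y_{-s+1}^{k-1})\,\bar p_\theta(X_0=x_0'|Y_{-s+1}^{k-1})\,d\mathfrak m_E(x_0')$ is a mixture of point-initialised filters, Corollary \ref{coro} applied with $m=0$, $n=k-1$, $\mathfrak m_1=\delta_{x_0}$ and $\mathfrak m_2=\bar p_\theta(X_0\in\cdot|Y_{-s+1}^{k-1})$ gives $\|\phi_{k-1}^{x_0}-\bar\phi_{k-1}\|_{TV}\le\rho^{k-1}$. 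Since $\rho=1-p_{1,-}/p_{1,+}$ yields $\frac{p_{1,+}-p_{1,-}}{p_{1,-}}=\frac{\rho}{1-\rho}$, summing the geometric series gives $|\ell_n(\theta,x_0)-\ell_n(\theta)|\le\sum_{k=1}^n\frac{\rho^k}{1-\rho}\le\frac{\rho}{(1-\rho)^2}\le\frac1{(1-\rho)^2}$, which is the claim.

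The hard part will not be the chain of inequalities but the uniformity in $(x_0,\theta)$ and the almost-sure qualifier. The constants $\rho$ and $p_{1,\pm}$ are independent of $\theta$ and $x_0$ by (\ref{HH1}), and the contraction $\rho^{k-1}$ is the deterministic Doeblin/coupling bound stemming from the minorization (\ref{toto}); the only genuine care needed is to check that the exceptional set in Corollary \ref{coro} can be chosen independent of $\theta$, which follows because (\ref{toto}) holds for $\mathfrak m_K$-a.e.\ observation sequence simultaneously for all $\theta\in\Theta$. I would finally record that (\ref{HH1}), (\ref{HH3}) and (\ref{HH3b}) ensure $G_k\in(0,\infty)$ $\bar{\mathbb P}_{\theta^*}$-a.s., so that every term above is well defined and finite.
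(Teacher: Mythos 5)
Your proof is correct and follows essentially the same route as the paper's: it bounds each term $\log p_\theta(Y_k|X_0=x_0,Y_{-s+1}^{k-1})-\log\bar p_\theta(Y_k|Y_{-s+1}^{k-1})$ by comparing the $x_0$-initialized filter with the stationary one, uses (\ref{HH1}) to cancel the common (possibly unbounded) factor $\int_E p_{2,\theta}(Y_k|x_k,Y_{k-s}^{k-1})\,d\mathfrak m_E(x_k)$, invokes Corollary \ref{coro} for the geometric forgetting, and sums the series. The only cosmetic differences are your centering trick, which yields the constant $p_{1,+}-p_{1,-}$ in place of the paper's $p_{1,+}$ (hence the marginally sharper bound $\rho/(1-\rho)^2$), and that the paper settles the $\bar{\mathbb P}_{\theta^*}$-a.s.\ qualifier in one stroke via the assumed absolute continuity $\bar{\mathbb P}_{\theta^*}\ll\bar{\mathbb P}_\theta$ rather than through your discussion of exceptional sets.
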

\begin{proof}
We have

$\displaystyle\sup_{x_0\in E}| {p}_{\theta}(Y_k|X_0=x_0,Y_{-s+1}^{k-1})
   - \bar {p}_{\theta}(Y_k|Y_{-s+1}^{k-1})|\le$
$$\le p_{1,+}\int_{E^3}p_{2,\theta}(Y_k|x_k,Y_{k-s}^{k-1})D(x_{k-1},x_0,x)
   \bar p_\theta(X_0=x|Y_{-s+1}^{k-1})\, d\mathfrak m_E^{\otimes 3}(x,x_{k-1},x_k), $$
with
$D(x_{k-1},x_0,x):= |  p_\theta(X_{k-1}=x_{k-1}|X_0=x_0,Y_{-s+1}^{k-1})-
     p_\theta(X_{k-1}=x_{k-1}|X_0=x,Y_{-s+1}^{k-1})|.$
Due to Corollary \ref{coro}, we have
$$|{p}_{\theta}(Y_k|X_0=x_0,Y_{-s+1}^{k-1})
   - \bar {p}_{\theta}(Y_k|Y_{-s+1}^{k-1})|\le p_{1,+}\rho^{k-1}\int_{E}p_{2,\theta}(Y_k|x_k,Y_{k-s}^{k-1})\, 
   d\mathfrak m_E(x_k). $$
Since $|{p}_{\theta}(Y_k|X_0,Y_{-s+1}^{k-1})|$ and $|{p}_{\theta}(Y_k|Y_{-s+1}^{k-1})|$ are both larger than
or equal to
$$ p_{1,-} \int_{E}p_{2,\theta}(Y_k|x_k,Y_{k-s}^{k-1})\, 
   d\mathfrak m_E(x_k),$$
we obtain that
\begin{eqnarray}
\left|\log {p}_{\theta}(Y_k|X_0=x_0,Y_{-s+1}^{k-1})
   - \log\bar {p}_{\theta}(Y_k|Y_{-s+1}^{k-1})\right|
&\le& \frac{|{p}_{\theta}(Y_k|X_0=x_0,Y_{-s+1}^{k-1})
    -\bar {p}_{\theta}(Y_k|Y_{-s+1}^{k-1})|}{p_{1,-}\int_E
     p_{2,\theta}(Y_k|x_k,Y_{k-s}^{k-1})\, 
   d\mathfrak m_E(x_k)}\nonumber \\
&\le&\rho^{k-1}\frac{p_{1,+}}{p_{1,-}}=
   \frac{\rho^{k-1}}{1-\rho}\ \ \bar{\mathbb P}_\theta-a.s.\label{majo}
\end{eqnarray}
and so (\ref{(**)})
since $\bar{\mathbb P}_{\theta^*}$ is absolutely continuous
with respect to $\bar{\mathbb P}_{\theta}$ (for all $\theta$).
\end{proof}
\subsection{Asymptotic behavior of the log-likelihood}
The idea is to approximate $n^{-1}\ell_n(\theta)$ by 
$n^{-1}\sum_{k=1}^n\log
  {p}_\theta(Y_k|Y_{-\infty}^{k-1}).$
To this end, we define, for any $k\ge 0$, any $m\ge 0$ and any $x_0\in E$,
the following quantities
$$\Delta_{k,m,x}(\theta):=\log\bar{p}_\theta(Y_k|Y_{-m-s+1}^{k-1},X_{-m}=x)
\ \ \mbox{and}\ \ 
\Delta_{k,m}(\theta):=\log\bar{p}_\theta(Y_k|Y_{-m-s+1}^{k-1}).$$
With these notations, we have 
\begin{equation}\label{logvrais}
\ell_n(\theta)=\sum_{k=1}^n\Delta_{k,0}(\theta)\ \mbox{ and }\ 
\ell_n(\theta,x_0)=\sum_{k=1}^n\Delta_{k,0,x_0}(\theta).
\end{equation}
\begin{lemma}(as \cite[Lemma~3]{DMR})\label{lemme3}
With the notation $\rho$
introduced in Corollary \ref{coro}, we have $\bar{\mathbb P}_{\theta^*}$-almost surely
\begin{equation}\label{eq1}
\forall m,m'\ge 0,\ \ \sup_{\theta\in\Theta}\sup_{x,x'\in E}|\Delta_{k,m,x}(\theta)-\Delta_{k,m',x'}(\theta)|\le
       \rho^{k+\min(m,m')-1}/(1-\rho)
\end{equation}
\begin{equation}\label{eq2}
\forall m\ge 0,\ \ \sup_{\theta\in\Theta}\sup_{x\in E}|\Delta_{k,m,x}(\theta)-\Delta_{k,m}(\theta)|
     \le  \rho^{k+m-1}/(1-\rho)
\end{equation}
\begin{equation}\label{eq3}
\sup_\theta\sup_{m\ge 0}\sup_{x\in E}|\Delta_{k,m,x}(\theta)|\le\max(|\log(p_{1,+}b_+(Y_{k-s}^k))|,
  |\log(p_{1,-} b_-(Y_{k-s}^k))|)
\end{equation}
with
$$ b_-(y_{k-s}^k):=\inf_\theta \int_E p_{2,\theta}(y_k|x,y_{k-s}^{k-1})\, d\mathfrak m_E(x)$$
and\ \ 
$$ b_+(y_{k-s}^k):=\sup_\theta \int_E p_{2,\theta}(y_k|x,y_{k-s}^{k-1})\, d\mathfrak m_E(x).$$
\end{lemma}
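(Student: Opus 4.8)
The plan is to mirror the proof of the previous lemma. Each $\Delta$ is the logarithm of a one-step predictive density which, by the formula for $\bar p_\theta(Y_k|\cdots)$, is an integral of a fixed nonnegative kernel against the relevant conditional (filtering) law of $X_{k-1}$. Setting
$$G_\theta(x_{k-1}):=\int_E p_{1,\theta}(x_k|x_{k-1},Y_{k-s}^{k-1})\,p_{2,\theta}(Y_k|x_k,Y_{k-s}^{k-1})\,d\mathfrak m_E(x_k),\quad I(\theta):=\int_E p_{2,\theta}(Y_k|x_k,Y_{k-s}^{k-1})\,d\mathfrak m_E(x_k),$$
the bound (\ref{HH1}) gives $p_{1,-}I(\theta)\le G_\theta\le p_{1,+}I(\theta)$. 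Every predictive density $\bar p_\theta(Y_k|\cdots)$ equals $\int_E G_\theta\,d\Phi$ for a probability measure $\Phi$ on $E$ (the conditional law of $X_{k-1}$ attached to the conditioning in question), so it also lies between $p_{1,-}I(\theta)$ and $p_{1,+}I(\theta)$. All three estimates then follow by combining this two-sided bound with the geometric forgetting of Corollary \ref{coro}.

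For (\ref{eq3}) I would simply note that, for fixed $\theta$, $b_-(Y_{k-s}^k)\le I(\theta)\le b_+(Y_{k-s}^k)$ by definition of $b_\pm$, whence $p_{1,-}b_-\le\bar p_\theta(Y_k|\cdots)\le p_{1,+}b_+$ uniformly in $m$, $x$ and $\theta$; taking logarithms yields (\ref{eq3}). For (\ref{eq1}) and (\ref{eq2}) the two densities being compared share the same kernel $G_\theta$ and differ only through the filtering law of $X_{k-1}$; writing these as $\Phi_1,\Phi_2$, the difference of densities is $\int_E G_\theta\,d(\Phi_1-\Phi_2)$, bounded by $p_{1,+}I(\theta)\,\Vert\Phi_1-\Phi_2\Vert_{TV}$ since $0\le G_\theta\le p_{1,+}I(\theta)$. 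Dividing by the common lower bound $p_{1,-}I(\theta)$ of the two densities converts this into a bound on the log-difference and produces the factor $p_{1,+}/p_{1,-}=1/(1-\rho)$, exactly as in (\ref{majo}), so that $|\Delta_1-\Delta_2|\le\Vert\Phi_1-\Phi_2\Vert_{TV}/(1-\rho)$. It remains to bound the total variation by the asserted power of $\rho$.

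For (\ref{eq2}) this is immediate: $\bar p_\theta(X_{k-1}=\cdot|Y_{-m-s+1}^{k-1})$ is the mixture of $\bar p_\theta(X_{k-1}=\cdot|X_{-m}=x',Y_{-m-s+1}^{k-1})$ against the conditional law of $X_{-m}$, so both filters are of the form $\int_E\bar{\mathbb P}_\theta(X_{k-1}\in\cdot|X_{-m}=x',Y_{-m-s+1}^{k-1})\,d\mathfrak m_i(x')$ with $\mathfrak m_1=\delta_x$, and Corollary \ref{coro} (starting time $-m$, terminal time $k-1$) gives $\Vert\Phi_1-\Phi_2\Vert_{TV}\le\rho^{k-1+m}$, hence (\ref{eq2}).

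The delicate step, which I expect to be the main obstacle, is (\ref{eq1}) when $m\neq m'$, say $m\le m'$, since the two filters are conditioned on different observation windows $Y_{-m-s+1}^{k-1}$ and $Y_{-m'-s+1}^{k-1}$. Here I would rewrite the $m'$-filter of $X_{k-1}$ by conditioning on the intermediate variable $X_{-m}$, and then invoke two Markov arguments: the conditional Markov property of $(X_j)_j$ given the observations (Fact \ref{CMinh}) to discard the dependence on $X_{-m'}$, and the order-$s$ Markov property of the full chain $(X_j,Y_{j-s+1}^j)_j$ to discard the extra past observations $Y_{-m'-s+1}^{-m-s}$, so that
$$\bar{\mathbb P}_\theta(X_{k-1}\in\cdot|X_{-m}=x',Y_{-m'-s+1}^{k-1})=\bar{\mathbb P}_\theta(X_{k-1}\in\cdot|X_{-m}=x',Y_{-m-s+1}^{k-1}).$$
Once this identity holds, both filters are again of the form $\int_E\bar{\mathbb P}_\theta(X_{k-1}\in\cdot|X_{-m}=x',Y_{-m-s+1}^{k-1})\,d\mathfrak m_i(x')$, now with common starting time $-m=\min(m,m')$ but different initial laws $\mathfrak m_i$, and Corollary \ref{coro} yields $\Vert\Phi_1-\Phi_2\Vert_{TV}\le\rho^{k-1+\min(m,m')}$, hence (\ref{eq1}). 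The remaining manipulations are routine rewritings of the formulas for $\bar p_\theta(Y_k|\cdots)$, identical in spirit to the proof of the previous lemma.
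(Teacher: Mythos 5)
Your proposal is correct and takes essentially the same route as the paper: the paper's proof likewise writes $e^{\Delta_{k,m',x'}(\theta)}$ as a mixture of the $e^{\Delta_{k,m,x''}(\theta)}$ over the conditional law of $X_{-m}$ given $(X_{-m'}=x',Y_{-m'-s+1}^{k-1})$ (your conditional-independence identity, invoked there through Fact \ref{CMinh}), applies Corollary \ref{coro} to the filtering laws, and converts the resulting density estimate into a bound on the log-difference via the two-sided bound $p_{1,-}\int_E p_{2,\theta}(Y_k|x_k,Y_{k-s}^{k-1})\,d\mathfrak m_E(x_k)\le e^{\Delta_{k,m,x}(\theta)}\le p_{1,+}\int_E p_{2,\theta}(Y_k|x_k,Y_{k-s}^{k-1})\,d\mathfrak m_E(x_k)$, producing the factor $p_{1,+}/p_{1,-}=1/(1-\rho)$ exactly as in your write-up. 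The same two-sided bound, combined with the definitions of $b_\pm$, gives (\ref{eq3}) in both proofs, so no gap remains.
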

\begin{proof}
Assume that $m\le m'$.
We have
$$e^{\Delta_{k,m,x}(\theta)}=\int_{E^2}q_\theta(x_k,Y_k|x_{k-1},Y_{k-s}^{k-1})p_\theta
      (X_{k-1}=x_{k-1}|X_{-m}=x,Y_{-m-s+1}^{k-1})\, d\mathfrak m_E^{\otimes 2}(x_k,x_{k-1}).$$
Observe moreover that, due to Fact \ref{CMinh}, we have
$$ e^{\Delta_{k,m',x'}(\theta)}=
   \int_E e^{\Delta_{k,m,x''}(\theta)} p_\theta
      (X_{-m}=x''|X_{-m'}=x',Y_{-m'-s+1}^{k-1})\, d\mathfrak m_E(x'').$$
Therefore, according to Corollary \ref{coro}, we obtain
\begin{eqnarray*}
\left|e^{\Delta_{k,m,x}(\theta)}-e^{\Delta_{k,m',x'}(\theta)}\right|
&\le& \sup_{x"\in E}|e^{\Delta_{k,m,x}(\theta)}-e^{\Delta_{k,m,x"}(\theta)}|\\
&\le& p_{1,+}\rho^{k+m-1}\int_E p_{2,\theta}(Y_k|x_k,Y_{k-s}^{k-1})\, d\mathfrak m_E(x_k).
\end{eqnarray*}
Since 
$$\left|e^{\Delta_{k,m,x}(\theta)}\right|\ge p_{1,-}
\int_E p_{2,\theta}(Y_k|x_k,Y_{k-s}^{k-1})\, d\mathfrak m_E(x_k),$$
we get the first point.
The proof of the second point follows exactly the same scheme with the use of the following formula
$$e^{\Delta_{k,m}(\theta)}= \int_{E}e^{\Delta_{k,m,x_{-m}}(\theta)}
    \bar p_{\theta}(X_{-m}=x_{-m}|Y_{-m-s+1}^{k-1})\, d\mathfrak m_E(x_{-m}).$$
The last point comes from the fact that
$$p_{1,-} \int_E p_{2,\theta}(Y_k|x_k,Y_{k-s}^{k-1})\, d\mathfrak m_E(x_k)\le e^{\Delta_{k,m,x}(\theta)}\le 
  p_{1,+} \int_E p_{2,\theta}(Y_k|x_k,Y_{k-s}^{k-1})\, d\mathfrak m_E(x_k).$$
\end{proof}
Due to (\ref{eq1}), we get that, $\bar{\mathbb P}_{\theta^*}$-a.s.,
$(\Delta_{k,m,x}(\theta))_m$ is a (uniform in $(k,x,\theta)$) Cauchy sequence and so
converges uniformly in $(k,x,\theta)$ to some $\Delta_{k,\infty,x}(\theta)$.

Due to (\ref{eq1}) and (\ref{eq2}), $\Delta_{k,\infty,x}(\theta)$ does not depend on $x$ and
will be denoted by $\Delta_{k,\infty}(\theta)$. Moreover we have $\Delta_{k,\infty}(\theta)
=\Delta_{0,\infty}(\theta)\circ\tau^k$. 

Due to (\ref{eq3}), (\ref{HH1}), (\ref{HH3}) and (\ref{HH3b}),
$(\Delta_{k,m,x}(\theta))_{k,m,x}$ is uniformly bounded in 
$\mathbb L^1(\bar{\mathbb P}_{\theta^*})$. Therefore $\Delta_{k,\infty}(\theta)$
is in $\mathbb L^1(\bar{\mathbb P}_{\theta^*})$. Let us write
$$\ell(\theta):= \bar{\mathbb E}_{\theta^*}[\Delta_{0,\infty}(\theta)].$$
Since $(\Omega,\mathcal F,\bar{\mathbb P}_{\theta^*},\tau)$ is ergodic,
from the Birkhoff-Khinchine ergodic theorem, we have
\begin{equation}\label{birkhoff}
\lim_{n\rightarrow +\infty}n^{-1}\sum_{k=1}^n\Delta_{k,\infty}(\theta)=
       \ell(\theta)   \ \ \ \bar{\mathbb P}_{\theta^*}-a.s.\
    \mbox{ and in }\mathbb L^1(\bar{\mathbb P}_{\theta^*}).
\end{equation}
Now, due to (\ref{eq1}) and (\ref{eq2}) applied with $m=0$, we obtain
\begin{equation}\label{delta0}
\sum_{k=1}^n\sup_\theta|\Delta_{k,0}(\theta)-\Delta_{k,\infty}(\theta)|
    \le \frac {2}{(1-\rho)^2}\ \ \ \bar{\mathbb P}_{\theta^*}-a.s..
\end{equation}
Now, putting together (\ref{logvrais}), (\ref{eq2}), (\ref{birkhoff}) and (\ref{delta0}), we have
\begin{corollary}
$$
\lim_{n\rightarrow+\infty}n^{-1}\ell_n(\theta,x_0)=
   \lim_{n\rightarrow+\infty}n^{-1}\ell_n(\theta)= \ell(\theta),\ \ 
\bar{\mathbb P}_{\theta^*}-a.s.. $$
\end{corollary}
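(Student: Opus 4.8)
The plan is to read off both equalities from the ingredients already assembled: the Birkhoff--Khinchine ergodic theorem (\ref{birkhoff}) governs the average of the limiting increments $\Delta_{k,\infty}(\theta)$, while the two approximation estimates (\ref{delta0}) and (\ref{(**)}) show that first replacing $\Delta_{k,0}$ by $\Delta_{k,\infty}$, and then replacing $\ell_n(\theta)$ by $\ell_n(\theta,x_0)$, each costs only a quantity that remains bounded in $n$ and hence vanishes after division by $n$.

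First I would start from the representation (\ref{logvrais}), namely $\ell_n(\theta)=\sum_{k=1}^n\Delta_{k,0}(\theta)$, and write the decomposition
\begin{equation}
\nonumber
n^{-1}\ell_n(\theta)=n^{-1}\sum_{k=1}^n\Delta_{k,\infty}(\theta)
   +n^{-1}\sum_{k=1}^n\bigl(\Delta_{k,0}(\theta)-\Delta_{k,\infty}(\theta)\bigr).
\end{equation}
By (\ref{birkhoff}) the first term converges $\bar{\mathbb P}_{\theta^*}$-almost surely to $\ell(\theta)=\bar{\mathbb E}_{\theta^*}[\Delta_{0,\infty}(\theta)]$. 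For the second term, (\ref{delta0}) bounds the whole sum $\sum_{k=1}^n\sup_\theta|\Delta_{k,0}(\theta)-\Delta_{k,\infty}(\theta)|$ by the $n$-independent constant $2/(1-\rho)^2$, so after dividing by $n$ it tends to $0$. This yields $\lim_n n^{-1}\ell_n(\theta)=\ell(\theta)$ almost surely. For the initialized likelihood I would then invoke the uniform bound (\ref{(**)}), which gives $\sup_{x_0}\sup_\theta|\ell_n(\theta,x_0)-\ell_n(\theta)|\le (1-\rho)^{-2}$; dividing by $n$ forces $n^{-1}|\ell_n(\theta,x_0)-\ell_n(\theta)|\to 0$, so $n^{-1}\ell_n(\theta,x_0)$ shares the limit $\ell(\theta)$. (Alternatively one may use (\ref{eq2}) with $m=0$, whose geometric bound $\rho^{k-1}/(1-\rho)$ sums to $(1-\rho)^{-2}$, to control $\sum_{k}\sup_\theta|\Delta_{k,0,x_0}(\theta)-\Delta_{k,0}(\theta)|$ directly.)

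There is in fact no serious obstacle left at this stage: the corollary is a pure assembly step. All the difficulty has been front-loaded into the preceding lemmas --- establishing the geometric forgetting of the initial law through the Doeblin-type minorization (\ref{toto}) and the contraction Corollary \ref{coro}, deducing the Cauchy/contraction estimates (\ref{eq1})--(\ref{eq2}) that produce the well-defined limit $\Delta_{k,\infty}(\theta)$, and securing its $\mathbb L^1(\bar{\mathbb P}_{\theta^*})$-integrability from (\ref{eq3}) together with (\ref{HH1}), (\ref{HH3}) and (\ref{HH3b}), which is exactly what legitimizes the ergodic theorem. The only point worth flagging is that the convergence asserted here is pointwise in $\theta$, the ergodic theorem being applied for each fixed $\theta$; the uniform-in-$\theta$ strengthening needed for the eventual consistency conclusion is a separate argument, carried out later from the same bounds already recorded.
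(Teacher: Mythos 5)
Your proof is correct and follows essentially the same route as the paper, which simply puts together (\ref{logvrais}), (\ref{eq2}), (\ref{birkhoff}) and (\ref{delta0}); your use of (\ref{(**)}) in place of (\ref{eq2}) with $m=0$ for the second equality is an immaterial variant, and you even note the (\ref{eq2}) alternative yourself.
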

Still following \cite{DMR}, we have the next lemma insuring the continuity of $\theta\mapsto
\ell(\theta)$.
\begin{lemma}(as \cite[Lemma 4]{DMR})\label{lemme4}
For all $\theta\in\Theta$,
$$\lim_{\delta\rightarrow 0}\bar{\mathbb E}_{\theta^*}[\sup_{|\theta-\theta'|\le\delta}
       |\Delta_{0,\infty}(\theta)-\Delta_{0,\infty}(\theta')|]=0. $$ 
\end{lemma}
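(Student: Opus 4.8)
The plan is to deduce the statement from the dominated convergence theorem once two facts are in hand: that $\theta\mapsto\Delta_{0,\infty}(\theta)$ is, for $\bar{\mathbb P}_{\theta^*}$-almost every realization, a continuous function on the compact set $\Theta$, and that its local modulus of continuity is dominated by a single integrable random variable. Concretely, for the fixed $\theta$ in the statement set
$$G_\delta:=\sup_{|\theta'-\theta|\le\delta}|\Delta_{0,\infty}(\theta)-\Delta_{0,\infty}(\theta')|,$$
which is nonincreasing as $\delta\downarrow 0$. If $\theta'\mapsto\Delta_{0,\infty}(\theta')$ is continuous at $\theta$ then $G_\delta\downarrow 0$ pointwise, and it only remains to pass to the limit under $\bar{\mathbb E}_{\theta^*}$.

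First I would establish the almost sure continuity of $\theta\mapsto\Delta_{0,\infty}(\theta)$. By \eqref{eq1} the functions $\theta\mapsto\Delta_{0,m,x}(\theta)$ converge, as $m\to\infty$, uniformly in $(x,\theta)$ to $\Delta_{0,\infty}(\theta)$; since a uniform limit of continuous functions is continuous, it suffices to check that, for each fixed $m$ and $x$ and each fixed realization, $\theta\mapsto\Delta_{0,m,x}(\theta)=\log\bar p_\theta(Y_0|Y_{-m-s+1}^{-1},X_{-m}=x)$ is continuous. By Fact \ref{CMinh} the quantity $e^{\Delta_{0,m,x}(\theta)}$ is a finite combination of integrals over the compact space $E$ of products of $q_\theta=p_{1,\theta}p_{2,\theta}$ evaluated at the (fixed) observations; each factor $p_{1,\theta}$ is continuous in $\theta$ and lies in $[p_{1,-},p_{1,+}]$ by \eqref{HH1}, and the factors $p_{2,\theta}(Y_j|\cdot,\cdot)$ are continuous in $\theta$ and, because each depends only on a single hidden variable, can be integrated out one at a time, each contributing a finite factor by \eqref{HH3c}.

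The delicate point, and the step I expect to be the main obstacle, is the interchange of limit and integral in this last paragraph: since $p_{2,\theta}$ need not be uniformly bounded in $\theta$ (this is precisely the relaxation over \cite{DMR}), plain dominated convergence cannot be applied directly to $\prod_j p_{2,\theta}(Y_j|x_j,\cdot)$. I would instead upgrade the pointwise continuity of $p_{2,\theta}$ in the hidden variable to convergence in $L^1(\mathfrak m_E)$ of $x\mapsto p_{2,\theta_n}(Y_j|x,Y_{j-s}^{j-1})$ along $\theta_n\to\theta$, using that each normalizing integral $\int_E p_{2,\theta}(Y_j|x,Y_{j-s}^{j-1})\,d\mathfrak m_E(x)$ is finite by \eqref{HH3c} (and \eqref{HH3b}) together with a Scheffé/generalized dominated convergence argument. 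Since $e^{\Delta_{0,m,x}(\theta)}$ is bounded below by $p_{1,-}\,b_-(Y_{-s}^0)>0$ (see \eqref{eq3}), the logarithm is continuous at the limit, and the continuity of $\Delta_{0,m,x}$, hence of $\Delta_{0,\infty}$, follows.

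Finally I would carry out the dominated convergence in expectation. Passing to the limit $m\to\infty$ in \eqref{eq3} (which is legitimate by the uniform convergence above) gives
$$\sup_\theta|\Delta_{0,\infty}(\theta)|\le\max\big(|\log(p_{1,+}b_+(Y_{-s}^0))|,\,|\log(p_{1,-}b_-(Y_{-s}^0))|\big)=:H,$$
so that $G_\delta\le 2H$ for every $\delta$. By \eqref{HH1}, \eqref{HH3} and \eqref{HH3b} one has $\bar{\mathbb E}_{\theta^*}[H]<\infty$, so $H\in L^1(\bar{\mathbb P}_{\theta^*})$ dominates the nonincreasing family $(G_\delta)_\delta$. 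Since $G_\delta\downarrow 0$ $\bar{\mathbb P}_{\theta^*}$-almost surely by the continuity just established, the dominated convergence theorem yields $\lim_{\delta\to 0}\bar{\mathbb E}_{\theta^*}[G_\delta]=0$, which is the claim.
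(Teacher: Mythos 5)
Your overall strategy coincides with the paper's proof of Lemma \ref{lemme4}: write $\Delta_{0,\infty}(\theta)$ as the uniform (in $(x,\theta)$) limit of the finite-horizon quantities $\Delta_{0,m,x}(\theta)$ given by \eqref{eq1}, deduce $\bar{\mathbb P}_{\theta^*}$-a.s.\ continuity of $\theta\mapsto\Delta_{0,\infty}(\theta)$ from continuity of each $\Delta_{0,m,x}$, and then pass to the limit in expectation by dominated convergence, the domination coming from \eqref{eq3} together with \eqref{HH1}, \eqref{HH3} and \eqref{HH3b}. Those two outer layers of your argument are exactly the paper's and are correct.

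The place where you depart from the paper is the continuity of $\theta\mapsto\Delta_{0,m,x}(\theta)$, which you rightly identify as the delicate point when $p_{2,\theta}$ is not uniformly bounded; but the fix you propose does not close it. Scheff\'e's lemma upgrades a.e.\ convergence $p_{2,\theta_n}(Y_j|\cdot,Y_{j-s}^{j-1})\to p_{2,\theta}(Y_j|\cdot,Y_{j-s}^{j-1})$ to $L^1(\mathfrak m_E)$ convergence only if one already knows that the integrals converge, i.e.\ that $\int_E p_{2,\theta_n}(Y_j|x,Y_{j-s}^{j-1})\,d\mathfrak m_E(x)\to\int_E p_{2,\theta}(Y_j|x,Y_{j-s}^{j-1})\,d\mathfrak m_E(x)$. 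The hypotheses \eqref{HH3b}--\eqref{HH3c} give only finiteness of $\sup_\theta\int_E p_{2,\theta}$, not continuity of $\theta\mapsto\int_E p_{2,\theta}$, and that continuity is itself an instance of the interchange of limit and integral you are trying to justify; so the argument as written is circular (a Pratt-type generalized dominated convergence argument suffers from the same defect, since it needs dominating functions that converge in $L^1$). For the record, the paper does not attempt this refinement at all: its proof simply writes $e^{\Delta_{0,m,x}(\theta)}$ as a ratio of integrals of $\prod_\ell q_\theta$ and invokes the continuity of $\theta\mapsto q_\theta$, the interchange being unproblematic in all its applications (where either $E$ is finite, so integrals over $E$ are finite sums, or $p_{2,\theta}$ is uniformly bounded). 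To make your stronger claim rigorous one would need an additional local domination assumption, e.g.\ that for each $\theta$ there are $\delta_0>0$ and an $\mathfrak m_E$-integrable function dominating $p_{2,\theta'}(Y_j|\cdot,Y_{j-s}^{j-1})$ for all $|\theta'-\theta|\le\delta_0$, $\bar{\mathbb P}_{\theta^*}$-a.s.
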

\begin{proof}
We recall that $\Delta_{0,\infty}=\lim_{m\rightarrow\infty}\Delta_{0,m,x}(\theta)$ (for every $x\in E$) with
$$\Delta_{0,m,x}(\theta)=\log\frac{\int_{E^m}\prod_{\ell=-m+1}^0q_\theta(x_\ell,Y_\ell|x_{\ell-1},
       Y_{\ell-s}^{\ell-1})\, dm_E^{\otimes m}(x_{-m+1}^0)\, d\delta_x(x_{-m})}{\int_{E^{m-1}}\prod_{\ell=-m+1}^{-1}
       q_\theta(x_\ell,Y_\ell|x_{\ell-1},
       Y_{\ell-s}^{\ell-1})\, dm_E^{\otimes (m-1)}(x_{-m+1}^{-1})\, d\delta_x(x_{-m})} .$$
Since the maps $\theta\mapsto q_\theta(x_\ell,y_\ell|x_{\ell-1},
       y_{\ell-s}^{\ell-1},y_{\ell})$ are continuous, $\Delta_{0,m,x}$ is 
$\bar{\mathbb P}_{\theta^*}$-almost surely continuous.
The uniform convergence result proved above insures that $\Delta_{0,\infty}$ is also 
$\bar{\mathbb P}_{\theta^*}$-almost surely continuous.
Hence
$$\forall\theta,\ \ \lim_{\delta\rightarrow 0} \sup_{\theta':|\theta-\theta'|\le\delta}
       |\Delta_{0,\infty}(\theta)-\Delta_{0,\infty}(\theta')|=0 \ \ \bar{\mathbb P}_{\theta^*}-a.s..$$
Now, the result follows from the Lebesgue dominated convergence theorem, due to (\ref{eq3}), (\ref{HH1}),
(\ref{HH3}) and (\ref{HH3b}).
\end{proof}
\begin{lemma}(as \cite[Prop. 2]{DMR})\label{prop2}
We have
$$\lim_{n\rightarrow +\infty}\sup_{\theta\in\Theta}|n^{-1}\ell_n(\theta,x_0)-\ell(\theta)|=0,
    \ \ \ \bar{\mathbb P}_{\theta^*}-a.s..$$
\end{lemma}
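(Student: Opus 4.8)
The plan is to reduce the claim to a uniform ergodic theorem for the stationary approximations $\Delta_{k,\infty}(\theta)=\Delta_{0,\infty}(\theta)\circ\tau^k$, and then to exploit the compactness of $\Theta$ together with the $L^1$-equicontinuity furnished by Lemma \ref{lemme4}.

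First I would dispose of the difference between the true log-likelihood and its limiting stationary counterpart. Using the decomposition $\ell_n(\theta,x_0)=\sum_{k=1}^n\Delta_{k,0,x_0}(\theta)$ from \eqref{logvrais} and inequality \eqref{eq1} with $m=0$ (letting $m'\to\infty$ and invoking the uniform convergence of $\Delta_{k,m,x}$ to $\Delta_{k,\infty}$), one gets $|\Delta_{k,0,x_0}(\theta)-\Delta_{k,\infty}(\theta)|\le\rho^{k-1}/(1-\rho)$, whence
$$\sup_{x_0\in E}\sup_{\theta\in\Theta}\left|\ell_n(\theta,x_0)-\sum_{k=1}^n\Delta_{k,\infty}(\theta)\right|\le\frac{1}{(1-\rho)^2},\qquad\bar{\mathbb P}_{\theta^*}\text{-a.s.}$$
Dividing by $n$, this contribution vanishes uniformly in $\theta$, so it suffices to prove
$$\lim_{n\to\infty}\sup_{\theta\in\Theta}\left|n^{-1}\sum_{k=1}^n\Delta_{k,\infty}(\theta)-\ell(\theta)\right|=0,\qquad\bar{\mathbb P}_{\theta^*}\text{-a.s.}$$

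Next I would set up a covering argument. Fix $\varepsilon>0$. By Lemma \ref{lemme4}, for each $\theta$ there is a radius $\delta_\theta>0$ with $\bar{\mathbb E}_{\theta^*}\big[\sup_{|\theta'-\theta|\le\delta_\theta}|\Delta_{0,\infty}(\theta')-\Delta_{0,\infty}(\theta)|\big]<\varepsilon$; here the supremum is measurable because $\theta\mapsto\Delta_{0,\infty}(\theta)$ is $\bar{\mathbb P}_{\theta^*}$-a.s.\ continuous (as established in the proof of Lemma \ref{lemme4}) and integrable by \eqref{eq3}, \eqref{HH1}, \eqref{HH3} and \eqref{HH3b}. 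The open balls $\{|\theta'-\theta|<\delta_\theta\}$ cover $\Theta$; by compactness I extract a finite subcover centred at $\theta_1,\dots,\theta_N$ with radii $\delta_1,\dots,\delta_N$. For any $\theta$ in the ball around $\theta_j$ I would bound
\begin{align*}
\left|n^{-1}\sum_{k=1}^n\Delta_{k,\infty}(\theta)-\ell(\theta)\right|
&\le n^{-1}\sum_{k=1}^n\Big(\sup_{|\theta'-\theta_j|\le\delta_j}|\Delta_{0,\infty}(\theta')-\Delta_{0,\infty}(\theta_j)|\Big)\circ\tau^k\\
&\quad+\left|n^{-1}\sum_{k=1}^n\Delta_{k,\infty}(\theta_j)-\ell(\theta_j)\right|+|\ell(\theta_j)-\ell(\theta)|.
\end{align*}

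Finally I would pass to the limit. The first right-hand term converges $\bar{\mathbb P}_{\theta^*}$-a.s.\ to $\bar{\mathbb E}_{\theta^*}\big[\sup_{|\theta'-\theta_j|\le\delta_j}|\Delta_{0,\infty}(\theta')-\Delta_{0,\infty}(\theta_j)|\big]<\varepsilon$ by the Birkhoff--Khinchine theorem applied to the $L^1$ function in brackets; the middle term tends to $0$ by \eqref{birkhoff}; and the last term is $\le\varepsilon$ by the same expectation bound, since $\ell(\theta_j)-\ell(\theta)=\bar{\mathbb E}_{\theta^*}[\Delta_{0,\infty}(\theta_j)-\Delta_{0,\infty}(\theta)]$. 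Taking the maximum over the finitely many $j$, I conclude $\limsup_n\sup_{\theta\in\Theta}|n^{-1}\sum_{k=1}^n\Delta_{k,\infty}(\theta)-\ell(\theta)|\le 2\varepsilon$ almost surely; letting $\varepsilon\downarrow 0$ along a sequence gives the desired uniform convergence, and combining with the first step finishes the proof. The main obstacle is precisely interchanging the supremum over $\theta$ with the ergodic averaging: this is handled by trading the continuous supremum for a finite cover and controlling each cell through the oscillation functions, whose integrability and arbitrarily small expectation come from \eqref{eq3} and Lemma \ref{lemme4} respectively.
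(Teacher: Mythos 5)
Your proof is correct and is essentially the paper's own argument: the paper defers to the proof of Prop.~2 in \cite{DMR}, which is exactly this reduction to the stationary approximations $\Delta_{k,\infty}(\theta)$ followed by a finite-cover/stochastic-equicontinuity argument combining the compactness of $\Theta$, Lemma \ref{lemme4}, and the Birkhoff--Khinchine theorem applied to the oscillation functions. The only cosmetic difference is that you pass directly from $\ell_n(\theta,x_0)$ to $\sum_{k=1}^n\Delta_{k,\infty}(\theta)$ via \eqref{eq1}, whereas the paper routes through the stationary likelihood $\ell_n(\theta)$ using \eqref{(**)} and \eqref{delta0}; both rest on the same exponential bounds of Lemma \ref{lemme3}.
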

Lemma \ref{prop2} can be deduced exactly as in the proof of \cite[Prop. 2]{DMR}.
We do not rewrite the proof, but mention that it uses (\ref{(**)}), the compacity of $\Theta$,
the continuity of $\ell$, (\ref{delta0}), the ergodicity of $\bar{\mathbb P}_{\theta^*}$
and Lemma \ref{lemme4}.
\begin{lemma}\label{lem5}(as \cite[Lemma 5]{DMR})
For every $k\le \ell$, we have
$$\lim_{j\rightarrow -\infty}\sup_{i\le j}
   |\bar{p}_\theta(Y_k^\ell|Y_{i-s+1}^j)-\bar{p}_\theta(Y_k^\ell)|=0\ \ 
   \mbox{in}\ \bar{\mathbb P}_{\theta^*}-probability .$$
\end{lemma}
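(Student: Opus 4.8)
The plan is to carry out the whole argument under the stationary measure $\bar{\mathbb P}_\theta$ and only transfer to $\bar{\mathbb P}_{\theta^*}$ at the very end: since $\bar{\mathbb P}_{\theta^*}\ll\bar{\mathbb P}_\theta$ for every $\theta$ (an assumption of Theorem \ref{consistancegene}), convergence in $\bar{\mathbb P}_\theta$-probability implies convergence in $\bar{\mathbb P}_{\theta^*}$-probability, so it suffices to work with the process stationary under $\theta$. Fix $k\le\ell$ and take $j<k$. Write $P_j:=(X_j,Y_{j-s+1}^j)$ for the natural state at time $j$ and $\pi_{i,j}:=\bar p_\theta(X_j\in\cdot\,|\,Y_{i-s+1}^j)$ for the smoothing law of $X_j$ given the observation window. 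Because the chain is Markov of order $s$, the future from time $j$ depends on the past only through $P_j$; hence, by Fact \ref{CMinh}, for every $i\le j$ (so that $Y_{j-s+1}^j$ is part of the conditioning),
\[
\bar p_\theta(Y_k^\ell\,|\,Y_{i-s+1}^j)=\int_E G_j(x_j)\,\pi_{i,j}(dx_j),\qquad G_j(x_j):=\bar p_\theta(Y_k^\ell\,|\,X_j=x_j,Y_{j-s+1}^j).
\]
Everything reduces to the integrand $G_j$, and I must establish two things: (a) that $G_j$ depends very weakly on the hidden starting point $x_j$, which will absorb the supremum over $i$; and (b) that $G_j$ converges to the unconditional density, which uses the ergodicity in (\ref{HH4}).

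For part (a) I would show that the oscillation of $G_j$ over $x_j\in E$ is geometrically small in the gap $k-j$: $\bar{\mathbb P}_\theta$-almost surely there is $\eta_j\to0$ (as $j\to-\infty$) with $\sup_{x_j,x_j'\in E}|\log G_j(x_j)-\log G_j(x_j')|\le\eta_j$. This is the forgetting of the initial hidden state already used in Corollary \ref{coro} and Lemma \ref{lemme3}: the dependence of $G_j$ on $x_j$ enters only through the first transition $p_{1,\theta}(\cdot|x_j,Y_{j-s+1}^j)$, and the lower bound $p_{1,-}>0$ from (\ref{HH1}) provides a Doeblin minorization which contracts this dependence by a fixed factor at each of the $\simeq k-j$ forward steps separating $X_j$ from the block $Y_k^\ell$; this is obtained by rerunning the coupling estimate behind (\ref{eq1})–(\ref{eq2}) for the prediction filter over the gap. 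Granting this, for any realization of the state $X_j$ one gets, uniformly in $i\le j$,
\[
\frac{\bar p_\theta(Y_k^\ell\,|\,Y_{i-s+1}^j)}{G_j(X_j)}=\int_E\frac{G_j(x_j)}{G_j(X_j)}\,\pi_{i,j}(dx_j)\in[e^{-\eta_j},e^{\eta_j}],
\]
so this ratio tends to $1$ uniformly in $i$.

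For part (b) I would apply (\ref{HH4}) to the state $P_j$. By stationarity the conditional density of $(X_k,Y_{k-s+1}^k)$ given $P_j$ is $Q_\theta^{*(k-j)}(\cdot|P_j)$, and since $P_j\sim\bar\nu_\theta\ll\mu$, (\ref{HH4}) gives $\|Q_\theta^{*(k-j)}(\cdot|P_j)-h_\theta\|_{L^1(\mu)}\to0$ $\bar{\mathbb P}_\theta$-almost surely; propagating the fixed number $\ell-k$ of extra forward steps and marginalizing, the conditional law of $Y_k^\ell$ given $P_j$ converges in total variation to its stationary law. To pass to the realized value, let $g_j(u):=\bar p_\theta(Y_k^\ell=u\,|\,P_j)$, so that $G_j(X_j)=g_j(Y_k^\ell)$; then
\[
\bar{\mathbb E}_\theta\Big[\,\big|1-\bar p_\theta(Y_k^\ell)/G_j(X_j)\big|\ \Big|\ P_j\,\Big]=\int\big|g_j(u)-\bar p_\theta(u)\big|\,d\mathfrak m_K^{\otimes(\ell-k+1)}(u)\longrightarrow0
\]
almost surely, and since the left-hand integrand is bounded by $2$, dominated convergence gives $\bar p_\theta(Y_k^\ell)/G_j(X_j)\to1$ in $\bar{\mathbb P}_\theta$-probability. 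Combining (a) and (b), $\bar p_\theta(Y_k^\ell\,|\,Y_{i-s+1}^j)/\bar p_\theta(Y_k^\ell)\to1$ in probability, uniformly in $i\le j$; multiplying by the fixed finite random variable $\bar p_\theta(Y_k^\ell)$ yields $\sup_{i\le j}|\bar p_\theta(Y_k^\ell|Y_{i-s+1}^j)-\bar p_\theta(Y_k^\ell)|\to0$ in $\bar{\mathbb P}_\theta$-probability, hence in $\bar{\mathbb P}_{\theta^*}$-probability.

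The main obstacle is part (b): converting the $L^1(\mu)$/total-variation convergence supplied by (\ref{HH4}) into convergence in probability of the density ratio evaluated at the random observation $Y_k^\ell$. The conditional-expectation identity above is exactly what makes this work, and it explains both why the conclusion is only "in probability" (not almost sure) and why it is cleanest to argue under $\bar{\mathbb P}_\theta$, where $P_j$ is genuinely stationary so that the $\mu$-null exceptional set in (\ref{HH4}) is avoided almost surely, before transferring to $\bar{\mathbb P}_{\theta^*}$. A secondary technical point is the forgetting estimate in (a): because the observations in the gap $(j,k)$ are integrated out rather than conditioned upon, Corollary \ref{coro} does not apply verbatim and one must redo its coupling argument for the prediction filter, the only input being the minorization $p_{1,-}>0$ from (\ref{HH1}).
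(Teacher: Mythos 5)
Your overall skeleton (condition on an intermediate hidden state, combine (\ref{HH4}) with dominated convergence, transfer from $\bar{\mathbb P}_\theta$ to $\bar{\mathbb P}_{\theta^*}$ by absolute continuity) is in the spirit of the paper's proof, but your part (a) contains a genuine gap that breaks the argument. The forgetting estimate $\sup_{x,x'\in E}|\log G_j(x)-\log G_j(x')|\le\eta_j\to 0$ is not available under the paper's hypotheses. The Doeblin minorization supplied by (\ref{HH1}) -- the bound (\ref{toto}) behind Corollary \ref{coro} and (\ref{eq1})--(\ref{eq2}) -- applies to the chain $(X_m)_m$ \emph{conditioned on the observations}, whereas in $G_j$ the intermediate observations $Y_{j+1}^{k-1}$ are integrated out. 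For the prediction problem the relevant object is the joint chain $(X_m,Y_{m-s+1}^m)_m$, whose kernel $q_\theta(x',y'|x,y)=p_{1,\theta}(x'|x,y)\,p_{2,\theta}(y'|x',y)$ admits no minorization: $p_{2,\theta}$ is not bounded below and retains the autoregressive memory of $y$. Concretely, a coupling of two predicted trajectories started from $(x_j,Y_{j-s+1}^j)$ and $(x_j',Y_{j-s+1}^j)$ can merge only at the \emph{first} step (where the $p_{1,\theta}$-densities overlap with mass at least $p_{1,-}/p_{1,+}$); if that step fails, the two $Y$-paths separate and never re-merge, since all subsequent $p_{2,\theta}$-densities are evaluated at different pasts. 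What one actually gets, uniformly in $j$, is only the constant bound $p_{1,-}/p_{1,+}\le G_j(x)/G_j(x')\le p_{1,+}/p_{1,-}$; any improvement as $j\to-\infty$ already requires the joint ergodicity (\ref{HH4}) itself (so (a) cannot be decoupled from (b)), and it is in no case geometric. Since your uniformity in $i\le j$ rests entirely on $\eta_j\to 0$, the proof does not go through as written.

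There is also a secondary (fixable) flaw in part (b): you invoke (\ref{HH4}) to get $\|Q_\theta^{*(k-j)}(\cdot|P_j)-h_\theta\|_{L^1(\mu)}\to 0$ almost surely, but (\ref{HH4}) concerns a \emph{fixed} ($\mu$-a.e.) starting point as the number of steps tends to infinity, whereas in your indexing both the step number and the random starting point $P_j$ move with $j$; at best, since $P_j\sim\bar\nu_\theta$ for each $j$, this quantity has the same law as $\|Q_\theta^{*(k-j)}(\cdot|P)-h_\theta\|_{L^1(\mu)}$ with $P\sim\bar\nu_\theta$ fixed, which yields convergence in probability only. The paper's proof avoids both difficulties at once by using stationarity in the opposite direction: it fixes the conditioning window $Y_{-i-s+1}^0$ and lets the predicted block $Y_k^{k+\ell}$ move to $+\infty$. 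Then the intermediate state $(x_s,y_1^s)$ is a fixed integration variable, (\ref{HH4}) applies pointwise (the term $B_k\to 0$ $\mu$-a.e., bounded), the supremum over $i$ is absorbed not by any forgetting of the filter but by the uniform domination $C_i\le p_{1,+}H$ with $\int H\,d\mu=1$, and the unbounded factor coming from the conditional density of the predicted block (the paper's $A_k\le\tilde A_k$, finite by (\ref{HH1}) and (\ref{HH3c})) is handled by tightness of a sequence that is stationary in law, rather than by taking ratios. Your argument could be repaired along the same lines: replace (a) by the analogous uniform-in-$i$ domination of your filter, namely $d\pi_{i,j}/d\mathfrak m_E(x)\le (p_{1,+}/p_{1,-})\,p_{2,\theta}(Y_j|x,Y_{j-s}^{j-1})\big/\int_E p_{2,\theta}(Y_j|u,Y_{j-s}^{j-1})\,d\mathfrak m_E(u)$, work with differences $|G_j(x)-\bar p_\theta(Y_k^\ell)|$ instead of log-ratios, and either re-index as the paper does or keep the equality-in-law argument to conclude in probability.
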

\begin{proof}
Let us write
$G(y_{-s}^0):=\int_Ep_{2,\theta}(y_0|x,y_{-s}^{-1})\, d\mathfrak m_E(x)$ and 
$\tilde G(y_0):=\sup_{y_{-s}^{-1}}G(y_{-s}^0)$.
As in the proof of \cite[Lemma 5]{DMR}, we observe that, by stationarity, it is enough to prove that
$$\forall \ell>0,\ \ \ \lim_{k\rightarrow +\infty}\sup_{i\ge 0}
     \left|\bar p_\theta(Y_{k}^{k+\ell}|Y_{-i-s+1}^{0})
      - \bar p_\theta(Y_{k}^{k+\ell})\right|=0\ \ in\ \bar {\mathbb P}_{\theta^*}-probability$$
and we write
$$\left|\bar p_\theta(Y_{k}^{k+\ell}|Y_{-i-s+1}^{0})
      - \bar p_\theta(Y_{k}^{k+\ell})\right|=\left |\int_{E^2\times K^{2s}}A_k(B'_k-B''_k)C_i
\,   d\mathfrak m_E^{\otimes 2}(x_s,x_{k-1})d\mathfrak m_K^{\otimes 2s}(y_1^s,y_{k-s}^{k-1})\right|,$$
with
$$A_k:= p_\theta(Y_k^{k+\ell}|X_{k-1}=x_{k-1},Y_{k-s}^{k-1}=y_{k-s}^{k-1}) \le \tilde A_k:= 
    p_{1,+}^{\ell+1}\prod_{j=k+s}^{k+\ell} G(Y_{j-s}^j)\prod_{j=k}^{k+s-1} \tilde G(Y_j),$$
(due to (\ref{facteq3}) and to (\ref{HH1})) with
$$B'_k:= p_\theta(X_{k-1}=x_{k-1},Y_{k-s}^{k-1}=y_{k-s}^{k-1}|X_s=x_s,Y_1^s=y_1^s)
    =Q_\theta^{*(k-s-1)}(x_{k-1},y_{k-s}^{k-1}|x_s,y_1^s) ,$$
with
$$B''_k:=\bar p_\theta(X_{k-1}=x_{k-1},Y_{k-s}^{k-1}=y_{k-s}^{k-1})=h_\theta(x_{k-1},y_{k-s}^{k-1}) $$
and with
$$C_i:=\bar{p}_\theta (X_{s}=x_{s},Y_{1}^{s}=y_{1}^{s}|Y_{-i-s+1}^0).$$
Let us write
$$B_k:=\int_{E\times K^s}|B'_k-B''_k|\, d\mu(x_{k-1},y_{k-s}^{k-1}).$$
We have
$$\left|\bar p_\theta(Y_{k}^{k+\ell}|Y_{i-s+1}^{0})
      - \bar p_\theta(Y_{k}^{k+\ell})\right|\le \tilde A_k\int_{E\times K^s}B_kC_i\, d\mu(x_s,y_{1}^s).$$
On the one hand, due to (\ref{HH4}), $B_k=B_k(x_s,y_1^s)$ converges to 0 as $k$ goes to infinity, for 
$\mu$-almost every $(x_s,y_1^s)$ (and this quantity is bounded by 1).
On the other hand, on $\{Y_{-i-s+1}^0=y_{-i-s+1}^0\}$, we have
\begin{eqnarray*}
C_i&=&\int_{E^s}\prod_{j=1}^s q_\theta(x_j,y_j|x_{j-1},y_{j-s}^{j-1})
\bar{p}_\theta(X_0=x_0|Y_{-i-s+1}^0=
       y_{-i-s+1}^0)\, d\mathfrak m_E^{\otimes s}(x_0^{s-1})\\
&\le& p_{1,+}H(x_s,y_{-s+1}^s),
\end{eqnarray*}
with 
$$H(x_s,y_{-s+1}^s):=\int_{E^{s-1}}\prod_{j=2}^s p_{1,\theta}(x_j|x_{j-1},y_{j-1})
     \prod_{j=1}^s p_{2,\theta}(y_j|x_{j},y_{j-s}^{j-1})\, d\mathfrak m_E^{\otimes s}(x_1^{s-1})$$
and 
$$\forall y_{-s+1}^0, \ \  \int_{E\times K^s} H(x_s,y_{-s+1}^s)\, d\mu(x_s,y_1^s)=1. $$
Therefore, by the Lebesgue dominated convergence theorem, we obtain
$$\lim_{k\rightarrow +\infty}\sup_{i\le 0}
  \int_{E\times K^s}B_kC_i\, d\mu(x_s,y_{1}^s)=0 \ \ \bar{\mathbb P}_{\theta^*}-a.s..$$
Of course, this convergence also holds in $\bar{\mathbb P}_{\theta^*}$-probability.
Now, since, for every $k$, $\tilde A_k$ is a real valued random variable (see (\ref{HH3c})) with the same distribution as
$p_{1,+}^{\ell+1}\prod_{j=s}^{\ell} G(Y_{j-s}^j)\prod_{j=0}^{s-1} \tilde G(Y_j)$, we obtain the result.
\end{proof}
\begin{lemma}\label{lem7}(\cite[Lem. 6 \& 7, Prop. 3]{DMR})
For every $\theta\in\Theta$, $\ell(\theta)\le\ell(\theta^*)$.
Furthermore
$$\ell(\theta)=\ell(\theta^*)\ \ \Rightarrow\ \ \bar{\mathbb P}_{\theta}^{Y}=
      \bar{\mathbb P}_{\theta^*}^{Y}. $$
\end{lemma}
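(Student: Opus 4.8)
The plan is to interpret $\ell(\theta^*)-\ell(\theta)$ as a relative entropy rate and to show it is both nonnegative and, when it vanishes, strong enough to pin down the law of the observations.

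First I would record that, by the construction of the preceding subsection, $\ell(\theta)=\bar{\mathbb E}_{\theta^*}[\Delta_{0,\infty}(\theta)]$, where $\Delta_{0,\infty}(\theta)=\log\bar p_\theta(Y_0\,|\,Y_{-\infty}^{-1})$ is the a.s.\ and $\mathbb L^1(\bar{\mathbb P}_{\theta^*})$ limit of the $\Delta_{0,m,x}(\theta)$. The one point needing care is that for $\theta=\theta^*$ this limit is the genuine conditional density of $Y_0$ given $Y_{-\infty}^{-1}$ under $\bar{\mathbb P}_{\theta^*}$: the dependence on the frozen state $X_{-m}$ is washed out by the geometric forgetting of Corollary~\ref{coro}, while the enlargement of the conditioning $\sigma$-field is handled by L\'evy's martingale convergence theorem. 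I would then write
$$\ell(\theta^*)-\ell(\theta)=\bar{\mathbb E}_{\theta^*}\left[\log\frac{\bar p_{\theta^*}(Y_0\,|\,Y_{-\infty}^{-1})}{\bar p_\theta(Y_0\,|\,Y_{-\infty}^{-1})}\right],$$
the splitting being legitimate because both terms are integrable thanks to \eqref{eq3}, \eqref{HH3} and \eqref{HH3b}.

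Next I would condition on $\mathcal G:=\sigma(Y_{-\infty}^{-1})$ and observe that, since $y\mapsto\bar p_\theta(y\,|\,Y_{-\infty}^{-1})$ is a probability density with respect to $\mathfrak m_K$ (each prelimit arises from an honest conditional density, and this property passes to the limit), the inner conditional expectation is exactly the Kullback--Leibler divergence $\mathrm{KL}\big(\bar p_{\theta^*}(\cdot\,|\,Y_{-\infty}^{-1})\,\big\|\,\bar p_\theta(\cdot\,|\,Y_{-\infty}^{-1})\big)$, which is nonnegative by Gibbs' inequality. Taking expectations yields $\ell(\theta)\le\ell(\theta^*)$, the first assertion. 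For the equality case, $\ell(\theta)=\ell(\theta^*)$ forces this nonnegative $\mathcal G$-measurable divergence to have zero mean, hence to vanish $\bar{\mathbb P}_{\theta^*}$-a.s.; the equality case of Gibbs' inequality then gives $\bar p_\theta(\cdot\,|\,Y_{-\infty}^{-1})=\bar p_{\theta^*}(\cdot\,|\,Y_{-\infty}^{-1})$ $\bar{\mathbb P}_{\theta^*}$-a.s. By stationarity this holds at every time, and multiplying one-step predictive densities (chain rule) shows that the conditional law of an arbitrary block $Y_1^\ell$ given the whole past $Y_{-\infty}^0$ is the same under $\theta$ and $\theta^*$. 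It then remains to remove the conditioning and conclude that the marginal laws of $Y_1^\ell$ coincide, i.e.\ that $\bar{\mathbb P}_\theta^Y=\bar{\mathbb P}_{\theta^*}^Y$; this is where Lemma~\ref{lem5} enters, since it expresses the unconditional density as the $\bar{\mathbb P}_{\theta^*}$-probability limit of densities conditioned on a block receding into the remote past.

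I expect this final passage to be the main obstacle. The difficulty is that the marginal density is obtained by averaging the predictive kernel against the law of the past, and a priori that law differs between the two models, so knowing only that the kernels agree along $\bar{\mathbb P}_{\theta^*}$-typical pasts does not transfer immediately to $\bar{\mathbb P}_\theta$-typical ones. The role of the forgetting property \eqref{HH4} and Lemma~\ref{lem5} is precisely to break this apparent circularity: the geometric loss of memory makes the stationary law the unique one compatible with the common predictive kernels, so that $\bar{\mathbb P}_\theta^Y$ and $\bar{\mathbb P}_{\theta^*}^Y$, both stationary and sharing these kernels, must coincide. Carrying this out carefully --- controlling the remote-past limit uniformly and in $\mathbb L^1(\bar{\mathbb P}_{\theta^*})$ so that the block marginals can actually be compared --- is the technical heart of the argument, and is exactly the content imported from \cite[Lem.~6 \& 7, Prop.~3]{DMR}.
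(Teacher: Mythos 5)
Your overall route coincides with the paper's (which itself defers to \cite{DMR}): interpret $\ell(\theta^*)-\ell(\theta)$ as a relative entropy rate, obtain the inequality from nonnegativity of a conditional Kullback--Leibler divergence, and, in the equality case, combine the a.s.\ agreement of the infinite-past predictive kernels with Lemma~\ref{lem5} to de-condition and compare block marginals. The inequality half of your sketch is sound, including the identification of $\Delta_{0,\infty}(\theta^*)$ with the true one-step predictive density via L\'evy's martingale convergence, and the claim that $\exp\Delta_{0,\infty}(\theta)$ remains a probability density in $y$ (the Cauchy estimates behind Lemma~\ref{lemme3} are multiplicative and uniform, so $\int(\cdot)\,d\mathfrak m_K=1$ does pass to the limit).

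The gap is in your final step, where you assert that the de-conditioning argument ``is exactly the content imported from \cite[Lem.~6 \& 7, Prop.~3]{DMR}''. It cannot be imported exactly, and the paper's entire written proof of this lemma consists precisely of the patch you skip. The missing ingredient is DMR's Lemma~7, i.e.\ the uniform integrability bound
$$\bar{\mathbb E}_{\theta^*}\left[\sup_{k}\sup_{m\ge k}\left|\log\left(\bar p_\theta(Y_1^p|Y_{-s+1}^0,Y_{-m-s+1}^{-k})\right)\right|\right]<\infty,$$
which is what converts the convergence in $\bar{\mathbb P}_{\theta^*}$-probability supplied by Lemma~\ref{lem5} into convergence of the expectations defining the block Kullback--Leibler divergences. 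In \cite{DMR} this bound is proved using $\sup_{\theta,x,y_{-s}^{-1},y_0}p_{2,\theta}(y_0|x,y_{-s}^{-1})<\infty$, an assumption this paper deliberately drops --- that is the whole point of the weaker hypotheses (\ref{HH3b})--(\ref{HH3c}), needed e.g.\ for the Gamma emission densities of Section~\ref{sec:real}, which are unbounded near $0$. The paper's substitute is to use (\ref{HH1}) and Fact~\ref{CMinh} to sandwich $\bar p_\theta(Y_1^p|Y_{-s+1}^0,Y_{-m-s+1}^{-k})$ between $\frac{p_{1,-}^{p+s}}{p_{1,+}^{s}}\prod_{j=1}^pG(Y_{j-s}^j)$ and $\frac{p_{1,+}^{p+s}}{p_{1,-}^{s}}\prod_{j=1}^pG(Y_{j-s}^j)$, where $G(y_{-s}^0):=\int_E p_{2,\theta}(y_0|x,y_{-s}^{-1})\,d\mathfrak m_E(x)$, and then to invoke (\ref{HH3}) and (\ref{HH3b}). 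Without this replacement for DMR's boundedness argument, the step you yourself flag as ``the technical heart'' fails under the hypotheses of Theorem~\ref{consistancegene}.
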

\begin{proof}[Elements of the proof]
We do not rewrite the proof of this lemma, the reader can follow the proofs of
\cite[Lem. 6-7, Prop. 3]{DMR} (using Lemma \ref{lem5} and Kullback-Leibler divergence functions). 
The only adaptations to make concern the proof of \cite[Lem. 7]{DMR} which, 
due to our slightly weaker hypothesis (\ref{HH3c}),
are the following facts. Following the proof of Lemma \ref{lem5}, 
observe that, due to (\ref{HH1}), (\ref{facteq3}) and (\ref{facteq2}), 
on $\{Y_{-s+1}^p=y_{-s+1}^p,\ Y_{-m-s+1}^{-k}=y_{-m-s+1}^{-k}\}$, 
$\bar p_\theta(Y_{-s+1}^p|Y_{-m-s+1}^{-k})$ is between
$$p_{1,-}^{p+s}
   \int_{E\times K^{s}}\prod_{j=-s+1}^pG(y_{j-s}^j)\bar p_\theta(X_{-s}=x_{-s},Y_{-2s+1}^{-s}=
     y_{-2s+1}^{-s}|Y_{-m-s+1}^{-k})\, d\mu(x_{-s},y_{-2s+1}^{-s})$$
and
$$p_{1,+}^{p+s}\int_{E\times K^{s}}\prod_{j=-s+1}^pG(y_{j-s}^j)\bar p_\theta(X_{-s}=x_{-s},Y_{-2s+1}^{-s}=
     y_{-2s+1}^{-s}|Y_{-m-s+1}^{-k})\, d\mu(x_{-s},y_{-2s+1}^{-s}),$$
with $G(y_{-s}^0):=\int_Ep_{2,\theta}(y_0|x,y_{-s}^{-1})\, d\mathfrak m_E(x)$.
Therefore we have
$$\frac{p_{1,-}^{p+s}}{p_{1,+}^s}\prod_{j=1}^pG(Y_{j-s}^j)\le 
        \bar p_\theta(Y_1^p|Y_{-s+1}^0,Y_{-m-s+1}^{-k})=
    \frac{\bar p_\theta(Y_{-s+1}^p|Y_{-m-s+1}^{-k})}{\bar p_\theta(Y_{-s+1}^0|Y_{-m-s+1}^{-k})}
   \le \frac{p_{1,+}^{p+s}}{p_{1,-}^s}\prod_{j=1}^pG(Y_{j-s}^j) .$$
Due to (\ref{HH3}) and (\ref{HH3b}), we obtain
$$\bar{\mathbb E}_{\theta^*}\left[
 \sup_{k}\sup_{m\ge k}|\log(\bar p_\theta(Y_1^p|Y_{-s+1}^0,Y_{-m-s+1}^{-k}))|\right]<\infty,$$
which enables the adaptation of the proof of \cite[Lem. 7]{DMR}.
\end{proof}
\begin{proof}[Proof of Theorem \ref{consistancegene}]
Let $x_0\in E$.
We know that, $\bar{\mathbb P}_{\theta^*}$-almost surely, $(n^{-1}\ell_n(\cdot,x_0))_n$ converges uniformly to $\ell$ 
which admits a maximum $\ell(\theta^*)$. Since $\ell_n(\cdot,x_0)$ is continuous on $\Theta$ and since $\Theta$ is compact, 
$\hat\theta_{n,x_0}$ is well defined. Moreover, the limit values of $(\hat\theta_{n,x_0})_n$ are contained in 
$$\{\theta\in\Theta\ :\ \ell(\theta)=\ell(\theta^*)\}\subseteq\{\theta\in\Theta\ :\ 
           \bar{\mathbb P}_{\theta}^{Y}= \bar{\mathbb P}_{\theta^*}^{Y}\}.$$
Assume now that $Q_{\theta^*}$ 
is aperiodic and positive Harris recurrent, following the proof of \cite[Thm. 5]{DMR},
we have $\lim_{n\rightarrow +\infty}\ell(\hat\theta_{n,x_0})=\ell(\theta^*)$ almost surely for any initial measure
and we conclude as above.
\end{proof}

\section{Identifiability for the Gaussian model: proof of Proposition \ref{identifiabilityAR}}
\label{appendAR}
Assume that $\bar {\mathbb P}_{\theta_1}^Y=\bar {\mathbb P}_{\theta_2}^Y$. In particular, we have 
$$\bar {p}_{\theta_1}(Y_k=y_k|Y_{k-s}^{k-1}=y_{k-s}^{k-1})=\bar {p}_{\theta_2}(Y_k=y_k|Y_{k-s}^{k-1}=y_{k-s}^{k-1}),
\mbox{ for } \ \bar {\mathbb P}_{\theta_1}^{Y_{k-s}^k}-a.e.\ y_{k-s}^k$$
and thus 
$$\sum_{x =1}^ 2 \bar {\mathbb P}_{\theta_1}(X_k=x|y_{k-s}^{k-1}) p_{2,\theta_1}(y_k|x,y_{k-s}^{k-1})
 =\sum_{x =1}^ 2 \bar { \mathbb  P}_{\theta_2}(X_k=x|y_{k-s}^{k-1}) p_{2,\theta_2}(y_k|x,y_{k-s}^{k-1}),$$
for $\bar {\mathbb P}_{\theta_1}^{Y_{k-s}^k}$-almost every $y_{k-s}^k$. 
Since $\bar { p}_{\theta_1}(y_{k-s}^{k}) >0$  (the invariant pdf $h_1$ satisfies $h_1>0$  and the transition pdf $q_\theta$ satisfies 
$q_\theta>0$ by construction), this last equality also holds for Lebesgue almost every $y_{k-s}^k$.
According to  \cite{Teich63}, finite mixtures of Gaussian distribution are identifiable. 
Due to \eqref{eq:AR1}, this implies in particular that if 
$$\sum_{x=1}^2 \pi_x^{(1)} \mathcal N(y;a^{(1)}_x,\sigma^{(1)}_x) = \sum_{x=1}^M \pi_x^{(2)} \mathcal N(y;a^{(2)}_x,\sigma^{(2)}_x)\mbox{ for }- a.e.\  y$$
with  $(a^{(1)}_1,\sigma^{(1)}_1)\neq (a^{(1)}_2,\sigma^{(1)}_2)$, $\pi_1^{(1)}>0$ and $\pi_2^{(1)}>0$, then there exists 
a permutation $\tau : \{1,2\} \rightarrow \{1,2\}$ such that 
$(a^{(1)}_x,\sigma^{(1)}_x)=(a^{(2)}_{\tau(x)},\sigma^{(2)}_{\tau(x)})$ and $\pi_1^{(x)}=\pi_2^{(\tau(x))}$.
Therefore, since for every $x \in \{1,2\}$ and for Lebesgue almost every $y_{k-s}^{k-1}$, 
$\bar {\mathbb P}_{\theta_1}(X_k=x|y_{k-s}^{k-1})>0$  (since $h_{\theta_1}>0$), 
for Lebesgue almost every $y_{k-s}^{k-1}$
there exists a permutation $\tau_y=\tau_{y_{k-s}^{k-1}}$ of $\{1,...,M\}$ such that,
$$\forall x\in\{1,2\},\ \ 
\left(\beta_{0,(1)}^{(x)}+\sum_{\ell=1}^s\beta_{\ell,(1)}^{(x)}y_{k-\ell},\sigma_{(1)}^{(x)}\right)=
\left(\beta_{0,(2)}^{(\tau_{y}(x))}+\sum_{\ell=1}^s\beta_{\ell,(2)}^{(\tau_{y}(x))}y_{k-\ell},\sigma_{(2)}
^{(\tau_{y}(x))}\right).$$
Recall that we have assumed (for the first model)
$$
\left(\beta_{0,(1)}^{(1)},\beta_{1,{(1)}}^{(1)},... ,\beta_{s,{(1)}}^{(1)}, \sigma^{(1)}_{(1)} \right) \ne \left(\beta_{0,(1)}^{(2)},\beta_{1,{(1)}}^{(2)},... ,\beta_{s,{(1)}}^{(2)}, \sigma^{(2)}_{(1)} \right)
$$

which implies
$$\left(\beta_{0,(1)}^{(1)}+\sum_{\ell=1}^s\beta_{\ell,(1)}^{(1)}y_{k-\ell},\sigma_{(1)}^{(1)}\right)\ne 
\left(\beta_{0,(1)}^{(2)}+\sum_{\ell=1}^s\beta_{\ell,(1)}^{(2)}y_{k-\ell},\sigma_{(1)}^{(2)}\right),$$
for Lebesgue almost every $y_{k-s}^{k-1}$. 
Since the set of permutations of $\{1,...,M\}$ is finite, there exists a positive Lebesgue
measure subset of $\mathbb T^s$ on which
the permutation is the same permutation $\tau$.
From this, we deduce that, for all  $x\in \{1,2\}$ and $y \in \mathbb R$, 
$$\left(\beta_{0,(1)}^{(x)},\beta_{1,(1)}^{(x)},... ,\beta_{r,(1)}^{(x)}, \sigma^{(x)}_{(1)} \right)=\left(\beta_{0,(2)}^{(\tau(x))},\beta_{1,(2)}^{(\tau(x))},... ,\beta_{r,(2)}^{(\tau(x))}, \sigma^{(\tau(x))}_{(2)} \right)$$ and
\begin{eqnarray}
\label{eq:identp1}
p_{1,\theta_{1}} (x|x,y)
&=&\pi_{-,(1)}^{(x)} +\frac{1-\pi_{-,(1)}^{(x)}-\pi_{+,(1)}^{(x)}}{1+exp\left(\lambda_{0,(1)}^{(x)}+\lambda_{1,(1)}^{(x)} y\right)}\\
&=&\pi_{-,(2)}^{(\tau(x))} +\frac{1-\pi_{-,(2)}^{(\tau(x))}-\pi_{+,(2)}^{(\tau(x))}}{1+exp\left(\lambda_{0,(2)}^{(\tau(x))}+\lambda_{1,(2)}^{(\tau(x))} y\right)} 
= p_{1,\theta_2}(\tau(x)|\tau(x),y).\nonumber
\end{eqnarray}
If $\theta_1$ and $\theta_2$ are in $\Theta'$ then  $\lambda_{1,(i)}^{(x)} \neq 0$ for $i \in \{1,2\}$ and looking at the asymptotic behavior of the terms which appear in (\ref{eq:identp1}) when $y \rightarrow \pm \infty$ permits to show that $\pi_{-,(1)}^{(x)}=\pi_{-,(2)}^{(x)}$, $\pi_{+,(1)}^{(x)}=\pi_{+,(2)}^{(x)}$. We can then easily deduce that $\lambda_{0,(1)}^{(x)}=\lambda_{0,(2)}^{(x)}$ and $\lambda_{1,(1)}^{(x)}=\lambda_{1,(2)}^{(x)}$ and thus that $\theta_1=\theta_2$.

If $\theta_1$ and $\theta_2$ are in $\Theta''$, then we directly obtain that 
$\pi_{-,(1)}^{(x)} = \pi_{-,(2)}^{(x)}=\pi_{+,(1)}^{(x)} =\pi_{-,(1)}^{(x)}=\pi_0$
and then that $\theta_1=\theta_2$.\qed
\section{Identifiability for the Rainfall model: proof of Proposition \ref{identrain}}\label{proofidentrain}
Assume that $\bar{\mathbb P}_{\theta_1}^Y=\bar{\mathbb P}_{\theta_2}^Y$.
First, we use the fact that
\begin{equation}\label{EE1}
\bar p_{\theta_1}(Y_k=y_k|Y_{k-s}^{k-1}=y_{k-s}^{k-1})=
\bar p_{\theta_2}(Y_k=y_k|Y_{k-s}^{k-1}=y_{k-s}^{k-1})\ \ \mbox{for}\ \ 
\bar{\mathbb P}_{\theta_1}^{Y_{k-s}^k}-a.e.\ y_{k-s}^k
\end{equation}
to prove that 
$$(\pi_{i,(1)}^{(x)},\alpha_{i,(1)}^{(x)},\beta_{i,(1)}^{(x)})_
{i,x}=(\pi_{i,(2)}^{(x)},\alpha_{i,(2)}^{(x)},\beta_{i,(2)}^{(x)})_{i,x}.$$
Using (\ref{EE1}) on the set $\{r_k^{(i)}>0,\ \forall i\in\{1,...,\ell\}\}$,
we conclude that there exists a permutation $\tau$ of $\{1,...,M\}$
such that, for every $i\in\{1,...,\ell\}$ and every $x\in\{1,...,M\}$,
we have
\begin{equation}\label{EE1b}
(\alpha_{i,(1)}^{(x)},\beta_{i,(1)}^{(x)})=(\alpha_{i,(2)}^{(\tau(x))},
\beta_{i,(2)}^{(\tau(x))})
\end{equation}
and
$$\bar{\mathbb P}_{\theta_1}(X_k=x|Y_{k-s}^{k-1}=y_{k-s}^{k-1})\prod_{i=1}^\ell
   \pi_{i,(1)}^{(x)}
=\bar{\mathbb P}_{\theta_2}(X_k=x|Y_{k-s}^{k-1}=y_{k-s}^{k-1})\prod_{i=1}^\ell
   \pi_{i,(2)}^{(\tau(x))}.$$
Now, for every $J\subseteq\{1,...,\ell\}$, we use 
(\ref{EE1}) on the set $\{r_k^{(j)}>0,\ \forall j\in J,\ r_k^{(i)}=0,\
  \forall i\not\in J\}$. Due to (\ref{EE1b}) and since $\theta_1$
satisfies (\ref{condiident}), we obtain
$$\bar{\mathbb P}_{\theta_1}(X_k=x|Y_{k-s}^{k-1}=y_{k-s}^{k-1})\prod_{j\in J}
   \pi_{j,(1)}^{(x)}\prod_{i\not\in J}(1-\pi_{i,(1)}^{(x)})
=\bar{\mathbb P}_{\theta_2}(X_k=x|Y_{k-s}^{k-1}=y_{k-s}^{k-1})\prod_{j\in J}
   \pi_{j,(2)}^{(\tau(x))}\prod_{i\not\in J}(1-\pi_{i,(2)}^{(\tau(x))}).$$
From which, we conclude
\begin{equation}\label{EE1c}
\forall i\in\{1,...,\ell\},\ \forall x\in\{1,...,M\},\ \ \pi_{i,(1)}^{(x)}
=\pi_{i,(2)}^{(\tau(x))}.
\end{equation}
Now it remains to prove that $(q_{x,x',(1)},\mu_{x,x',(1)})=(q_{\tau(x),\tau(x'),(2)},\mu_{\tau(x),\tau(x'),(2)})$. To this hand, as for the AR model (see Appendix \ref{appendAR}), we use the fact that
\begin{equation}\label{EE2}
\bar p_{\theta_1}(Y_k=y_k,Y_{k+1}=y_{k+1}|Y_{k-s}^{k-1}=y_{k-s}^{k-1})=
\bar p_{\theta_2}(Y_k=y_k,Y_{k+1}=y_{k+1}|Y_{k-s}^{k-1}=y_{k-s}^{k-1})\ \ \mbox{for}\ \ 
\bar{\mathbb P}_{\theta_1}^{Y_{k-s}^k}-a.e.\ y_{k-s}^k
\end{equation}
and obtain that
$$\forall x,x',\ \ p_{1,\theta_1}(x'|x,y_k)=p_{1,\theta_2}(\tau(x')|\tau(x),y_k)\ \mbox{for}\ a.e. y_k .$$
This implies that
\begin{equation}\label{EE2b}
\frac{\tilde q_{x,x',(1)}\exp(-z_{k-1}'\tilde\mu_{x,x',(1)})}
  {\sum_{x"}\tilde q_{x,x",(1)}\exp(-z_{k-1}'\tilde\mu_{x,x',(1)})}=\frac{\tilde q_{\tau(x),\tau(x'),(2)}\exp(-z_{k-1}'\tilde\mu_{\tau(x),\tau(x'),(2)})}
  {\sum_{x"}\tilde q_{x,\tau(x"),(2)}\exp(-z_{k-1}'\tilde\mu_{x,\tau(x"),(2)})},
\end{equation}
with
$\tilde q_{x,x',(j)}:=q_{x,x',(j)}\exp(-\frac 12(\mu_{x,x',(j)})'\Sigma^{-1}
  \mu_{x,x',(j)})$ and $\tilde \mu_{x,x',(j)}:=\Sigma^{-1}\mu_{x,x',(j)}$.
From (\ref{EE2b}), we obtain that
$$
\frac{\tilde q_{x,x',(1)}\exp(- z_{k-1}'\tilde\mu_{x,x',(1)})}
  {\tilde q_{x,x,(1)}\exp(- z_{k-1}'\tilde\mu_{x,x,(1)})}=\frac{\tilde q_{\tau(x),\tau(x'),(2)}\exp(-z_{k-1}'\tilde\mu_{\tau(x),\tau(x'),(2)})}
  {\tilde q_{\tau(x),\tau(x),(2)}\exp(-z_{k-1}'\tilde\mu_{\tau(x),\tau(x),(2)})},
$$
and so that, for every $x,x'\in\{1,...,M\}$, 
\begin{equation}\label{EE2c}
\tilde\mu_{x,x',(1)}-\tilde\mu_{x,x,(1)}=\tilde\mu_{\tau(x),\tau(x'),(2)}-\tilde\mu_{\tau(x),\tau(x),(2)}
\end{equation}
and
\begin{equation}\label{EE2d}
\frac{\tilde q_{x,x',(1)}}{\tilde q_{x,x,(1)}}=
\frac{\tilde q_{\tau(x),\tau(x'),(2)}}{\tilde q_{\tau(x),\tau(x),(2)}}.
\end{equation}
Finally, it comes from (\ref{EE2c}) that $\tilde\mu_{x,x',(1)}=\tilde\mu_{\tau(x),\tau(x'),(2)}$
(using $\sum_{x"}\tilde\mu_{x,x",(j)}=0$) and so $\mu_{x,x',(1)}=\mu_{\tau(x),\tau(x'),(2)}$.
So (\ref{EE2d}) becomes
$$\frac{ q_{x,x',(1)}}{ q_{x,x,(1)}}=
\frac{ q_{\tau(x),\tau(x'),(2)}}{ q_{\tau(x),\tau(x),(2)}}$$
which implies that $q_{x,x',(1)}= q_{\tau(x),\tau(x'),(2)}$
(due to $\sum_{x"}q_{x,x",(j)}=1$).\qed
\section{Proof of Lemma \ref{propHH4}}\label{proofrain}
Let $f$ be any probability pdf wrt $\mu=\mathfrak m_E\times \mathfrak m_K$. We have
\begin{eqnarray*}
 [Q_\theta^{*n}(f-h_\theta)](x_0,y_0)&=&
     \int_{(E\times K)^{n}}\prod_{i=-n+1}^0q_\theta(x_i,y_i|x_{i-1})(f-h_\theta)(x_{-n},y_{-n})
       \, d\mathfrak m_E^{\otimes n}(x_{-n}^{-1})d\mathfrak m_K^{\otimes n}(y_{-n}^{-1})\\
  &=&     \int_{E^n\times K^{n-1}}\prod_{i=-n+1}^0q_\theta(x_i,y_i|x_{i-1})(F-h_{1,\theta})(x_{-n})
       \, d\mathfrak m_E^{\otimes n}(x_{-n}^{-1})d\mathfrak m_K^{\otimes (n-1)}(y_{-n+1}^{-1})\\
\end{eqnarray*}
with $F(x_{-n}):=\int_Kf(x_{-n},y_{-n})\, d\mathfrak m_K(y_{-n})$.
Now, since $q_\theta(x_i,y_i|x_{i-1})=p_{1,\theta}(x_i|x_{i-1})p_{2,\theta}(y_i|x_i)$, we obtain that
$$ [Q_\theta^{*n}(f-h_{1,\theta})](x_0,y_0)=
   p_{2,\theta}(y_0|x_0) \int_{E^n}\prod_{i=-n+1}^0p_{1,\theta}(x_i|x_{i-1})(F-h_{1,\theta})(x_{-n})
       \, d\mathfrak m_E^{\otimes n}(x_{-n}^{-1}).$$
Therefore
$$||Q_\theta^{*n}(f-h_\theta) ||_{L^1(\mathfrak m_E\times \mathfrak m_K)}= ||Q_{1,\theta}^{*n}(F-h_{1,\theta}) ||_{L^1(\mathfrak m_E)} .$$

Now, let us assume that $p_{2,\theta}>0$ and that $(X_k)_k$ is an aperiodic positive Harris recurrent Markov chain.
We will use the notations of \cite{MeynTweedie}.

Since $(X_k)_k$ is positive, it is $\psi$-irreducible (with $\psi=\psi_0$). Due to the hypothesis on $p_{2,\theta}$,
this implies the $\psi$-irreducibility of $(X_k,Y_k)_k$ (with $\psi=\psi_0\times \mathfrak m_K$).

Moreover $(X_k,Y_k)_k$ is positive since it admits an invariant probability measure (due to the first point of this result).

The fact that $(X_k)_k$ is aperiodic means that, for every $\nu_M$-small set $C$ such that $\nu_M(C)>0$ for $(X_k)_k$, 
the greatest common divisor of the set $E_C$ defined as follows is equal to 1:
$$E_C:=\{n\ge 1\ :\ C\mbox{ is }\nu_n-\mbox{small with }\nu_n=\delta_n\nu_M\mbox{ and }\delta_n>0\}.$$

Now, let $C'$ be a $\nu'_M$-small set for $(X_k,Y_k)_k$ with $\nu'_M(C')>0$, then for every $(x_0,y_0)\in C'$
and every $(B,D)\in\mathcal B(E)\times \mathcal B(K)$, we have $Q_\theta^M\ind_{B\times D}(x_0,y_0)\ge \nu'_M(B\times D)$.
Moreover $Q_\theta^M\ind_{B\times D}(x_0,y_0)$ is equal to
$$\int_{E^{M-1}}\left(\int_B\prod_{i=1}^Mp_{1,\theta}(x_i|x_{i-1})\left(\int_D
         p_{2,\theta}(y_M|x_M)\, d\mathfrak m_K(y_M)\right)\, d\mathfrak m_E(x_M)\right)\, d\mathfrak m_E^{\otimes (M-1)}(x_1^{M-1}).$$
Since $Q_\theta^M\ind_{B\times D}(x_0,y_0)$ does not depend on $y_0$, we obtain
$$ \forall (x_0,y_0)\in E\times K,\ \forall B\in\mathcal B(E),\  \ 
    Q_{1,\theta}^M\ind_{B}(x_0)=Q_\theta^M\ind_{B\times K}(x_0,y_0)\ge \nu'_M(B\times K)$$
and so
$C:=\{x\in E\ :\ \exists y\in K,\ (x,y)\in C'\}$ is $\nu_M$-small with
$\nu_M(B)=\nu'_M(B\times K)$ and $\nu_M(C)\ge \nu'_M(C')>0$. Moreover $E_C=E_{C'}$.
Indeed, if $C'$ is $\nu'_n$-small with $\nu'_n=\delta'_n\nu'_M$, then
$C$ is $\nu_n$-small with $\nu_n(B)=\nu'_n(B\times K)=\delta_n\nu_M(B)$ with $\delta_n(x)=\int_K\delta'_n(x,y)\, d\mathfrak m_K(y)$;
and conversely, if $C$ is $\nu_n$-small with $\nu_n=\delta_n\nu_M$, then
$C'$ is $\nu'_n$-small with $\nu'_n(B\times D)=\delta'_n\nu'_M(B\times D)$ and with $\delta'_n(x,y)=\delta_n(x)p_{2,\theta}(y|x)$.
Therefore $(X_k,Y_k)_k$ is also aperiodic.

Finally, the Harris recurrence property of $(X_k,Y_k)_k$ follows from the Harris-recurrence of $(X_k)_k$
and from $p_{2,\theta}>0$.\qed

\end{appendix}

\bibliographystyle{plain}

\bibliography{bib2}

\end{document}